\documentclass[11pt]{article}

\usepackage[ linesnumbered, lined,boxed,commentsnumbered, ruled]{algorithm2e}
\SetAlFnt{\small}

\usepackage{amsmath, amsthm, amsfonts, amssymb, float}
\usepackage{algpseudocode, cite, graphicx, times, thmtools, thm-restate}
\usepackage[full]{complexity}
\usepackage[margin=1.5in]{geometry}

\usepackage[textwidth=1.25in, disable]{todonotes}

\usepackage{tocloft}
\usepackage{stmaryrd}

\usepackage[dvipsnames]{xcolor}

\SetCommentSty{mycommfont}

\usepackage{enumitem}
\setlist{listparindent=\parindent,parsep=0pt,itemsep=1em}
\setlist[itemize]{label=$-$,noitemsep}
\setlist[enumerate]{itemsep=1mm}
\setlist[description]{leftmargin=\parindent}

\usepackage[full]{complexity}
\usepackage{bbm}
\usepackage{mathpartir} 
\usepackage{tabularx}   
\usepackage{array}      
\colorlet{mylinkcolor}{violet}
\colorlet{mycitecolor}{violet} 
\colorlet{myurlcolor}{violet} 

\usepackage{tikz}

\usetikzlibrary{arrows.meta, decorations.pathmorphing, shapes.geometric}

\usepackage{hyperref}
\hypersetup{
	colorlinks=true,
	urlcolor=MidnightBlue,
	linkcolor=MidnightBlue,
	citecolor=MidnightBlue,
	unicode
}
\definecolor{betterYellow}{RGB}{255,255,0}
\usepackage[capitalise,nameinlink,noabbrev]{cleveref}
\crefname{equation}{}{}

\definecolor{mutedsalmon}{RGB}{210, 140, 120}

\definecolor{mutedgreenlight}{RGB}{160, 195, 160}

\definecolor{mutedgreen}{RGB}{100, 150, 100}

\declaretheorem[name=Theorem, numberwithin=section]{theorem}
\declaretheorem[name=Corollary, sibling=theorem]{corollary}
\declaretheorem[name=Proposition, sibling=theorem]{proposition}
\declaretheorem[name=Lemma, sibling=theorem]{lemma}

\declaretheorem[name=Problem, sibling=theorem]{problem}

\theoremstyle{definition}
\declaretheorem[name=Definition, sibling=theorem]{definition}

\theoremstyle{remark}

\declaretheorem[name=Claim, sibling=theorem]{claim}

\usepackage{stmaryrd}

\newcommand{\set}[1]{\{ #1 \}}

\newcommand{\coloneq}{:=}

\newcommand{\NN}{\ensuremath{\mathbb{N}}}

\newcommand{\bits}{\ensuremath{\set{0,1}}}
\renewcommand{\epsilon}{\varepsilon}

\newcommand{\dtR}{R^{dt}}
\newcommand{\dtS}{S^{dt}}
\newcommand{\Resolution}{\ensuremath{\mathrm{Resolution}}}
\newcommand{\Reduction}{\ensuremath{\mathrm{Reduction}}}
\newcommand{\ImpProof}{\ensuremath{\mathrm{ImpProof}}}
\newcommand{\Proof}{\ensuremath{\mathrm{Proof}}}
\newcommand{\unsat}{\ensuremath{\mathrm{UNSAT}}}
\newcommand{\search}{\ensuremath{\mathrm{Search}}}

\newcommand{\Eval}{\ensuremath{\mathrm{Eval}}}

\newcommand{\isCkt}{\ensuremath{\mathrm{isCkt}}}

\newcommand{\WrongProof}{\ensuremath{\mathrm{WrongProof}}}
\newcommand{\refl}{\ensuremath{\mathrm{Refl}}}

\newcommand{\newprob}[2]{\newcommand{#1}{{\text{\upshape\scshape #2}}\xspace}}
\newprob{\sod}{SoD}
\newprob{\sodlong}{Sink-of-Dag}
\newprob{\iter}{Iter}
\newprob{\iterlong}{Iteration}

\renewcommand{\include}{\input}

\newclass{\VPV}{VPV}
\newclass{\V}{V}
\newclass{\TV}{TV}
\newclass{\T}{T}
\newclass{\PV}{PV}
\newlang{\isod}{iSoD}
\newclass{\Gpf}{G}
\newlang{\Rfn}{Rfn}
\newclass{\EF}{EF}
\newlang{\Iter}{Iter}
\newcommand{\Res}{\ensuremath{\mathrm{Res}}}

\newcommand\calcturnstile[1]{%
  \ifcase#1\let\tsstyle\textstyle\let\tsvstyle\textstyle\let\tsfont\textfont\or
    \let\tsstyle\textstyle\let\tsvstyle\textstyle\let\tsfont\textfont\or
    \let\tsstyle\scriptstyle\let\tsvstyle\textstyle\let\tsfont\scriptfont\or
    \let\tsstyle\scriptscriptstyle\let\tsvstyle\scriptstyle\let\tsfont\scriptscriptfont\fi}

\newcommand\turnstile[3][2]{%
  \mathrel{\calcturnstile{#1}%
  \setbox0=\hbox{$\tsvstyle\vdash$}%
  \setbox2=\hbox{$\vcenter{\copy0}$}%
  \hbox{\vrule\vphantom{$\tsvstyle\vdash$}}%
  \raise\dimexpr\ht0-\ht2\relax\hbox{$\vcenter{\offinterlineskip
   \ialign{\hfil\kern1pt$\tsstyle##\vphantom{by}$\kern1pt\hfil\cr
           \relax\if\relax\detokenize{#2}\relax\hphantom{\vdash}\kern-2pt\else#2\fi\cr
           \noalign{\kern1pt\hrule\kern1pt}%
           \relax\if\relax\detokenize{#2}\relax\hphantom{\vdash}\kern-2pt\else#3\fi\cr}}$}%
}}

\title{Implicit Proofs}
\author{Noah Fleming, Stefan Grosser, Toniann Pitassi, Robert Robere}

\begin{document}
\begin{center}
{\huge Provable Reductions in $\TFNP$}
\\[9mm] \large

\setlength\tabcolsep{1em}
\begin{tabular}{cccc}
Noah Fleming&
Stefan Grosser&
Toniann Pitassi&
Robert Robere\\[-1mm]
\small\slshape Lund \& Columbia &
\small\slshape McGill  &
\small\slshape Columbia &
\small\slshape McGill 
\end{tabular}

\vspace{9mm}

\large
\today

\vspace{5mm}

\normalsize
\end{center}

\begin{abstract}
    We introduce a new family of propositional proof systems, denoted $\langle EF, R \rangle$, for an arbitrary $\TFNP$ search problem $R$.
    Informally, a refutation of a CNF formula $F$ in $\langle EF, R \rangle$ is given by a polynomial-time mapping reduction from the \emph{false-clause search problem} $\search_F$ to $R$, combined with an Extended Frege proof that the reduction is correct.
    These new systems are naturally motivated in two ways:
    \begin{enumerate}
        \item They are the propositional translations of \emph{witnessing theorems} in bounded arithmetic, by which proofs of $\forall \Sigma^b_1$ formulas $\phi$ in a theory $T$ imply algorithms solving the search problem for $\phi$ in a $\TFNP$ subclass corresponding to $T$ \cite{buss_book, krajicek1990quantified, BussK94, krajivcek2007np, BeckmannB09}.
        \item They form a white-box analogue of the recent characterizations of proof systems using \emph{decision tree reductions} to black-box $\TFNP$ problems \cite{GoosH0MPRT22, Beame1994, DavisR23, LiPR24, FlemingG00RS026, FlemingIM25}. 
    \end{enumerate}
    We consider the proof system $\langle EF, \iter \rangle$, where $\iter$ is the complete problem for the classical $\TFNP$ subclass $\PLS$.
    We prove that $\langle EF, \iter \rangle$ is polynomially equivalent to the quantified boolean sequent calculus $G_1$, and also to the implicit Resolution proof system $[EF, \Resolution]$ introduced by Krajíček \cite{K04}. 
    Hence $G_1$ and $[EF, \Resolution]$ are polynomially equivalent, which is the first natural characterization of an implicit proof system by a classical propositional proof system beyond the work of Wang \cite{Wang13g}.
    We further observe our characterization can be extended to capture $G_i$ via the \emph{game induction principles} of \cite{skelley2007provably}.

    We also calibrate the strength of $\langle EF, R \rangle$ for general $\TFNP$ relations $R$.
    We observe that if Extended Frege can prove that a search problem $R$ is in $\FP$, then $\langle EF, R \rangle$ is polynomially equivalent to $EF$.
    This contrasts to our above result, which shows that Extended-Frege provable reductions to $\iter$, a problem widely believed \emph{not} to be in $\FP$, yields a proof system ($G_1$) that is believed to be stronger than Extended Frege.

    Finally, and somewhat paradoxically, we show that for \emph{any} proof system $P$ which is sufficiently strong, there is a polynomial-time computable search problem $R_P \in \FP$ such that $\langle EF, R_P \rangle$ is polynomially equivalent to $P$.
    Letting $P = [EF, \Resolution]$ and combining our two results shows that $\langle EF, \iter \rangle$ is polynomially equivalent to $\langle EF, R_{[EF, \Resolution]} \rangle$. 
    Hence, despite the widely-believed conjecture that $\iter \not \in \FP$, the problems which $EF$-provably reduce to $\iter$ are exactly the problems which $EF$-provably reduce to a fixed polynomial-time computable set.

\end{abstract}
\setcounter{tocdepth}{2} 
\newpage
\tableofcontents

\newpage

\section{Introduction}

A central goal in propositional proof complexity is to understand the relative power of proof systems and, ultimately, to separate $\NP$ from $\coNP$
by proving superpolynomial lower bounds on proof lengths. Despite decades of progress on concrete systems such as Resolution, bounded-depth Frege, and Cutting Planes, our understanding remains fragmented: the proof systems for which we have unconditional lower bounds captures a fairly narrow class of relatively weak proof systems,
and extending these lower bounds  even for $AC^0_p$-Frege systems has been a longstanding open problem.

More fundamentally, even if we managed to develop
techniques for proving lower bounds for stronger proof systems (such as Frege and Extended Frege ($EF$) systems), it is conjectured that there is no single optimal proof system. 
On the other hand, despite much effort, there is still no evidence of a family of concrete tautologies (beyond reflection principles) that are believed to be hard for $EF$ but easy for a stronger proof system. 

In a beautiful paper  \cite{K04} (see also \cite{krajicek2001implicit}), Krajíček proposed a method to create stronger propositional proof systems from weaker ones by encoding proofs in a weaker proof system implicitly as succinct circuits, together with short certificates of their correctness.
Concretely, given a Cook-Reckhow propositional proof system $P$,
the \emph{implicitization operator} applied to $P$ gives a new {\it implicit} proof system, denoted $iP$ (cf.~\cref{sec:prelims}).
An $iP$ proof is a pair $(\Pi,C)$ such that $C$ is a succinct representation of a $P$ proof, and $\Pi$ is a $P$-proof that $C$ encodes a valid $P$-proof.
By analogy to jump operators in classical logic, Krajicek conjectured that implicit proofs may give a way to create a {\it strictly} stronger
proof system from a weak one.
Iterating this construction generates a natural hierarchy of increasingly powerful systems which closely mirror the reflection or consistency hierarchies in bounded arithmetic. 
Therefore, implicit proofs can be seen as a concrete framework for studying how much additional strength we get by augmenting a proof system with the ability to reason about its own proofs.

Viewed from the lens of complexity theory,
the idea behind implicit proofs is analogous to the definitions of  succinct complexity classes.
Formally, if $L$ is a language, then we can define the \emph{succinct} version of $L$ as follows: given a polynomial-size circuit $C$ as input, decide if $tt(C) \in L$, where $tt(C)$ is the truth table of $C$ obtained by evaluating $C$ on all possible inputs.
For example, it is well-known that given any $\NP$-complete problem, its succinct version is $\NEXP$-complete, and similarly for any problem that is complete for $\P$ under projections, its succinct version is $\EXP$-complete.
In both cases, compression is used to ``jump'' from a lower complexity class or weaker proof system
to a stronger complexity class or proof system. 
Notably, compressed/succinct versions of complexity classes
are known to be provably stronger than the base class, via time hierarchy theorems, whereas in
contrast, it is a major open problem whether the succinct versions of proof systems are generally stronger
than the original proof systems.

Despite the intrinsic appeal and motivation behind the implicit proof framework, very little is currently known about their strength, and how they relate to more standard proof systems. 
At the high end (i.e.~for strong proof systems that can simulate $EF$), Pudl\'ak \cite{Pudlak2020} recently conjectured a striking connection between implicit Extended Frege proofs and
the consistency extension of Buss's theory $S^1_2$ of bounded arithmetic: iterated implicit operations on EF, $i^kEF$, capture the strength of $S^1_2 + \mathrm{Con}(S^1_2)$. Specifically, Pudl\'ak suggests that the provable $\forall\Sigma^b_1$-formulas in $S^1_2 + \mathrm{Con}(S^1_2)$ are exactly those which have propositional proofs in $i^kEF$, for some $k\geq 0$. 
Other work has studied the strength of $iEF$ as a tool for formalizing results in complexity theory. Khaniki \cite{Khaniki2024Jump} showed that $iEF$ is strong enough to formalize the soundness of the sum-check protocol. This was further used
\cite{Arteche2024} to show that if $iEF$ proves strong derandomization assumptions, then $iEF$ lower bounds
imply $\#\P \not\subset \FP/\text{poly}$. 
Given these results and the jump operator origins of Kraj\'i\v cek, it is clearly important to understand the strength of the implicit operator and what
implicit proofs can formalize.

At the low end -- for proof systems seemingly weaker than Extended Frege,
much less is known or conjectured.
Wang \cite{Wang13g} proved that implicit tree-like Resolution is polynomial-equivalent to $EF$\footnote{Wang refers to implicit tree-like Resolution as ``implicit Resolution''.}
However, analogous characterizations have not yet been shown for other standard propositional proof systems.
For example, for well-studied propositional proof systems  such as Resolution or Frege, what can we say about their implicit versions, $i\Res$ and $i\mathrm{Frege}$?
Can they be characterized by a natural proof system? How strong are they?

\subsection{Main Results}


We obtain characterizations of implicit propositional proof systems at the low end, for a variety of natural proof systems.
As a concrete example, we prove the following theorem, and will use it as a running example.
Recall that $G$ is a proof system for reasoning about \emph{quantified} boolean formulas. 
Formally, $G$ is a sequent calculus where each individual formula is a quantified boolean formula, and $G_i$ is the fragment of $G$ in which cuts are restricted to formulas of the form $\exists \vec x_1 \forall \vec x_2 \cdots Q_i \vec x_i F(\vec x_1, \dots, \vec x_i, \vec y)$, where $Q_i$ is $\exists$ if $i$ is odd and $\forall$ if $i$ is even.
For proving ordinary propositional statements, $G_1$ can be viewed as an extension of Extended Frege where proofs can reason with the more general family of quantified Boolean formulas of the form $\exists {\overline y} F({\overline x}, {\overline y})$, where $F$ is a propositional formula.  
We also note that Krajíček \cite{krajicek2001implicit} observed  that Implicit Resolution is equivalent to the (seemingly stronger) system $[EF, \Resolution]$, in which the proof of correctness of the succinct Resolution proof $C$ is in the system $EF$, rather than Resolution.
Our first main theorem is the following characterization of Implicit Resolution.

\begin{theorem}[cf.~\cref{thm:Res-G1}]
\label{thm:Res-G1-intro}
Implicit Resolution, and equivalently $[EF, \Resolution]$, is polynomially-equivalent to $G_1$.
\end{theorem}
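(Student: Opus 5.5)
We will prove the equivalence by passing through the intermediate system $\langle EF, \iter \rangle$, establishing the chain $[EF,\Resolution] \le \langle EF,\iter\rangle \le G_1 \le [EF,\Resolution]$. The equivalence of implicit Resolution with $[EF,\Resolution]$ is Krajíček's observation and we take it as given.

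\textbf{Step 1: $[EF,\Resolution] \le \langle EF,\iter\rangle$.} Let $(\Pi,C)$ refute $F$, where $C$ succinctly encodes a Resolution refutation of exponential length and $\Pi$ is an $EF$ proof that $C$ is locally correct. We reduce $\search_F$ to $\iter$ as follows. Given an assignment $\alpha$, walk backwards from the empty clause: at each line that $\alpha$ falsifies, pick whichever premise $\alpha$ also falsifies. Lines are indexed in $\{0,1\}^n$, so this is the standard $\PLS$ reduction. The iteration map is a small circuit built from $C$ and $\alpha$; it decreases the line index and preserves falsification. A sink is a falsified initial clause, which yields a violated clause of $F$.

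Correctness of this reduction is a universal statement about single steps: if line $i$ is falsified and is derived by resolution, then one of its premises is falsified and has a smaller index. This follows from $\Pi$ together with short $EF$ reasoning about evaluating clauses and resolvents. It therefore has a polynomial-size $EF$ proof, instantiated on the circuit variables.

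\textbf{Step 2: $\langle EF,\iter\rangle \le G_1$.} Suppose we are given an $EF$-provably correct reduction from $\search_F$ to $\iter$. In $G_1$ we can prove the totality of $\iter$ for the specific circuits, namely "there exists $x$ with $S(x)\le x$ or $S(S(x))=S(x)$ or a violation". This is done by $\Sigma^b_1$-style induction, i.e.\ a least-number / length-induction argument, which $G_1$ supports via cuts on $\exists$-formulas. The propositional $G_1$ proofs of this instance of the $\PLS$ principle are polynomial size, because $S^1_2$ (indeed $T^1_2$) proves the totality of $\iter$ and $G_1$ corresponds to $T^1_2$.

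\textbf{Step 3: $G_1 \le [EF,\Resolution]$.} This is the main obstacle. The plan is a witnessing argument. Given a $G_1$ refutation of $F$, with cuts on $\Sigma^q_1$ formulas, we first convert it to free-cut-free form. We then apply the known result that $G_1$ proofs of $\Sigma^q_1$ sequents can be witnessed by $\PLS$ computations, realised propositionally: each sequent is witnessed by a polynomial-size circuit family, and the whole proof yields an exponentially long but uniformly describable sequence. We encode this as a succinct Resolution refutation. Its lines are indexed by pairs (proof line, step of the local search), and the clauses assert "the witness at this stage fails". The nontrivial part is to make every line a genuine clause. For this we will use extension-style encodings of the circuits: the Resolution proof is over $F$ plus extension variables, and Krajíček's framework allows these as long as $EF$ certifies the definitions. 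We then check in $EF$ that each line is a valid resolvent. The difficult case is the $\exists$-cut, which requires the local search to compose witnesses through an exponential search. If the direct construction proves unwieldy, we will instead reduce $G_1$ to $\langle EF,\iter\rangle$ by the witnessing theorem, and show that an $EF$-provable $\iter$ instance unrolls into a succinct Resolution refutation along the path of the iteration.

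Composing the three simulations gives the polynomial equivalence.
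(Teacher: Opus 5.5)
Your Step 3 has a genuine gap, and it is the step that carries the theorem.

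Your primary plan builds the succinct refutation over $F$ plus extension variables. But an $[EF,\Resolution]$ proof must be a Resolution refutation of $F$ itself: every line is a clause over $x_1,\dots,x_n$ alone (the circuit outputs $(P,N)\in\bits^{2n}$). Admitting extension axioms turns the inner system into Extended Resolution. What you would build is therefore an implicit \emph{Extended} Resolution proof, a different and a priori much stronger system, and you give no way to remove those variables. The obstruction you are trying to avoid is real: ``the witness computed at stage $t$ fails'' is a circuit condition on $x$, not a clause. The plan also leaves the $\exists$-cut unresolved.

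Your fallback is exactly the paper's route: translate $G_1$ refutations into $\langle EF,\iter\rangle$ refutations, and those into $[EF,\Resolution]$ refutations. But it is one sentence, and both halves need ideas you have not supplied.

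For the first half, the bare $\PLS$-witnessing theorem is not enough, because the reduction must be \emph{$EF$-provably} correct. The paper gets this from two facts:
\begin{itemize}
\item $\TV^1$ proves $\Sigma^q_1$-reflection for $G_1$, so the witnessing problem of the fixed refutation is $\TV^1$-provably total (\cref{lem:wit_total});
\item the Cook--Nguyen witnessing theorem has correctness provable in $\V^1(\VPV)$ (\cref{thm:eff_tv1_witnessing}).
\end{itemize}
Propositional translation to $EF$ then finishes this half.

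For the second half, the missing idea (\cref{thm:formalizationPLSResConv}) is to treat one total assignment at a time, so that no auxiliary variables are ever needed.
\begin{itemize}
\item For $\alpha\in\bits^n$, let $\overline\alpha$ be the width-$n$ clause falsified only by $\alpha$, and let $S_\alpha$ be the successor circuit of the instance $C(\alpha)$.
\item Index lines by $(\alpha,y,y')$. A line is active iff $y'=S_\alpha(y)$ and either $y'>y$ or $y$ is a solution. Every active line is labelled by $\overline\alpha$.
\item A solution line is a weakening of the clause of $F$ selected by $D(\alpha,y)$. Since $\alpha$ falsifies that clause, it is a subclause of $\overline\alpha$.
\item Any other line is a trivial weakening of the active line $(\alpha,y',S_\alpha(y'))$. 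That line has a larger node, so it can be placed earlier.
\item On top, a complete tree-like refutation of size $2^n$ derives $\bot$ from the clauses $\overline\alpha$, each a weakening of $(\alpha,0,S_\alpha(0))$.
\end{itemize}
Every line's validity follows either from the definition of $\iter$ or from a single instance of $\Reduction_\iter$. So the given $EF$ proof certifies the implicit refutation.

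Steps 1 and 2 otherwise follow the paper, with two smaller corrections.
\begin{itemize}
\item In Step 1, $\iter$ counts a decreasing step as a solution. Reverse the line order (e.g.\ complement the indices) so that passing to a premise increases the node.
\item In Step 2, totality of $\iter$ needs induction or a least-number principle over an exponential range, i.e.\ $T^1_2$. Neither $S^1_2$ nor length induction suffices: if $S^1_2$ proved it, Buss's witnessing theorem would give $\PLS\subseteq\FP$.
\end{itemize}
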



This generalizes Wang's result for tree-like Resolution to general (dag-like) Resolution.



\paragraph{Methodology.} The proof of our theorem crucially relies on new tools that we develop using the theory of \emph{total functions in  $\NP$} ($\TFNP$), which has had an expanding role on propositional proof complexity in recent years \cite{GoosH0MPRT22, GoosMR025, LiPR24, BussFI23, GoosKRS19, DavisR23,HubacekKT24, FlemingIM25, FlemingG00RS026, FlemingGIM26, FlemingGPR24, kamath2019}.
In particular, we take inspiration from a recent line of work (e.g., \cite{GoosKRS19,BussFI23}) showing that black-box (low-depth decision tree) reductions to black-box $\TFNP$ problem are equivalent to short proofs in a corresponding propositional proof system. 
Formally speaking, if $F(x_1, \dots, x_n) = C_1 \land \cdots \land C_m$ is an unsatisfiable CNF formula, we can associate with $F$ the following \emph{false-clause search problem} $\search_F$: given an assignment $x$ to the variables of $F$, output the index $i$ of a false clause $C_i(x) = 0$.
It has long been known that efficient decision trees for $\search_F$ correspond directly to tree-like Resolution refutations for $F$.
Indeed, the \emph{totality} of the search problem $\search_F$ (for every input $x$, there is a valid output $i$ such that $C_i(x) = 0$) is equivalent to the fact that $F$ is an unsatisfiable formula.
Recent work has leveraged this observation to show that other classical propositional proof systems can be also captured, now  by giving efficient decision-tree \emph{reductions} from $\search_F$ to other search problems which are known to be total.
For instance, the following theorem of Kamath shows how to capture low-width Resolution refutations in this way via reductions to the \emph{black-box $\iterlong$ problem} (see \cref{sec:tfnp-prelims} for background on $\TFNP$).

\begin{theorem}[cf.~\cite{kamath2019}]\label{thm:kamath-res-iter}
    Let $F$ be an unsatisfiable CNF formula.
    If there is a width-$w$ Resolution refutation of $F$, then there is a depth $O(w)$ reduction from $\search_F$ to $\iter^{dt}$.
    Conversely, if there is a depth-$w$ reduction from $\search_F$ to $\iter^{dt}$, then there is a width $O(w)$ Resolution refutation of $F$. 
\end{theorem}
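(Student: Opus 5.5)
We have to prove both directions of Kamath's theorem. We first fix the black-box problem. An instance of $\iter^{dt}$ on $N$ nodes is given by a successor function $S:[N]\to[N]$, accessed by decision-tree queries. A solution is either node $1$ with $S(1)=1$, or a node $u$ with $S(u)<u$, or a node $u$ with $S(u)>u$ and $S(S(u))=S(u)$. A depth-$d$ reduction from $\search_F$ maps each $v\in[N]$ to a depth-$d$ decision tree over the variables of $F$ computing $S(v)$. It also maps each candidate solution to a depth-$d$ tree outputting a violated clause of $F$, and these trees must be correct whenever the candidate really is a solution. Throughout, $N$ may be as large as the refutation size; only the depth matters.

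\textbf{Resolution to reduction.} Take a width-$w$ refutation $C_1,\dots,C_L=\bot$ with the axioms first. We will reverse the order so that node $1$ is $\bot$ and edges go to higher indices, toward the axioms. On input $x$, the successor of the node for a clause $C$ is computed as follows. Query the at most $w$ variables of $C$. If $C(x)\neq 0$, set $S(C)=C$. If $C$ is false and was derived from premises $A\vee y$ and $B\vee\neg y$, query $y$ and let $S(C)$ be whichever premise is falsified. If $C$ is a false axiom, set $S(C)=C$. This uses depth $O(w)$, and we need $\bot\mapsto$ premise, so node $1$ is not a fixed point.

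Check the solutions. A node $u$ with $S(u)>u$ and $S(S(u))=S(u)$ means $S(u)$ is a false clause that is a fixed point, so it is an axiom. The witnessing tree queries $u$ and $S(u)$, using $O(w)$ queries, and outputs that axiom. The case $S(u)<u$ never occurs. The only subtle point is ensuring that a non-false clause never gets pointed to. This holds because we point only to falsified premises.

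\textbf{Reduction to Resolution.} This is the harder direction, and we follow the standard "width = depth" simulation in the style of Göös et al. For each node $v$ and each partial assignment $\rho$ of size $O(w)$ that is consistent with the trees and certifies a path of $v$'s tree with $S(v)=u>v$, consider clauses that negate conjunctions of such certificates. We derive by downward induction on $v$ (from $N$ down to $1$) the clause set $\mathcal{D}_v$: for every $2w$-size certificate $\rho$ fixing $S(v)=u$ and the path of $S(u)$, derive $\neg\rho$ or weaken. Concretely, we show that for every width-$O(w)$ partial assignment $\rho$ certifying "$S(v)\ge v$ and $v$ is not a solution", the clause $\neg\rho_v$ (where $\rho_v$ is the part fixing $S(v)$) is derivable in width $O(w)$. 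If $\rho$ certifies that $v$ is a solution, the solution tree produces a clause of $F$ falsified by $\rho$, so $\neg\rho$ is a weakening of an axiom. Otherwise $S(v)=u>v$ and $u$ is not a fixed point. Using the inductive clauses for $u$, resolved over the queries of $u$'s tree in a width $\le 2d$ tree-like manner, we derive $\neg(\text{path of }v)$. At $v=1$, resolving over all leaves of node $1$'s tree yields $\bot$.

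The main obstacle is getting this induction order right so that width stays $O(w)$. Each clause mentions only the certificates for at most two nodes, giving width $\le 2d+d$, and case analysis over a depth-$d$ tree can be done by tree-like resolution inside width $O(d)$.
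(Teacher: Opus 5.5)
Your first direction is correct. It is exactly the successor construction the paper uses in the proof of \cref{thm:V1ProvesPLSRes}: a false derived line points to a false premise, and satisfied lines and axioms are fixed points. You only need to add the weakening rule, where a false conclusion points to its necessarily false premise. For the converse, downward induction over nodes with tree-like case analysis is the right route to a width bound. (The paper only cites this theorem; its white-box converse, \cref{thm:formalizationPLSResConv}, routes full-width clauses $\overline{\alpha}$ and so gives width $n$.) However, as written your induction does not go through, for three concrete reasons.

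\textbf{The invariant.} It should read: for every $v$ and every root-to-leaf path $\rho$ of $T_v$ labelled $S(v)>v$, the clause $\neg\rho$ has a width-$O(d)$ derivation. The inequality must be strict, and there should be no side condition. Fixed-point leaves $S(v)=v$ with $v\neq 1$ are neither needed nor produced by your step. Whether $v$ is a solution is not determined by $\rho$, and handling those solution cases is exactly the job of the inductive step.

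\textbf{A missing case.} Let $u=S(v)$ and branch over the leaves $\sigma$ of $T_u$ consistent with $\rho$. You treat $S(u)=u$ (then $v$ is a solution) and $S(u)>u$ (use the inductive clause $\neg\sigma$, weakened). You omit $S(u)<u$. There is no inductive clause for such $\sigma$; instead $u$ itself is a solution, and the branch must be closed with $u$'s solution tree.

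\textbf{The solution-tree step.} The claim ``the solution tree produces a clause of $F$ falsified by $\rho$, so $\neg\rho$ is a weakening of an axiom'' is false as stated. The solution tree queries further variables, and its output depends on them. The correct step is as follows. Take a leaf $\tau$ of the relevant solution tree ($T_{g_v}$ or $T_{g_u}$) that is consistent with $\rho\cup\sigma$ and labelled $i$. Every total extension of $\rho\cup\sigma\cup\tau$ is an input on which that node is a solution, so it falsifies $C_i$. Hence every literal of $C_i$ is already falsified by $\rho\cup\sigma\cup\tau$ (otherwise extend to satisfy that literal). So $\neg(\rho\cup\sigma\cup\tau)$ is a weakening of $C_i$. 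Resolving over the leaves of that tree gives $\neg(\rho\cup\sigma)$ in width at most $3d$. This observation is also what guarantees that every axiom used has width at most $3d$, which nothing else in your argument provides.

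Finally, at $v=1$, close the leaves with $S(1)=1$ using the solution tree for $1$, and the leaves with $S(1)>1$ using the invariant. Resolving over $T_1$ then gives $\bot$.
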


Our new characterizations of implicit proof systems proceed by considering a natural extension of the above idea.
Instead of considering reductions from $\search_F$ to other \emph{black-box} $\TFNP$ problems, we will consider reductions from $\search_F$ to classical \emph{white-box} $\TFNP$ problems using normal polynomial-time algorithms.
Unlike a decision tree reduction, we cannot explicitly verify that the reduction is correct, so we augment the proof system by adding an Extended Frege proof of correctness.
Formally, for \emph{any} $\TFNP$ relation $R \subseteq \bits^* \times \bits^*$, we introduce a new propositional proof system $\langle EF, R \rangle$ defined as follows.

\begin{definition}[Informal, see \cref{def:ef-mapping-r}]
   Let $F(x_1, \dots, x_n) = C_1 \land C_2 \land \cdots \land C_m$ be an unsatisfiable CNF formula, and let $R \in \TFNP$ be any total $\NP$ search problem.
   An \emph{$\langle EF, R \rangle$ refutation} of $F$ is a tuple $(C, D, \Pi)$, where $C$ and $D$ are polynomial-size circuits encoding a mapping reduction from $\search_F$ to $R$, and $\Pi$ is an Extended Frege proof that $C$ and $D$ are a correct mapping reduction.
   We call $(C, D, \Pi)$ an \emph{$EF$-provable mapping reduction} from $\search_F$ to $R$.
\end{definition}

We argue that this definition is the correct ``white-box analogue'' of the decision-tree reductions used in the black-box setting.
(Indeed, the previous definition can be viewed as an implicit version of the proof systems that can be characterized by black-box decision tree reductions, as studied by \cite{BussFI23}.)
In \cref{sec:G1}, we prove \cref{thm:Res-G1-intro} in two steps, using $\langle EF, R \rangle$ systems as a crucial intermediate step.
Below, the problem $\iter$ is the canonical $\iterlong$ problem, which is known to be complete for the $\TFNP$ class $\PLS$.

\begin{figure}[ht]
    \centering
        \begin{tikzpicture}[scale=1.1]
            \tikzset{inner sep=0,outer sep=3}
            
            \tikzstyle{a}=[inner sep=6pt, inner ysep=6pt,outer sep=0.5pt,
            draw=black!40!white, fill=Cerulean!10!white, very thick, rounded corners=6pt, align=center]
            \tikzstyle{b}=[inner sep=4pt, inner ysep=4pt,outer sep=0.5pt,
            draw=black!20!white, fill=Cerulean!10!white, thick, align=center]
            \large
                \node[a] (G1) at (-0.4,0) {$G_1$};
                \node[a] (Mapping) at (2.8,0) {$\langle EF, \Iter \rangle$};
                \node at (0.8,-0.1) {$\equiv_p$};
                \node[a] (Mapping) at (7.2,0) {$[EF, \mathrm{Resolution}]$};
                \node at (4.8,-0.1) {$\equiv_p$};
    
                \node at (0.9,-0.5) {\small \cref{claim:g1_to_mapping}};
                \node at (4.7,-0.5) {\small \cref{claim:res_eq_pls}};
            \end{tikzpicture}
\end{figure}

A fruitful way of interpreting the new proof system $\langle EF, R \rangle$ is as a propositional translation of the famous and central \emph{witnessing theorems} in bounded arithmetic\footnote{In the introduction we discuss the usual, single-sorted theories in bounded arithmetic. However, we warn the reader that, later in the paper, we have found it more convenient to do our formalizations in the two-sorted setting à la Cook-Nguyen \cite{CN2010}.}.
Informally, a witnessing theorem shows that for a bounded arithmetic theory $T$, a $T$-proof of a tautology of the form $\forall x \exists y \leq t: \phi(x, y)$ (these are called $\forall \Sigma^b_1$-statements) implies the existence of an algorithm from the corresponding $\TFNP$ class which on input $x$, outputs a $y$ such that $\phi(x,y)$ holds.
In this paper, we exploit a witnessing theorem due to Buss and Krajíček \cite{BussK94}, who showed that the $\forall \Sigma^b_1$ statements provable in the bounded arithmetic theory $T^1_2$ are witnessed by reductions to the $\TFNP$ class $\PLS$. 
A follow-up work of Beckmann and Buss \cite{BeckmannB09} generalized this theorem to higher $T^i_2$ classes, and showed that these theorems can themselves be formalized in $S^1_2$.
The first step above (see cf.~\cref{claim:g1_to_mapping}) follows by known theorems, and is essentially obtained by combining a propositional translation of the provable witnessing due to \cite{BeckmannB09} of $T^1_2$ by $\PLS$ algorithms (and hence by reductions to $\iter$), and combining it with the known relationships between $G_1$ proofs and tautologies in $T^1_2$.

The second step (cf.~\cref{claim:res_eq_pls}) we view as one of our main contributions, and requires new ideas.
To prove this, we take direct inspiration from \cref{thm:kamath-res-iter} characterizing low-width Resolution by black-box reductions to $\iter$.
For the reader familiar with bounded arithmetic, one intuitive way think of \cref{claim:res_eq_pls} is by formalizing the proof of \cref{thm:kamath-res-iter} in the \emph{relativized} theory $S^1_2(\alpha)$, where the oracle $\alpha$ will code a Resolution proof or decision tree reduction, depending on the direction of the equivalence.
We can then substitute $\alpha$ for a circuit everywhere in the proof and take a propositional translation, using the fact that $S^1_2$ proofs propositionally translate to Extended Frege proofs \cite{buss_book}.

While the proof can be formalized in this way, we opt to work directly instead, as it is more illuminating.
In one direction, from an implicit Resolution refutation $(C, \Pi)$ of a formula $F$, we will directly create a mapping reduction from $\search_F$ to $\iter$ by mimicking the standard \emph{Prover-Delayer game} for Resolution \cite{Pudlak00}.
For each line in the implicit proof encoded by $C$, we create a node in our $\iter$ instance, and given an assignment $x$ which falsifies the line, we define the successor to be the clause used to derive that line which is also falsified.
In this way, solutions of the $\iter$ instance will correspond directly to false clauses of $F$, and we can use the Extended Frege proof $\Pi$ of the correctness of the encoding of $C$ to argue that this is a provably correct reduction from $\search_F$ to $\iter$.

The converse direction is proved similarly by following the converse direction of the proof of \cref{thm:kamath-res-iter}.
Now, from a provable reduction from $\search_F$ to $\iter$ we must construct an implicit Resolution proof $(C, \Pi)$.
From the reduction we explicitly construct this proof, but we note that a remarkable feature is that the proof we construct is \emph{highly} redundant: the Resolution proof we build will have size $2^{\poly(n)} \gg 2^n$, and much of it looks like re-wiring various redundant clauses to one-another.
We refer the reader to \cref{sec:G1} for more details.




\paragraph{Characterizing $G_i$, for $i>1$.}
Using the same proof skeleton as \cref{claim:g1_to_mapping} (using an $S^1_2$-provable witnessing theorem for $T^i_2$ \cite{BeckmannB09}), we are able to obtain similar characterizations of implicit depth-$d$ Frege systems. We use the \emph{game induction principles} $\mathsf{GI}_d$ of Skelley and Thapen \cite{skelley2007provably}, which are the black-box $\TFNP$ problems characterizing depth-$d$ Frege.

\begin{theorem}
    For all $d>0$, $\langle EF, \mathsf{GI}_d\rangle$ is polynomially equivalent to $G_d$.
\end{theorem}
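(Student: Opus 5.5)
We prove the two simulations separately, following the template of \cref{claim:g1_to_mapping}, with $\iter$ replaced by the game induction principle $\mathsf{GI}_d$ and $T^1_2$ replaced by $T^d_2$ (two-sorted: $\V^d$ / $\TV^{d-1}$-style theories à la Cook--Nguyen). The two bridges are: (i) the correspondence between $G_d$ and $T^d_2$, namely that $G_d$ proves the propositional translations of $T^d_2$-provable $\Sigma^b_d$ (in particular $\forall\Sigma^b_1$) formulas, and $S^1_2$ (hence $EF$) proves the reflection principle for $G_d$ restricted to $\Sigma^q_d$ sequents; and (ii) the $S^1_2$-formalized witnessing theorem of \cite{BeckmannB09} together with Skelley--Thapen \cite{skelley2007provably}, which says that the $\forall\Sigma^b_1$ consequences of $T^d_2$ are exactly those witnessed by polynomial-time reductions to $\mathsf{GI}_d$, and that this witnessing is itself provable in $S^1_2$.

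\emph{From $G_d$ to $\langle EF, \mathsf{GI}_d\rangle$.} Let $\pi$ be a $G_d$ refutation of $F$. First, formalize in $S^1_2$ (uniformly, with $\pi$ a parameter) the statement: if $\pi$ is a $G_d$ proof of $\neg F$, then for every $x$ there is a false clause of $F$. Second, apply the provable witnessing: since $G_d$-soundness for such sequents is provable in $T^d_2$, the statement ``$\pi$ is a $G_d$ proof $\to \forall x\,\exists i\, C_i(x)=0$'' is a $T^d_2$-provable $\forall\Sigma^b_1$ formula. By \cite{BeckmannB09,skelley2007provably} it is witnessed by polynomial-time maps $(C,D)$ into a $\mathsf{GI}_d$ instance built from $\pi$ and $x$. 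Moreover, $S^1_2$ proves that any $\mathsf{GI}_d$ solution maps back under $D$ to a valid witness. Third, hardwire $\pi$ into $C,D$; the resulting circuits have size polynomial in $|\pi|$. Finally, translate the $S^1_2$ correctness proof via \cite{buss_book} into an $EF$ proof $\Pi$, instantiating the premise ``$\pi$ is a proof'' by an $EF$-derivable true sentence, since it is a polynomial-time check on a fixed string.

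\emph{From $\langle EF, \mathsf{GI}_d\rangle$ to $G_d$.} Given $(C,D,\Pi)$, $G_d$ proves the totality of $\mathsf{GI}_d$ on the instance $C(x)$ with polynomial-size proofs. This holds because $T^d_2$ proves totality of $\mathsf{GI}_d$ as a single $\forall\Sigma^b_1$ sentence (this is the content of \cite{skelley2007provably}), and its translation gives uniform $G_d$ proofs in which circuit $C$ is substituted via extension variables, which $G_1 \subseteq G_d$ simulates. From $\exists$ a solution $s$ of $C(x)$, we use $\Pi$ (an $EF$ proof, hence simulated by $G_d$) to conclude that $D(x,s)$ indexes a clause falsified by $x$. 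A $\Sigma^q_1$ cut on $\exists s$ then yields the refutation of $F$.

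The main obstacle is verifying that the witnessing theorem of \cite{BeckmannB09} for $T^d_2$ is genuinely formalized in $S^1_2$ with the specific complete problem $\mathsf{GI}_d$. Equivalently, the task is to match Beckmann--Buss's characterization by local improvement / polynomial-depth game principles with the Skelley--Thapen games via $S^1_2$-provable reductions. I would try first to cite the Skelley--Thapen result that $\mathsf{GI}_d$ characterizes the $\forall\Sigma^b_1$ consequences of $T^d_2$, and to check that their reductions are polynomial-time and proven correct in $S^1_2$. A secondary care point is handling the substitution of circuits for oracle symbols in the relativized translations, so that both directions stay polynomial.
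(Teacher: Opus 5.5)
Your proposal is correct and is essentially the paper's argument. The paper only sketches it: follow the template of \cref{sec:g1_imp_res} with $\TV^1$/$\iter$ replaced by $T^d_2$/$\mathsf{GI}_d$, and rely on the $S^1_2$-formalized Beckmann--Buss witnessing being provably linked to Skelley--Thapen's games, which is exactly the obstacle you isolate. Your route for simulating $\langle EF,\mathsf{GI}_d\rangle$ in $G_d$ (translate the $T^d_2$ totality proof of $\mathsf{GI}_d$, substitute $C$, use $\Pi$, cut on the $\Sigma^q_1$ formula) is a more direct version of the paper's step that ``the theory proves reflection for $\langle EF,\iter\rangle$''.

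One slip to fix, though nothing in your argument depends on it: $S^1_2$ does not prove reflection for $G_d$, since that would give $EF \geq_p G_d$. It is $T^d_2$ that proves it, as you correctly use in the first direction, and its two-sorted counterpart is $\TV^d$, not $\V^d$ or $\TV^{d-1}$.
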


We note that this does not directly imply that $G_d = [EF, \mathsf{GI}_d]$, without a generalization of \cref{claim:res_eq_pls}.

\begin{problem}
     Let $\mathrm{Frege}_d$ denote the fragment of Frege proofs in which each line has alternation depth $d$. Is $\langle EF, \mathsf{GI}_d\rangle \equiv_p [EF, \mathrm{Frege}_d]$?
\end{problem}

\noindent We leave this problem to follow-up work, due to its considerable technicality in comparison to \cref{claim:res_eq_pls}, as well as its independent interest.

These theorems can be viewed as supporting evidence for a main
message of this work:  {\it well-studied strong proof systems for propositional reasoning are compressible instances of a corresponding natural weak proof system.} 

\paragraph{Generalizing \cref{claim:res_eq_pls}.}

Our second main result (cf.~\cref{sec:generic}) shows that the connection proved in \cref{claim:res_eq_pls} holds generally.
Our work here is inspired by recent work of Buss, Fleming, and Impagliazzo~\cite{BussFI23}, who obtained a general characterization of black-box $\TFNP$ problems and propositional proofs. 
They show that for every (sufficiently uniform) black-box problem $R \in \TFNP^{dt}$, there is a corresponding propositional proof system $P$ such that a $\TFNP^{dt}$ problem $Q$ is decision-tree reducible to $R$ if and only if $P$ can prove that $Q$ is total.
In \cref{sec:generic} we prove a similar relationship between strong proof systems $P$ (those that can simulate $EF$) and provable mapping reductions.
In particular, we view the next theorem as the correct ``white-box'' analogue of the known connections between proof systems and black-box reductions \cite{BussFI23}.

\begin{theorem}[Informal, cf.~\cref{thm:generalEquiv}]
   Let $P \geq_p EF$ be any proof system which has polynomial-size proofs of its own reflection principle.
   Then $P$ is polynomially-equivalent to $\langle EF, \WrongProof_P \rangle$.
\end{theorem}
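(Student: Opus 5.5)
We prove the two simulations separately. First we fix the intended definition of $\WrongProof_P$. An instance is a pair $(\pi, \phi)$, where $\pi$ is a purported $P$-proof of a formula $\phi$, together with an assignment $\alpha$. A solution is any witness that the instance is ``wrong'': either a local step of $\pi$ that fails $P$'s polynomial-time verifier (so $\pi$ is not a $P$-proof of $\phi$), or evidence that $\phi(\alpha)=1$. Since $P$ is sound, this relation is total whenever it is phrased so that ``$\pi$ is not a proof'' is always an acceptable answer. It lies in $\FP$ because checking a proof is polynomial time. Equivalently, we can phrase it as: given $(\pi,\alpha)$ with $\pi$ claiming to refute a CNF, output either a falsified clause under $\alpha$ or the index of an invalid step of $\pi$.

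\textbf{$P \le_p \langle EF, \WrongProof_P\rangle$.} Let $\pi$ be a $P$-refutation of $F$. The reduction $C$ maps an assignment $x$ to the instance $(\pi, F, x)$, with $\pi$ hardwired. The back map $D$ is trivial on solutions that name a false clause of $F$; on solutions that name an invalid step of $\pi$, it outputs an arbitrary index. It remains to give the $EF$ proof $\Pi$ of correctness. Since $\pi$ is a fixed, valid string, $EF$ can verify by evaluation, with a polynomial-size proof, that every step of $\pi$ is valid; this is just constant evaluation of the verifier circuit. So the only solutions $EF$ must account for are genuine false clauses, and for these the correctness of $D$ is immediate.

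\textbf{$\langle EF, \WrongProof_P\rangle \le_p P$.} Given $(C,D,\Pi)$, we want a $P$-refutation of $F$. First, $P \ge_p EF$ lets us import $\Pi$ into $P$. Next, we use $P$'s short proofs of its own reflection principle $\mathrm{Rfn}_P$, which states that if $\pi$ is a $P$-proof of $\phi$ then $\phi(\alpha)$ holds. We instantiate the reflection formula at the circuit outputs $C(x)$. Reflection then yields, inside $P$, the statement that for every $x$ the instance $C(x)$ has a solution of the ``false clause'' type and not an ``invalid step'' type. Combining this with $\Pi$, which says that every solution $y$ of $C(x)$ satisfies ``$D(x,y)$ indexes a clause of $F$ falsified by $x$'', and using a substitution/cut in $P$, we derive that for each $x$ some clause of $F$ is false. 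From $F$ as hypotheses, this yields a contradiction.

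The main obstacle is this last step. The solution $y$ is existentially quantified, and the reflection formula is stated over a free variable encoding the proof. We must therefore ensure that $\mathrm{Rfn}_P$ is formulated as a propositional tautology whose proof-variables can be substituted by the circuits $C(x)$ using extension variables, and that the totality argument avoids a quantifier over $y$. We would handle this by choosing the formulation of $\WrongProof_P$ so that its solution is computable from the instance: for example, the first invalid step or the first false clause, found by a polynomial-time scan. This makes $D(x, \mathrm{Sol}(C(x)))$ a polynomial-size circuit, and $EF$ reasoning inside $P$ then reduces everything to quantifier-free substitution instances.
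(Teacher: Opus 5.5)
Your proposal is correct and takes essentially the paper's approach. For $\langle EF,\WrongProof_P\rangle \geq_p P$ you hardwire the $P$-proof into $C$, let $D$ return the false-clause part, and have $EF$ verify the fixed proof by constant evaluation; for the converse you import $\Pi$ into $P$, use $P$'s proofs of $\refl_P$ at $C(x)$ to show a canonical, extension-variable-defined solution is valid (the role of \cref{clm:EF_properties} in the paper), and cut. One harmless slip: reflection only gives that $C(x)$ has \emph{some} solution, possibly an invalid step since the proof component of $C(x)$ need not be valid, not that invalid-step solutions are excluded---but your argument never uses the stronger claim, and your canonical scan already covers invalid steps.
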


In the above theorem, $\WrongProof_P$ is the following total search problem: given a CNF formula $F$, an assignment $x$ to the variables of $F$, and a (proposed) $P$-refutation $\Pi$ of $F$, either output a false clause of $F$ under $x$ or find an error in the proof $\Pi$.
By applying the above theorem when $P = [EF, \Resolution]$, we therefore obtain
\[G_1 \equiv_p \langle EF, \iter \rangle \equiv_p [EF, \mathrm{Resolution}] \equiv_p \langle EF, \WrongProof_{[EF, \Resolution]}\rangle. \]

We believe the characterization $\langle EF, \iter \rangle \equiv_p \langle EF, \WrongProof_{[EF, \Resolution]} \rangle$ is quite surprising in its own right.
The $\iter$ problem is complete for $\PLS$, which is widely believed to be different from $\FP$.
However, for \emph{any} proof system $P$, $\WrongProof_P \in \FP$ (!), since given $F, x$ and $\Pi$, we can easily verify in polynomial time if $x$ falsifies a clause of $F$ or if $\Pi$ contains an incorrect proof step.
Hence, even though $\iter$ and $\WrongProof_{[EF, \Resolution]}$ likely have very different complexity as \emph{search problems} (with respect to unrestricted polynomial-time reductions), if we restrict ourselves to \emph{$EF$-provable} reductions they capture the same set of problems.

\paragraph{Provably Correct Algorithms.} As far as we are aware, the only known example in the literature of a polynomial time algorithm whose ``proof of correctness" is not in $\S^1_2$ is the AKS primality testing algorithm \cite{aks}. This follows by witnessing argument that if $\S^1_2$ proves ``AKS($p$) = 1 if and only if $p$ is prime'', then factoring is in $\FP$ \cite{CN2010}. The best upper bound for formalizing the correctness of AKS is the theory $\mathsf{VTC}^0_2$, which corresponds to quasipolynomial size $\TC^0$ reasoning, and is incomparable with the full Buss Hierarchy $S_2 = \bigcup_{i=1}^\infty S^i_2$ \cite{jalali2026feasibility}.

Taken from the lens of bounded arithmetic, \cref{thm:generalEquiv} gives a suite of polynomial time algorithms whose \emph{correctness} is arbitrarily hard to prove. Consider the equivalence $G_1~\equiv_p~\langle EF,\WrongProof_{G_1}\rangle$, with $\WrongProof_{G_1} \in \FP$. Notice that any polynomial time algorithm $f$ solving $\WrongProof_{G_1}$ cannot \emph{provably} do so in $\S^1_2$, unless $\T^1_2$ has the same $\Pi^b_1$-consequences as $\S^1_2$. This holds not only for $G_1$, but for any proof system $P \geq_p EF$ that proves it own reflection principle and corresponds to a bounded arithmetic theory $\mathcal{T}_Pe$. If $\mathcal{T}_P \supseteq \S^1_2$, then $\WrongProof_P$ is provably in $\FP$ if and only if $\mathcal{T}_P$ is $\Pi^b_1$-conservative over $\S^1_2$.

%
%
%
%
%

\subsection{Related Work}
In recent independent work of Pudl\'ak and Thapen \cite{thapen_pudlak}, they also prove that Implicit Resolution is equivalent to $G_1$. Their proof is related to our own, using the known connections between the theory $\T^1_2$, $G_1$, and $\PLS$. They additionally show, using different techniques based on cut elimination, a $p$-equivalence between $G_i$ and \emph{narrow} implicit $G_{i-1}$, denoted as $[EF, G_{i-1}]^m$ by Kraj\'i\v cek \cite{krajicek2001implicit}, which indicates that the lines of the exponentially large $G_{i-1}$ proof are all polynomial size. 

Where this work differs to \cite{thapen_pudlak} is in our heavy reliance on $\TFNP$ and our new definition of the $\langle EF, R\rangle$ proof systems. We believe that our techniques paint a simple and clear picture of the implicit proof landscape, and explain how the implicit resolution characterization of ours and Pudl\'ak-Thapen can be generalized to many other settings. Additionally, our white-box analogue of Buss, Fleming, and Impagliazzo \cite{BussFI23} is unique to this work.

\subsection{Open Problems}

This work raises several questions. First, can we give implicit characterizations of algebraic and semi-algebraic proof systems, such as Nullstellensatz (NS), Cutting Planes, and SOS? 
In particular is there a natural system characterizing $[EF,NS]$?
Similarly, the  characterization of implicit Resolution by $G_1$ gives rise to natural candidate hard problems for EF. Are there similar natural (e.g., combinatorial or algebraic) candidate hard tautologies for other well-studied propositional proof systems? 

Secondly, it is important to point out that our results connecting implicit black-box TFNP to provable white-box TFNP classes are sensitive to the particular complete problem for the class. In particular, the equivalence holds with respect to complete problems where the input is a succinct circuit representation of an exponentially large object.
Notably, for complete problems (such as Nash) that are not defined as succinct black-box problems, our characterization does not apply and more generally the connection between white and black box problems in this context is far from clear.

Lastly we note the importance of metamathematical upper bounds in our work.
 Our characterization of implicit propositional proof systems hinge on the fact that witnessing theorems can be formalized in $S^1_2$, together with known connections between bounded arithmetic proofs (of $\Sigma^b_1$ statements) and their propositional translations.
 Similarly, \cref{claim:g1_to_mapping} hinges on formalizing the Buss--Fleming--Impagliazzo characterization of {\it every} black-box TFNP problem in $S^1_2$.
 This adds to a growing body of work on the metamathematics of complexity theory that are based on the metamathematics of bounded arithmetic itself  
 (e.g. \cite{Jerabek04,Jerabek07,BussKKK20, razborov-equiv}).
 Here we mention a few other formalizability problems 
 that are likely to have significant implications: 
 (1) characterizing the minimal theory that can prove the PCP theorem; (2) prove (or disprove) the existence of short EF proofs of other higher order witnessing theorems; (3) determine the minimal theory necessary to prove correctness of nonconstructive or nonuniform polynomial-time algorithms.

\medskip

\subsection*{Organization}
\Cref{sec:prelims} contains preliminaries including definitions of standard proof systems, and implicit proofs.
In \Cref{sec:EF_reductions} we define our new TFNP-based implicit proof systems, $\langle EF,R \rangle$, where $R \in \TFNP$.
\Cref{sec:G1} proves \Cref{thm:Res-G1}, and in \Cref{sec:generic} we prove the generalization for arbitrary TFNP problems (\Cref{thm:generalEquiv}). Finally, formal definitions of the proof systems that we work with, bounded arithmetic theories, as well as witnessing theorems are supplied in \Cref{sec:appendix}.

\section{Preliminaries}
\label{sec:prelims}

\subsection{Proof Complexity}

In this section we outline the necessary background on relevant topics in proof complexity.
We refer the reader to Kraj\'{i}\v{c}ek's monograph for further details \cite{Krajicek2019}.

\begin{definition}
A \emph{propositional proof system} $P$ is given by a polynomial-time algorithm $V: \{0,1\}^* \times \{0,1\}^* \rightarrow \{0,1\}$, known as its verifier, such that for every CNF formula $F$, \[\exists\, \Pi\colon\, \,V(F,\Pi) = 1 \iff F \text{ is unsatisfiable}.\]
Given two propositional proof systems $P$, $Q$, with verifiers $V_P$ and $V_Q$, we say that $P$ \emph{polynomially-simulates} $Q$, written $P \geq_p Q$, if there is a polynomial-time computable function $f$ such that $V_Q(F, \Pi) \implies V_P(F, f(F, \Pi))$ for all $F$ and $\Pi$.
\end{definition}

A standard example is the \emph{Resolution} system.

\begin{definition}
    Let $C_1, \dots C_m, D$ be a collection of clauses.
    A \emph{Resolution} proof of $D$ from $C_1, \dots, C_m$ is a sequence of clauses \[\Pi = (D_1, D_2, \ldots, D_s)\] where $D_s = D$ and, for each $i \leq s$, either $D_i = C_j$ for some $j$, or $D_i$ is derived from earlier clauses in the sequence by one of the following two rules:
    \begin{itemize}
        \item {\emph{Resolution}} $C \lor x, D \lor \neg x \vdash C \lor D$
        \item {\emph{Weakening}} $C \vdash C \lor D$
    \end{itemize}
    The proof is a \emph{refutation} if $D = \bot$ (the empty clause).
\end{definition}

\paragraph{Frege and Extended Frege.} 
Frege systems are a broad class of propositional proofs systems which are polynomially equivalent to Gentzen's propositional calculus, PK, which we define formally in \autoref{def:pk} in the appendix . Briefly, a Frege system consists of a finite set of \emph{inference schemas} $A_1,\ldots, A_k \rightarrow B$ which is both sound and implicationally complete. 
Proofs in a Frege system are a sequence of formulas $\Pi=L_1,\ldots, L_t$, where each successive formula is either an axiom, or is derived from previous formulas by application of (substitution into) one of the rule schemas. 

Extended Frege strengthens Frege by allowing for the introduction of \emph{extension variables}.

\begin{definition}
    Let $\Pi=L_1,\ldots, L_t$ be a Frege-derivation using variables $x_1,\ldots, x_n$. The \emph{extension rule} allows one to add the line $L_{t+1} := e \leftrightarrow A$, where $A$ is any formula over the variables $x_1,\ldots, x_n$, and the \emph{extension variable} $e$ is not among these variables. 
\end{definition}
\noindent
Extended Frege ($EF$) is the propositional proof system of $PK$, with the addition of the extension rule. 

By allowing the lines in our Frege proof to include existential and universal quantifiers, we move from $EF$ to $G$, the quantified propositional calculus. The fragment of $G$ we will deal with the most is $G_1$, which restricts lines to contain only universal or existential quantifiers. More generally, the lines in the $G_d$ proof system may have $d$ alternations of quantifiers. These are defined formally in \autoref{def:G} in the appendix.

\subsection{Reflection Principles}
For a propositional proof system $P$ with verifier $V$, let $\{V_{n,m,s}\}$ be a family of polynomial-size circuits computing $V$ on each input length. Here, $n$ is the number of variables of the CNF formula $F$ being proven, $m$ is the number of clauses of $F$, and $s$ is the size of the proof $\Pi$ being verified.
A \emph{reflection principle} is a propositional formula which captures the soundness of a proof system, and will be central to our later characterizations.
\begin{definition}
\label{def:reflection}
    For a proof system $P$ with verifier $V$, its \emph{reflection principle} is the family of tautologies 
    \[\refl_{P,n,m,s}(F,\Pi,x) := \neg\mathrm{Proof}_{n,m,s}(F,\Pi) \vee \mathrm{SAT}_{n,m}(F,x)\]
    where $\mathrm{Proof}_{n,m,s}$ is a formula (using the definition of $V_{n,m,s}$) asserting that $\Pi$ is a $P$-proof that $F$ is a tautology, and $\mathrm{SAT}_{n,m}$ is the CNF formula claiming that $x$ is a satisfying assignment to $F$. This captures the soundness of the proof system $P$: it asserts that if $\Pi$ is a $P$-proof of $F$ then every assignment satisfies $F$. 
    If the parameters $n, m, s$ are clear from context, then we will suppress the subscripts and call the formula $\refl_P$.

    A $\Sigma^q_1$-formula is a quantified boolean formula of the form $\exists \vec y G(x, y)$, where $G$ is unquantified.
    The reflection principle for $\Sigma^q_1$-formulas is defined similarly by changing $\mathrm{SAT}(F,x)$ to $\mathrm{SAT}_{\Sigma^q_1}(F,x)$, asserting that $F$, a $\Sigma^q_1$-formula, is satisfied by assignment $x$. 
    The formula $\mathrm{SAT}_{\Sigma^q_1}(F,x)$ may itself be written as a $\Sigma^q_1$-formula in the natural way. 
    We will denote this version of reflection as $\Sigma^q_1$-$\refl_P$.
\end{definition}
\noindent 
We note that reflection principles are more easily described in the first order setting of bounded arithmetic---we do so in \Cref{appendix:BA}.

\subsection{Implicit Proof Systems}\label{sec:implicit-proof-systems}

As briefly mentioned in the Introduction, \cite{krajicek2001implicit, K04} introduced implicit proof systems as a way to create stronger proof systems from weaker ones by compression.
We now briefly review the definitions of these systems.

We start with some complexity theoretic  motivation behind Krajicek's definition, by viewing implicit proofs as succinct versions of existing proof systems, similar to the development of succinct versions of complexity classes.
Recall that for a string $x \in \{0,1\}^N$ 
(viewed as a Boolean function on $\log N$ input variables), the natural succinct encoding of $x$ is a boolean circuit $S$ such that $tt(S) = x$, where $tt( \cdot )$ outputs the truth table of the given circuit. 
For a language $L \subseteq \{0,1\}^*$, the succinct version of $L$ consists of all polynomial-size circuits $S$ such that $tt(S) \in L$.
This ``succinct operator'' compresses a language, and it is well-known that if $L$ is $\NP$-complete, then the succinct version is $\NEXP$-complete.

The same idea can be applied to proofs, giving rise to the implicit operator $i$. 
For a proof system $P$ and a $P$-proof $\pi \in \{0,1\}^*$, we can consider a succinct encoding of it by circuit $S$ such that $tt(S) = \pi$.
Of course, such a compressed proof may not be polynomial-time verifiable since we need to decompress the circuit first.
To circumvent this, an $[EF, P]$ proof consists of both the circuit $S$, along with an Extended Frege proof that $tt(S)$ is a $P$-proof.
We describe this concretely in \cref{sec:setup}.

We now give a more detailed definition of implicit proof systems, including a rough overview of the encodings of formulas.
(As Kraj\'i\v cek notes \cite{K04}, the actual encoding is not important as long as it satisfies some basic properties.)

We start by fixing a description of the implicit Resolution system $[EF, \Resolution]$.
If $C$ is a clause over variables $x_1, \dots, x_n$, we encode $C$ by two strings $(P^{(C)}, N^{(C)}) \in \bits^n \times \bits^n$, where $P^{(C)}_i = 1$ iff $x_i$ occurs in the clause and $N^{(C)}_i = 1$ iff $\neg x_i$ occurs in the clause.
In this way, all strings $(P, N) \in \bits^{2n}$ encode a clause.

Next, suppose we have a Resolution refutation of a CNF formula $F(x_1, \dots, x_n) = C_1 \land \cdots \land C_m$, defined by a sequence of clauses $D_1, D_2, \ldots, D_s$ where $D_s = \bot$.
We can encode this refutation as a string as follows.
Each clause $D_i$ in the proof is encoded by a string of the form $P^{(D_i)} \circ N^{(D_i)} \circ t^{(i)} \circ L^{(i)} \circ R^{(i)}$, where $\circ$ denotes string concatenation.
The components of the string are:
\begin{itemize}
    \item $P^{(D_i)}, N^{(D_i)} \in \bits^n$ together encode the literals in the clause $D_i$, as described above. If $i \leq m$ note that $D_i = C_i$.
    \item $t^{(i)} \in \bits^2$ is a short string encoding how $D_i$ was derived in the proof. 
    The tag only has meaning if $i > m$, and depending on its value, $L^{(i)}$ and $R^{(i)}$ will take on different meanings. 
    Namely: 
    \begin{itemize}
        \item If $t^{(i)} = 00$ then $D_i$ is derived by weakening, and the string $L^{(i)}$ will index into the clause $D_j$ with $j < i$ such that $D_i$ is obtained by weakening $j$.
        \item If $t^{(i)} = 01$ then $D_i$ is derived from earlier clauses by Resolution, and the string $L^{(i)}, R^{(i)}$ are the indices of the two clauses used to derive $D_i$.
        \item If $t^{(i)} = 10$, then $D_i$ is a weakening of a clause from $F$.
        \item If $t^{(i)} = 11$ then this line is disabled, and is not used by the proof.
    \end{itemize}
\end{itemize}
The encoding of the proof is obtained by concatenating the encoding of each line of the proof, as described above.

Now that we have described how to encode a Resolution proof by a string, we can talk about how to implicitly represent a proof.
A circuit \[C : \bits^s \rightarrow \bits^2 \times \bits^{2n} \times \bits^{2s}\] is said to \emph{implicitly encode a Resolution refutation} of $F$ if the truth-table of $C$ \[tt(C) := C(0^s) \circ C(0^{s-1}1) \circ \cdots \circ C(1^s)\]
implicitly encodes a Resolution refutation of $F$.
Note that the length of the refutation may be as long as $2^s$.

It is possible to formalize propositionally the notion of a correct implicit resolution proof in a CNF we denote $\ImpProof_\Res(n,m,s,C,F)$. We carefully do this in \cref{sec:setup} for the first-order setting---informally, $\ImpProof_\Res$ encodes that a circuit $C$ is a valid circuit, and that its truth table encodes the lines of a resolution proof, with each line being either an axiom of $F$, deduced by weakening, or deduced by a resolution rule.

\begin{definition}
    The proof system $[EF, \Resolution]$ is defined as follows.
    If $F(x_1, \dots, x_n) = C_1 \land \cdots \land C_m$ is an unsatisfiable CNF formula, then a refutation in $[EF, \Resolution]$ of $F$ is given by a pair $(C, \Pi)$, where $C$ is an implicit encoding of a Resolution refutation of $F$, and $\Pi$ is an Extended Frege proof that $C$ is correctly encoded.
\end{definition}

This definition may be generalized to $[EF, P]$, for any propositional proof system $P$. Now, $C$ encodes a succinct $P$-refutation of $F$, and $\Pi$ similarly is an $EF$ proof of correctness of $C$. We remark that for static proof systems like Nullstellensatz or Sum-of-Squares, it is not immediately clear how to similarly encode such proofs. However, this can be done through ad-hoc modifications of proof encodings, or more generally through succinctly encoding the computation history of the \emph{verifier} $V_P$.

\subsection{\texorpdfstring{$\TFNP$}{TFNP} Preliminaries}
\label{sec:tfnp-prelims}

In this section, we introduce some of the background on total $\NP$ search problems that we regularly use throughout the paper.
A recurring theme will be systematic connections between the classical \emph{white-box} theory of $\TFNP$, and the more recently studied \emph{black-box} theory of $\TFNP$ (e.g., \cite{
Megiddo1991,Papadimitriou1994,Goos2018,Beame1998,GoosHJMPRT22-collapses,BussFI23,GoosH0MPRT22,KomargodskiNY19}).
We start by recalling the definition of an $\NP$ search problem.

\begin{definition}
    A relation $R \subseteq \bits^* \times \bits^*$ is 
    \begin{itemize}
        \item \emph{Total}, if for all $x \in \bits^*$ there is a $y \in \bits^*$ such that $(x, y) \in R$. 
        \item \emph{Polynomially bounded}, if there is a polynomial $\ell(n)$ such that for all $(x, y) \in R$, $|y| \leq \ell(n)$.
    \end{itemize} 
    An \emph{$\NP$ search problem} is a polynomial-time computable, polynomially bounded relation $R \subseteq \bits^* \times \bits^*$.
    The class of all total $\NP$ search problems is denoted $\TFNP$.
\end{definition}

We think of $R \subseteq \bits^* \times \bits^*$ as a search problem in the usual sense: on receiving an input $x \in \bits^*$, the goal is to output a $y \in \bits^*$ such that $(x, y) \in R$.
We also often use the predicative notation $R(x, y)$ to denote the relation ``$(x, y) \in R$''.
For a $\TFNP$ relation $R$, we will reserve $\ell$ to denote an upper bound on the length of the certificate of $R$ as a function of $n$.

Since $R$ is a polynomial-time computable predicate, we can encode the execution of $R$ on an input by a polynomial-size circuit family $V_R$.
We record this next, as it will be useful later.

\begin{definition}\label{def:tfnp-cnf-encoding}
    Let $R \in \TFNP$ and let $n, \ell$ be positive integers.
    Define $V_{R, n, \ell}$ to be a polynomial-size circuit such that $V_{R, n, \ell}(x, y) = R(x, y)$.
    If the parameters $n$ and $\ell$ are clear from context, then we will omit them and write $V_R(x, y)$.
\end{definition}

As usual, there are notions of reductions between $\TFNP$ problems. We record the standard notion of reductions for now.

\begin{definition}\label{def:mapping-reduction}
    Let $R, S \in \TFNP$.
    A \emph{mapping reduction}\footnote{In this paper, all mapping reductions will be polynomial-time computable.} from $R$ to $S$ is given by two polynomial-time computable functions $f$ and $g$ such that for all $x, y \in \bits^*$, 
    \[S(f(x), y) \implies R(x, g(y)).\]
    If $R$ is mapping reducible to $S$ we write $R \leq_m S$.
\end{definition}

The above notion is natural as if $S \in \FP$ and $R \leq_m S$ then $R \in \FP$: given $x$, we run the polynomial-time algorithm $A$ solving $S$ on $f(x)$, receive an output $y$, and then output $g(y)$.
A canonical example of an interesting $\TFNP$ problem is the $\iterlong$ problem, which we define next.

\begin{definition}
    The $\iterlong$ problem, also denoted $\iter$, is defined as follows.
    For any positive integer $n$, 
    let $\ell(n) = n^{O(1)}$ be a  polynomially-bounded input parameter.
    The input to the problem is a size parameter $1^n$ and a circuit $S: \bits^n \times \bits^\ell \rightarrow \bits^\ell$ of size polynomial in $n$.
    Given $n$ and $S$, a string $y \in \bits^\ell$ is a valid solution if either
    \begin{itemize}
        \item $y = 0^\ell$, if $S(x, 0^\ell) = 0^\ell$, or
        \item $y \neq 0^\ell$, if $S(x, y) <_{lex} y$, or 
        \item $y \neq 0^\ell$, if $S(x, y) >_{lex} y$ and $S(x, S(x, y)) = S(x, y)$.
    \end{itemize}
    Above $<_{lex}$ denotes the usual lexicographic ordering on strings. 
    The class of all search problems mapping reducible to $\iterlong$ is denoted by $\PLS$.
\end{definition}

\paragraph{Black-Box $\TFNP$.}

In black-box $\TFNP$ we are provided the input by query access, as it is now thought of as an exponentially-long string, and we are interested in performing \emph{reductions} via black-box algorithms, where the primary complexity measure is the number of queries made by the reductions.
One of the main themes of this paper is the link between white-box and black-box $\TFNP$, and the central example of a total search problem in black-box $\TFNP$ is the \emph{false-clause search problem} associated with an unsatisfiable CNF formula $F(x_1, \dots, x_n) := C_1 \land \cdots \land C_m$.
For this reason, we will briefly review these connections, although we will not need the tools from this section in this paper beyond the definition of the false-clause search problem below.

The false-clause search problem is the following: given an assignment $x \in \bits^n$ to the input variables $x_1, \dots, x_n$, output the index $i$ of any falsified clause $C_i(x) = 0$.
\begin{definition}
	If $F(x_1, \dots, x_n) = C_1 \land \cdots \land C_m$ is an unsatisfiable CNF formula, the \emph{false-clause search problem} $\search_F$ associated with $F$ is the total relation \[\search_F := \set{(x, i) \in \{0,1\}^n \times [m]: C_i(x) = 0}.\]
\end{definition}


More generally, in the theory of black-box $\TFNP$ we are interested in \emph{black-box total search problems}.

\begin{definition}
    Let $n$ be a positive integer. 
    A \emph{black-box search problem} is any total relation \[\dtR_n \subseteq \set{0,1}^{n} \times O_n,\] where $O_n$ is a polynomial-sized set of feasible solutions.
    The class $\TFNP^{dt}$ contains all sequences $\dtR = \set{\dtR_n}_n$ of black-box search problems such that the following holds: there is a universal constant $c$ such that for each input length $n$, and each $o \in O_n$, there is a $O((\log n)^c)$-depth decision tree $T_o$ satisfying $T_o(x) = \dtR_n(x,o)$.
\end{definition}

Following standard convention, we will usually think of $n$ as a parameter and consider problems on $\poly(n)$ input bits.
If we have a sequence of unsatisfiable CNF formulas $F = \set{F_n}$, each with $\poly(n)$ variables and $\poly(\log n)$ width, then we abuse notation and write $\search_F := \set{\search_{F_n}}_n$ to mean the associated sequence of search problems. 
We note that any such sequence lies in $\TFNP^{dt}$, since each feasible solution of $F_n$ is a $\poly(\log n)$-width clause which can be easily verified by a $\poly(\log n)$-depth decision tree.


Next, we introduce a notion of reduction between black-box search problems, encoded by polynomial-size decision trees.
If $T$ is a decision tree then let $|T|$ be the number of leaves of $T$.
\begin{definition}\label{def:formulation}  
	Let $\dtS_m \subseteq \{0,1\}^m \times O_S, \dtR_n \subseteq \{0,1\}^n \times O_S$ be black-box search problems.
	A \emph{decision-tree reduction} from $\dtS_m$ to $\dtR_n$ is given by functions $f = \{f_i: \bits^m \rightarrow
    \{0,1\}\}_{i \in [n]}$ and $g = \{g_o: \bits^n \rightarrow O_S\}_{o \in O_R}$ 
    each computable by decision trees $T_{f_i}, T_{g_o}$, such that for every $x \in \{0,1\}^m$ and $o \in O_R$,
	\begin{equation*}\label{eqn:mapping-reduction}
		(T_f(x), o) \in \dtR \implies (x, T_{g_o}(x)) \in \dtS. 
	\end{equation*} 
    where $T_f(x)$ denotes the $n$-bit string obtained by evaluating $T_{f_1}(x),\ldots,T_{f_n}(x)$.
    The \emph{depth} of the reduction is the maximum depth of any of the decision trees $\set{T_{f_i}}_i \cup \set{T_{g_o}}_o$.
    The \emph{complexity} of the reduction to be: \[\sum_{i=1}^n |T_{f_i}| + \sum_{o \in O_S} |T_{g_o}|.\] 
    
	We extend this definition to sequences as follows. 
	If $\mathsf{R}^{dt} = \{\dtR_n\}$ is a sequence of search problems, then $R^{dt}(\dtS_m)$ is the minimum complexity of a reduction from $\dtS_m$ to any problem in $R^{dt}$. 
	If $S^{dt} = \{\dtS_m\}$ is also a sequence of search problems, then we write $S^{dt} \leq^{dt}_m \mathsf{R}^{dt}$ if $\mathsf{R}^{dt}(\dtS_m)$ is polynomially-bounded in $m$.
\end{definition}

A few remarks on this definition are needed.
First, the definition above slightly differs from the usual definition of a decision-tree reduction, as seen in e.g.~\cite{GoosH0MPRT22}.
The usual notion defines the complexity of a reduction to be the maximum depth of any decision tree, plus the logarithm of the size of all decision trees.
This notion is useful as it captures the query complexity of reductions in $\TFNP$.
For us, the above notion, capturing \emph{just} size, will be more useful as it captures the lengths of proofs in a natural way.

Second, when actually defining reductions it will be more convenient for us to think of the decision trees $f$ as outputting more general objects, like elements in some range $[n]$, rather than individual bits or restrictions.
We will abuse notation and do this liberally, with the understood convention that whenever we output or query an element $r \in [n]$, this would be formalized by replacing $r$ with a $\log n$-length bit string encoding $r$.


To continue our motivating example, just as in the classical setting, the $\iterlong$ search problem has a natural black-box counterpart denoted $\iterlong^{dt}$. 
\begin{definition}
Let $N = 2^k$ be a positive integer. 
The problem $\iter_N^{dt}$ is defined as follows.
As input, for each element $u \in [N]$, we are given a \emph{successor} $s_u \in [N]$, interpreted as naming the successor node of $u$, and which we can think of as being encoded by a string of $k$ bits.
The goal of the search problem is to output any of the following
\begin{enumerate}
\item $1$ if $s_1 = 1$ \hfill \emph{(inactive distinguished source)}
\item $u \neq 1$, if $s_u < u$, \hfill \emph{(decreasing successor)}
\item $u$, if $u$ is active and $s_u$ is a proper sink. \hfill \emph{(proper sink)}
\end{enumerate}
\end{definition}

For $\iter^{dt}$, it is helpful to think of the successors $s_u$ as describing a fan-out $1$ dag on a line of $N$ nodes.
Active nodes are those which have some edge to a node
Then, if we require that $1$ is active, then there must be an active node which points backwards or an active node which points at an inactive node.
This can be viewed as a black-box version of the $\iterlong$ problem described earlier, as in the $\iterlong$ problem the successor values $S_u$ were described by a circuit, as opposed to being described directly by query access.

\section{Extended-Frege-Provable \texorpdfstring{$\TFNP$}{TFNP} reductions.}
\label{sec:EF_reductions}

As we have seen above, if $F(x_1, \dots, x_n) = C_1 \land \cdots \land C_m$ is an unsatisfiable CNF formula, then $\search_F$ is a total search problem, because every input assignment $x \in \bits^n$ will falsify some clause.
This means that if we have a decision-tree reduction from $\search_F$ to some search problem $R^{dt}$ which is \emph{a priori} known to be total, then the decision-tree reduction is itself a proof that $F$ is unsatisfiable.
In fact, it has been shown that highly efficient decision tree reductions from $\search_F$ to certain fixed total search problems actually \emph{completely characterize} efficient proofs in certain proof systems \cite{GoosKRS19, kamath2019, GoosH0MPRT22, DavisR23}, and moreover, any proof system satisfying a few mild conditions can be characterized in this way \cite{BussFI23}.

In this section, we consider reductions from $\search_F$ directly to classical $\TFNP$ problems, such as $\iter$ (instead of $\iter^{dt}$).
Of course, now it no longer makes sense to use black-box reductions, so instead we will follow the approach of classical mapping reductions and use boolean circuits instead.
First, we introduce an auxiliary formula which will help us define our reductions.

\begin{definition}\label{def:unsat-cnf}
    Let $F(x_1, \dots, x_n) = C_1 \land C_2 \land \cdots \land C_m$ be a CNF formula.
    The formula $\unsat_F(x_1, \dots, x_n, z_1, \dots, z_m)$ is defined to be \[\unsat_F(x, z) := \left(\bigvee_{i=1}^m z_i\right) \land \bigwedge_{i=1}^m (\neg z_i \lor \neg C_i).\]
    If $\mathcal{F} = \set{F_i}_i$ is a sequence of unsatisfiable CNF formulas, then let $\unsat_{\mathcal{F}} = \set{\unsat_{F_i}}_i$ be the corresponding sequence of formulas.
\end{definition}

When $F$ is unsatisfiable, the $\unsat_F$ formula can be viewed as a formalization of the search problem $\search_F$: given an assignment $x \in \bits^n$, output any $z$ such that $\unsat_F(x, z)$ is satisfied.
Since $F$ is unsatisfiable, such a $z$ will always exist, and hence $\forall x \exists z \unsat_F(x, z)$ is itself a tautology.
First we observe that these are easily computable as search problems.

\begin{proposition}
    If $\mathcal{F}$ is a polynomial-time uniform sequence of unsatisfiable CNF formulas, then $\unsat_{\mathcal{F}}$, interpreted as a total search problem, is in $\FP$.
\end{proposition}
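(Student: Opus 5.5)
The plan is to exhibit a direct polynomial-time algorithm and check that it outputs a valid solution. On input $x \in \bits^n$, the algorithm uses polynomial-time uniformity to construct $F$ (that is, $F_n$, the member of $\mathcal{F}$ with $n$ variables) in time polynomial in $n$. It then evaluates the clauses $C_1, \dots, C_m$ on $x$ one at a time. Since $F$ has polynomial size, $m = \poly(n)$ and each clause has at most $2n$ literals, so this scan takes polynomial time.

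Unsatisfiability of $F$ guarantees that some index $i$ has $C_i(x) = 0$. The algorithm picks the least such $i$ and outputs $z = e_i \in \bits^m$, the indicator vector of $i$. To verify that $\unsat_F(x, z)$ holds:
\begin{itemize}
\item The disjunction $\bigvee_j z_j$ is satisfied because $z_i = 1$.
\item For $j \neq i$, the clause $\neg z_j \lor \neg C_j$ holds because $z_j = 0$.
\item For $j = i$, the clause $\neg z_i \lor \neg C_i$ holds because $C_i(x) = 0$.
\end{itemize}
Hence $(x, z)$ lies in the relation. The relation itself is polynomial-time checkable and polynomially bounded, since $|z| = m$, so $\unsat_{\mathcal{F}}$ is a genuine $\TFNP$ problem solved by this algorithm. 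Thus it is in $\FP$.

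Nothing here is a real obstacle. The only point needing care is that totality depends on $F$ being unsatisfiable, which is exactly what guarantees the scan finds an index. The other bookkeeping point is that "polynomial-time uniform" is what lets the algorithm know $F$ from the input length alone. Any convention where $F$ is instead supplied as part of the input works equally well.
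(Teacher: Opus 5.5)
Your proof is correct and follows the same route as the paper: use the uniformity algorithm to construct $F_{|x|}$, evaluate its clauses on $x$, and output an indicator of falsified clauses. The only difference is cosmetic. You output the indicator of the least falsified clause, whereas the paper marks every clause falsified by $x$. The paper's text writes $z_i = C_i(x)$, which should read $z_i = 1$ iff $C_i(x) = 0$, matching your verification.
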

\begin{proof}
    Write $\mathcal{F} = \set{F_i}_i$, and let $A$ be the polynomial-time algorithm such that $A(1^i)$ outputs the description of $F_i$.
    The polynomial-time algorithm for $\unsat_{\mathcal{F}}$ is simple: given $x$ as input, run $A(1^{|x|})$ to recover $F_{|x|}$.
    Evaluate $F_{|x|}(x)$ and, for each clause $C_i$ of $F_{|x|}$, let $z_i = C_i(x)$, and output the string $z$.
\end{proof}

A crucial distinction must be made here.
The family $\mathcal{F}$ is polynomial-time uniform, and $\unsat_{\mathcal{F}}$ is computable in polynomial time (when considered as a search problem), but both of these facts should not be interpreted as saying there are polynomial-time verifiable proofs that the families are unsatisfiable.

Next, we define a notion of non-uniform mapping reduction from $\search_F$ to an arbitrary $\TFNP$ relation $R$ using the $\unsat_F$ formulas.

\begin{definition}
\label{def:non-uniform mapping reduction}
    Let $R \in \TFNP$ and let $F(x_1, \dots, x_n) = C_1 \land \cdots \land C_m$ be an unsatisfiable CNF formula.
    Let $\ell(n)$ denote the polynomial upper bound on the length of any solution to $R$ on inputs of length $n$.
    A \emph{non-uniform mapping reduction} from $\search_F$ to $R$ is given by two circuits $C : \{0,1\}^n \rightarrow \{0,1\}^s, D : \bits^n \times \bits^{\ell(s)} \rightarrow \bits^m$, such that
    \begin{equation}\label{eq:nu-reduction}
        R(C(x), y) \implies \unsat_F(x, D(x, y)).
    \end{equation} 
\end{definition}


Observe that we have no efficient way of verifying that two circuits $(C, D)$ actually satisfy the previous definition.
For example, given $\search_F$, one could consider the following ``trivial'' polynomial-time algorithm that solves it: given $x \in \{0,1\}^n$, test the clauses of $F$ one-by-one on $x$, and output the first one that is false.
If no clause is false, then loop forever.
If $F$ is unsatisfiable, then we are guaranteed that this algorithm will halt in polynomial time.
However, \emph{proving} that this algorithm halts is equivalent to proving that $F$ is unsatisfiable in the first place.
To turn this notion into a propositional proof system, we will supply an Extended Frege proof that the mapping reduction is correct.
In order to do so, we will need to introduce a propositional formalization of Equation \cref{eq:nu-reduction}.

Recall (\cref{def:tfnp-cnf-encoding}) that if $R \in \TFNP$ then $V_{R, n, \ell}$ is a polynomial-size circuit which is equivalent to $R$ on inputs of length $n$ and $\ell$.
First, we define a formula $\mathrm{Reduction}_R$ which captures the truth of \cref{eq:nu-reduction}.

\begin{definition}
    Let $n, m$ be positive integers, let $R \in \TFNP$, and let $F(x_1, \dots, x_n) = \wedge_{i=1}^m C_i$ be a CNF formula.
    Let $\ell$ be a polynomial bound on the certificate length of $R$.
    If $R \in \TFNP$ and $F(x_1, \dots, x_n) = \wedge_{i=1}^m C_i$, then $\Reduction_{R, n, m, S, s}(C, D, F)$ is the boolean formula defined as follows.
    We introduce several helper formulas which can easily be represented in CNF.
    \begin{itemize}
        \item $\isCkt_S(C)$ is satisfied iff the variables in $C$ encode a valid boolean circuit with $S$ gates.
        \item $\Eval_{n, S, s}(C, x, t)$ is satisfied iff $C$ encodes a boolean circuit with $n$ inputs, $S$ gates, and $s$ outputs, and $t \in \bits^s$ is the output of $C$ on input $x \in \bits^n$.
    \end{itemize}
    Then $\Reduction_{R, n, m, S, s}(C, D, F):=$ \[\isCkt(C) \land \isCkt(D) \land \Eval(C, x, x') \land \Eval(D, (x, y), z) \land \Eval(V_R, x', y) \rightarrow \unsat_F(x, z).\]
    For readability, we have omitted parameter subscripts above, and we will continue to do so when those parameters are clear from context. 
    The circuit $C$ has $n$ inputs, $S$ gates, and $s$ outputs, and the circuit $D$ has $n+\ell(s)$ inputs, $S$ gates, and $m$ outputs.
\end{definition}

By definition, for any circuits $C$ and $D$ and any CNF formula $F$, we have that $(C, D)$ form a non-uniform mapping reduction from $\search_F$ to $R$ if and only if $\Reduction_R(C, D, F)$ is a tautology.
So, with this definition in hand, we can define a \emph{provable} mapping reduction from $\search_F$ to $R$.
Now, along with the circuits $C, D$, defining a mapping reduction, we supply an Extended Frege proof $\Pi$ of $\Reduction_R(C, D, \Pi)$.

\begin{definition}\label{def:ef-mapping-r} (Defining $\langle EF,R\rangle$)
    For any $R \in \TFNP$, the propositional proof system $\langle EF, R \rangle$ is defined as follows.
    For any unsatisfiable CNF formula $F(x_1, \dots, x_n) = C_1 \land \cdots \land C_m$, a refutation of $F$ in $\langle EF, R \rangle$ is given by an \emph{$EF$-provable reduction from $\search_F$ to $R$}.
	Formally, such a reduction is a tuple $(C, D, \Pi)$, where $(C, D)$ are circuits computing a non-uniform mapping reduction from $\search_F$ to $R$, and $\Pi$ is an Extended Frege proof of a boolean formula $\mathrm{Reduction}_R(C, D, F)$. 
\end{definition}

Observe that an $\langle EF, R \rangle$ proof can be easily verified in polynomial time, since we just need to verify the Extended Frege proof $\Pi$ of $\Reduction_R(C, D, F)$.
Since $R$ is a total search problem, if $(C, D)$ are a valid mapping reduction then $\Pi$ proves that $\search_F$ is total, and hence $F$ must be unsatisfiable.
(Note that Extended Frege itself may \emph{not} be able to prove that $R$ is total, and indeed later we will be interested in relations $R$ where it seems this is not the case.)
To see that $\langle EF, R \rangle$ is a complete proof system we show that it polynomially simulates Extended Frege.

\begin{proposition}
\label{prop:EF_simulates_FP}
    For any $R \in \TFNP$, $\langle EF, R \rangle \geq_p EF$.
\end{proposition}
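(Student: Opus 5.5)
Given an $EF$ refutation $\Pi_F$ of $F(x_1,\dots,x_n)=C_1\land\cdots\land C_m$, I will build in polynomial time a triple $(C,D,\Pi)$ with $\Pi$ an $EF$ proof of $\Reduction_R(C,D,F)$. The key observation is that $\Reduction_R(C,D,F)$ is an implication whose conclusion is $\unsat_F(x,z)$. Since $F$ is unsatisfiable, the reduction can ignore $R$ completely: the hypothesis about $V_R$ is never used, and correctness comes directly from $\Pi_F$.

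The construction is as follows. Let $C$ be any fixed trivial circuit, for instance the one outputting a constant string of the appropriate length $s$. Let $D(x,y)$ ignore $y$ and output $z$ with $z_i := \neg C_i(x)$, computed by a small circuit of depth two. This is exactly the $\FP$ algorithm from the earlier proposition about $\unsat_{\mathcal F}$. Both circuits have size $O(|F|)$ and are written down directly from $F$.

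The $EF$ proof $\Pi$ of $\Reduction_R(C,D,F)$ proceeds in three steps.
\begin{enumerate}
\item Assume the hypotheses $\isCkt(C)\land\isCkt(D)\land\Eval(C,x,x')\land\Eval(D,(x,y),z)\land\Eval(V_R,x',y)$. Since the encodings of $C$ and $D$ are fixed constants, $\isCkt(C)$ and $\isCkt(D)$ are variable-free and true. The formula $\Eval(D,(x,y),z)$ is a conjunction of local gate constraints, so a polynomial-size $EF$ derivation, by induction over the gates, yields $z_i\leftrightarrow\neg C_i(x)$ for each $i\le m$.
\item From this, each conjunct $\neg z_i\lor\neg C_i(x)$ of $\unsat_F$ follows immediately.
\item The remaining conjunct $\bigvee_i z_i$ is equivalent to $\bigvee_i \neg C_i(x)$, i.e.\ $\neg F(x)$. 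This is where $\Pi_F$ is used: an $EF$ refutation of $F$ gives, by the deduction theorem for Frege/$EF$ (or by viewing $\Pi_F$ as a proof of the tautology $\neg\bigwedge_i C_i$), a polynomial-size $EF$ proof of $\bigvee_i\neg C_i(x)$. Substituting via the equivalences from step 1 gives $\bigvee_i z_i$.
\end{enumerate}
Combining these steps and discharging the hypotheses gives $\Reduction_R(C,D,F)$. The antecedent $\Eval(V_R,x',y)$ is simply weakened in and never used.

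The main obstacle is the bookkeeping in step 1. One must check that the propositional encodings $\isCkt$ and $\Eval$ let $EF$ extract $z_i\leftrightarrow\neg C_i(x)$ in size polynomial in $|F|$. This should be routine, since the circuit is explicit and $\Eval$ asserts gate-by-gate consistency; the proof just unwinds it gate by gate. A second, minor point is converting a refutation of the CNF $F$ into a proof of $\neg F$ in the same variables. This is standard for $EF$ (polynomially equivalent to $PK$ with extension), where deduction costs only a polynomial factor. Since the whole map $\Pi_F\mapsto(C,D,\Pi)$ is polynomial-time computable, this gives $\langle EF,R\rangle\ge_p EF$ for every $R\in\TFNP$.
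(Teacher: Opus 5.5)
Your proof is correct and takes essentially the same approach as the paper. $C$ is a constant circuit, and $D$ ignores the $R$-solution and outputs the indicator vector of falsified clauses. The conjuncts $\neg z_i \lor \neg C_i(x)$ are read off from $\Eval(D,(x,y),z)$, and $\bigvee_i z_i$ comes from the given $EF$ refutation of $F$. The paper does have $C$ hard-code an $R$-instance with a known solution and phrases the last step as a contradiction, but neither difference matters, since the $V_R$ hypothesis is never used.
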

\begin{proof}
    We give a sketch of a proof that can be easily formalized in Extended Frege. 
    Let $\Pi$ be an Extended Frege refutation of $F(x_1, \dots, x_n) = C_1 \land \cdots \land C_m$. 
    Let $C$ be a circuit which, given $x$, ignores it and hard-codes an instance of $R$ with a fixed valid solution $o$.
    Let $D$ be the circuit which, given $(x, o)$, ignores $o$, evaluates $F(x)$, and outputs the string $z \in \bits^m$ defined by $z_i = 1$ iff $C_i(x) = 0$. 
    From the definition of $D$, it is immediate that $\neg z_i \lor \neg C_i(x)$ holds for all $i \in [m]$, and so we prove that for all $x$, $D(x, o) \neq 0^m$ by contradiction.
    If $D(x, o) = 0^m$, then by the definition of $D$, we would have $C_i(x) = 1$ for all $i \in [m]$, and hence we can prove $D(x, o) = 0^m$ implies $F$ is satisfiable.
    But $\Pi$ is an Extended Frege refutation of $F$, which leads to a final contradiction.
\end{proof}

As a second calibration for its strength, it turns out that a converse holds provided Extended Frege can prove that $R$ is inside of $\FP$.

\begin{definition}
    If $R \in \TFNP$, then we say $R$ is \emph{$EF$-provably inside of $\FP$} if for every input length $n$, there is a polynomial-size circuit $A$ and a polynomial-size Extended Frege proof of the formula $\Eval(A, x, y) \land \Eval(V_R, (x, y), b) \land b$, where $V_R$ is understood to encode the polynomial-time verifier for the relation $R$.
\end{definition}

\begin{proposition}
\label{prop:FP_simulates_EF}
    If $R \in \TFNP$ is $EF$-provably in $\FP$, then $\langle EF, R \rangle \equiv_p EF$.
\end{proposition}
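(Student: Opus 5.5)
One direction, $\langle EF, R\rangle \geq_p EF$, is exactly \cref{prop:EF_simulates_FP}. It remains to show $EF \geq_p \langle EF, R\rangle$: from an $\langle EF, R\rangle$ refutation $(C, D, \Pi)$ of $F$ we must build, in polynomial time, an Extended Frege refutation of $F$. The idea is to compose the provable reduction with the provable algorithm for $R$. The composite yields an $EF$-provably correct polynomial-size circuit solving $\search_F$, and such a circuit immediately gives a refutation.

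Concretely, let $s$ be the output length of $C$. By the assumption that $R$ is $EF$-provably in $\FP$, for input length $s$ there is a circuit $A$ and a polynomial-size $EF$ proof $\Pi_A$ of $\Eval(A, x', y) \wedge \Eval(V_R,(x',y),b) \rightarrow b$, with $x'$ and $y$ free; I read the stated formula this way, as asserting that the output of $A$ is accepted by $V_R$. The refutation is built as follows.

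\begin{enumerate}
\item Introduce extension variables for the gate values of $C$ on the input variables $x$, calling the outputs $x'$, so that $\Eval(C,x,x')$ is derivable. Similarly introduce extension variables for $A$ on $x'$, giving $y$ and $\Eval(A,x',y)$; for $V_R$ on $(x',y)$, giving $b$; and for $D$ on $(x,y)$, giving $z$. Each $\Eval$ formula is then derivable in polynomial size, since it just checks that consecutive gates are consistent with their extension definitions. Because $C$ and $D$ are concrete circuits, $\isCkt(C)$ and $\isCkt(D)$ are closed true formulas with polynomial-size proofs.
\item Apply the substitution of these extension variables into $\Pi_A$ (substitution instances of $EF$ proofs are $EF$ proofs) and use modus ponens to derive $b$.
\item Substitute likewise into $\Pi$ to obtain $\Reduction_R(C,D,F)$ instantiated at these terms, and use modus ponens with the derived hypotheses to conclude $\unsat_F(x,z)$.
\item From $\unsat_F(x,z)$ and the clauses $C_1,\dots,C_m$ of $F$, derive a contradiction. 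Each clause $C_i$ together with $\neg z_i \vee \neg C_i$ yields $\neg z_i$, and all the $\neg z_i$ together contradict $\bigvee_i z_i$. This costs $O(m\cdot |F|)$ lines.
\end{enumerate}

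All of these pieces have polynomial size in $|F| + |(C,D,\Pi)|$, and they are produced by a polynomial-time map, which gives the simulation.

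I expect the main obstacle to be bookkeeping around the treatment of free variables. The proofs $\Pi$ and $\Pi_A$ prove formulas whose variables $x', y, z, b$ (and possibly the circuit-encoding variables) are universally free, whereas in the composed proof we want them bound to extension variables defined from $x$. The cleanest route is to use the standard fact that $EF$ is closed under substitution of formulas, or of extension variables, for propositional variables, with at most polynomial blowup.

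A related subtlety concerns the circuit encodings. If $\Reduction_R$ treats $C$ and $D$ as variables encoding circuits, they must first be fixed to the constants describing the given circuits, after which $\isCkt$ is verified by evaluation. One must also check that $\Eval$ under these constants is provably equivalent to the gate-by-gate extension definitions. I would address this by noting that this equivalence is local and can be verified gate by gate.
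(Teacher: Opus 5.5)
Your proposal is correct and follows essentially the same route as the paper: substitute $C(x)$ into the $EF$ proof that $A$ solves $R$ to get $V_R(C(x),A(C(x)))$, feed this into $\Pi$ to obtain $\unsat_F(x,D(x,A(C(x))))$, and derive $\neg F$ from it. Your extra bookkeeping fills in details the paper leaves as a sketch. That bookkeeping covers the extension variables for gate values, closure of $EF$ under substitution, and fixing the circuit encodings. It also includes reading the ``$EF$-provably in $\FP$'' hypothesis as an implication ending in $b$, which is the correct reading of the paper's definition.
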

\begin{proof}
    We again sketch the proof and note it can be easily formalized.
    From the previous proposition, we have that $\langle EF, R \rangle \leq_p EF$.
    For the converse, consider an $\langle EF, R \rangle$ proof $(C, D, \Pi)$ of $F$.
    Let $A$ be a polynomial-size circuit solving $R$ such that $\Psi$ is an Extended Frege proof of $\Eval(A, x, y) \land \Eval(V_R, (x, y), b) \land b$, which is equivalent to $V_R(x, A(x))$.
    Then $\Pi$ proves the statement \[V_R(C(x), y) \rightarrow \unsat_F(x, D(x, y)),\] so by substituting $C$ into $\Psi$, we can prove the statement $V_R(C(x), A(C(x)))$ in Extended Frege, and hence also $\unsat_F(x, D(x, A(C(x)))).$
    From this we can derive $\neg F$ in Extended Frege directly.
\end{proof}

As we have mentioned in the introduction, later we will see examples of total search problems $R \in \FP$ but for which if $R$ was \emph{$EF$-provably} in $\FP$, then surprising consequences will hold: $EF$ can polynomially simulate $G_1$ over propositional tautologies.

\section{Implicit Resolution and \texorpdfstring{$G_1$}{G11}}\label{sec:G1}

In this section we will use provable mapping reductions in order to establish the following theorem. 

\begin{theorem}
\label{thm:Res-G1}
	$[EF, \mathrm{Resolution}]$ and $G_1$ are polynomially equivalent.
\end{theorem}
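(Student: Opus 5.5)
The plan is to go through the intermediate system $\langle EF, \iter \rangle$ and prove two polynomial equivalences: $G_1 \equiv_p \langle EF, \iter\rangle$ (\cref{claim:g1_to_mapping}) and $\langle EF, \iter \rangle \equiv_p [EF, \Resolution]$ (\cref{claim:res_eq_pls}). Composing them gives \cref{thm:Res-G1}.

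\emph{Step 1: $G_1 \equiv_p \langle EF, \iter\rangle$.}

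For $G_1 \leq_p \langle EF, \iter\rangle$, take a $G_1$ refutation of $F$. By the standard correspondence, it becomes a $T^1_2$ proof, relativized to the parameters $F$ and the proof itself, of the $\forall\Sigma^b_1$ statement $\forall x \exists z\, \unsat_F(x,z)$. We then apply the Buss--Krajíček witnessing theorem in the $S^1_2$-formalized form of Beckmann--Buss. This produces polynomial-time functions $C$ and $D$ that reduce the search problem to $\iter$. It also gives an $S^1_2$ proof of $V_{\iter}(C(x),y) \rightarrow \unsat_F(x,D(x,y))$, uniformly in the $G_1$ proof. Translating this $S^1_2$ proof into propositional logic yields the polynomial-size $EF$ proof $\Pi$ of $\Reduction_{\iter}(C,D,F)$ that we need.

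For the converse, start from an $\langle EF,\iter\rangle$ refutation $(C,D,\Pi)$. In $G_1$ we can prove totality of $\iter$ on the instance $C(x)$ by $\Sigma^q_1$-induction, mirroring the $T^1_2$ proof that $\PLS$ problems are total: take the least-value or induction argument on the node reachable from $0^\ell$. We combine this with $\Pi$, which $G_1$ simulates since $G_1 \geq_p EF$, to obtain $\exists z\,\unsat_F(x,z)$. A cut against $F(x)$ then refutes $F$.

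\emph{Step 2: $[EF,\Resolution] \leq_p \langle EF,\iter\rangle$.}

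Given $(C,\Pi)$, build an $\iter$ instance whose nodes are line indices of the implicit refutation, ordered in reverse so that the empty clause is node $0^\ell$. The successor of a line $D_i$ under assignment $x$ is defined as in the Prover--Delayer game:
\begin{itemize}
\item if $D_i$ is a resolvent, move to whichever premise is falsified by $x$;
\item if $D_i$ is a weakening, move to its parent;
\item if $D_i$ is a weakening of an axiom of $F$, become a self-loop (sink).
\end{itemize}
If the current line is not actually falsified by $x$, the node also becomes a sink. The output circuit $D$ reads off the axiom attached to a solution node.

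The $EF$ proof of correctness uses $\Pi$, which certifies the local correctness of each line, to argue locally about the three solution types. Every active node is falsified, by induction along successor steps. Predecessor indices always decrease, so the successor map is increasing in the reversed order. Hence the only possible solutions are proper sinks at axiom lines, which are falsified clauses of $F$. All of these are local statements about at most two consecutive steps, which is why they have short $EF$ proofs.

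\emph{Step 3: $\langle EF,\iter\rangle \leq_p [EF,\Resolution]$. This is the main obstacle.}

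Following the converse direction of \cref{thm:kamath-res-iter}, we build an exponential-size resolution proof. For each node $u$ of the $\iter$ instance, and each relevant partial assignment, we include clauses expressing that ``if $x$ is such that $u$ is active with successor $v$, then some clause of $F$ is false''. These clauses are over the $x$-variables together with circuit-value extension-like variables. The clauses are derived backwards from the last node to $0^\ell$ by resolving on the bits of the successor.

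The difficulty is that the resolution proof must use only the variables of $F$, while the successor depends on $x$ through a circuit. I would handle this by indexing lines by pairs (node $u$, full assignment $\alpha\in\bits^n$). The line for $(u,\alpha)$ is the clause falsified exactly by $\alpha$, derived by weakening from the line for $(S(\alpha,u),\alpha)$. At a solution node it is instead derived as a weakening of the axiom $D(\alpha,u)$. The clauses for all $\alpha$ with $u=0^\ell$ are then resolved down to $\bot$ along a complete binary tree over $\alpha$. This gives the highly redundant proof of size $2^{\poly(n)}$.

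Each line of the proof is then computed by a small circuit. The $EF$ proof that the circuit encodes a valid refutation reduces to three local facts: the pointers decrease, the weakenings are correct, and the axiom lines are correct. Each of these follows by instantiating $\Pi$ at $(\alpha,u)$. Getting the encoding and ordering right, so that every local check is a direct consequence of $\Pi$, is where I expect most of the work.
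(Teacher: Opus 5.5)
Your proposal is correct and takes essentially the paper's route. You use the same decomposition through $\langle EF,\iter\rangle$, and the same reflection-plus-formalized-PLS-witnessing argument for $G_1 \equiv_p \langle EF,\iter\rangle$; your direct $G_1$ derivation of the converse is just the propositional form of the paper's ``$\TV^1$ proves reflection for $\langle EF,\iter\rangle$''. You also use the Prover--Delayer successor function of \cref{algo:successor}, and the redundant full-assignment proof of \cref{algo:implicit-proof}, with weakening chains, axiom lines at solutions and a complete tree on top.

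Two small details. Step 2 needs no induction along successor paths, since making unfalsified lines sinks already makes every check local, as you yourself conclude. In Step 3, the lines $(u,\alpha)$ for a sink $u\neq 0^\ell$ that is not a solution must be disabled rather than ``weakened from'' themselves. Your reversal of the line order, which the paper leaves implicit, is indeed needed.
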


This theorem will be proven in two parts over the next subsections:

\begin{theorem}\label{claim:res_eq_pls}
	$\langle EF, \iter\rangle \equiv_p [EF, \mathrm{Resolution}]$.
\end{theorem}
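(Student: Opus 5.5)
We prove the two simulations separately, in each case making the proof of Kamath's \cref{thm:kamath-res-iter} uniform and checking that the correctness argument is a polynomial-size Extended Frege proof.

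\emph{From $[EF,\Resolution]$ to $\langle EF,\iter\rangle$.}
Let $(E,\Pi)$ be an $[EF,\Resolution]$ refutation of $F$, where $E$ encodes lines $D_1,\dots,D_{2^s}$ with $D_{2^s}=\bot$. We write this proof circuit as $E$ to avoid a clash with the reduction circuits. We follow the Prover--Delayer walk, but go from the root upward, reversing indices so that successors are lexicographically increasing.
\begin{itemize}
\item Node $u$ of the $\iter$ instance corresponds to line $2^s-u$; in particular node $0^\ell$ is the empty clause.
\item On input $x$, the circuit $C(x)$ outputs a successor circuit $S_x$. Suppose $u$'s clause is derived by resolution from $D_j,D_k$. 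Then $S_x(u)$ is whichever premise is falsified by $x$, determined by the value of the pivot variable. If the clause is a weakening of $D_j$, then $S_x(u)=j$. If it is an axiom (or weakened axiom) $C_i$, or if $D_u(x)=1$, then $S_x(u)=u$, i.e.\ $u$ is a sink.
\item The circuit $D(x,y)$ reads off the clause index $i$ stored at line $y$ and outputs $e_i$.
\end{itemize}

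The statement $\Reduction_\iter(C,D,F)$ is then proved in $EF$ by case analysis on the three kinds of $\iter$ solutions.
\begin{itemize}
\item $y=0^\ell$ with $S_x(0^\ell)=0^\ell$ is impossible, because $\bot$ is always falsified and is derived by a rule, not stated as an axiom.
\item A decreasing successor is impossible, because $\Pi$ gives that premises have smaller index.
\item At a proper sink $y$, we have $S_x(y)>y$ and $S_x(S_x(y))=S_x(y)$. The invariant ``if $D_u(x)=0$ then $D_{S_x(u)}(x)=0$'' follows from $\Pi$'s local correctness of each rule, instantiated at $u$ and $S_x(u)$. So $D_{S_x(y)}(x)=0$, and the line $S_x(y)$ is a sink only because it is an axiom $C_i$ that is falsified. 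This gives $\unsat_F(x,e_i)$.
\end{itemize}
Each of these is a substitution instance of $\Pi$, so the whole argument has polynomial size. There is a small mismatch: the definition of $\iter$ uses sinks reached from $y$, not $y$ itself. We handle this by letting $D$ also inspect $S_x(y)$.

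\emph{From $\langle EF,\iter\rangle$ to $[EF,\Resolution]$.}
Given $(C,D,\Pi)$, we follow the converse direction of Kamath's argument, which builds a narrow proof. Its lines are clauses saying ``if node $u$ is active (reached) and $S_{C(x)}(u)=v$, then false''. But here these conditions are circuit values, not literals over $x$. So we make the implicit proof have exponentially many lines, one per assignment-cube. For each node $u$ and each total assignment $\alpha$ to $x$ (or each partial assignment arising in a decision-tree-like expansion), a line holds the clause $\neg\alpha$ restricted as needed. That is, we derive, for each $u$ in decreasing lex order, the family of clauses $\{\neg\alpha : \text{$u$ is on the $\iter$ path of } C(\alpha)\}$.
\begin{itemize}
\item Base case: if $u$ is a proper sink for $C(\alpha)$, then by $\Pi$ the index $i$ with $D(\alpha,u)_i=1$ satisfies $C_i(\alpha)=0$. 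So $\neg\alpha$ is a weakening of the axiom $C_i$.
\item Inductive step: if $S(u)>u$ and $S(u)$ is not a sink, then $\neg\alpha$ at node $u$ is a copy (weakening) of the line $\neg\alpha$ at node $S(u)$, which has larger index and so appears earlier.
\item Finally, the $2^n$ full-assignment clauses $\neg\alpha$ at node $0^\ell$ are resolved down a complete binary tree to $\bot$.
\end{itemize}
The circuit $E$ computes each line from its index $(u,\alpha)$ or tree position, using $C$, $D$, and $S$. This is the ``highly redundant'' $2^{\poly(n)}$-size proof.

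The main obstacle is the $EF$ proof that $E$ is correct, i.e.\ that $\ImpProof_\Res(E,F)$ holds. Local checks are each a single line: that the weakening source has a smaller index and that the resolution tree is well formed are trivial. The axiom case uses $\Pi$ instantiated at $(\alpha,u)$, together with the sink conditions. One subtlety must be handled carefully: lines for pairs $(u,\alpha)$ where $u$ is \emph{not} on the path, or where $S(u)<u$. Here the case analysis must exclude decreasing successors, and Kamath's argument gets this from $\Pi$, since any decreasing $y$ would be an $\iter$ solution forcing $\unsat_F$. Such lines are disabled with tag $11$, or made trivial weakenings. The line ordering also has to be arranged so that every referenced premise precedes its use. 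I expect most of the work to go into defining this indexing so that $\ImpProof_\Res$ unfolds into polynomially many instances of $\Pi$ and circuit evaluation facts.
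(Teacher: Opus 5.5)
Your proposal is correct and follows the paper's argument. One direction is the Prover--Delayer successor walk through the lines of the implicit refutation, with $D$ looking one step ahead at $S_x(y)$ as the paper's circuit $G$ does. The other plants a complete tree-like refutation whose leaves $\neg\alpha$ are justified by per-assignment $\iter$ chains ending in axiom weakenings certified by $\Pi$. The only real difference is that you argue directly in $EF$, while the paper formalizes both transformations in $\V^1(\VPV)$ and then propositionally translates. Your index reversal, which makes $\bot$ the node $0^\ell$ and successors increasing, is genuinely needed and is glossed over in the paper's successor function.

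When writing it out, decide whether a line $(u,\alpha)$ is active locally, e.g.\ by $S_{C(\alpha)}(u)\neq u$ or $u=0^\ell$. Do not use membership in the path from $0^\ell$, which is not polynomial-time computable. Your local case analysis already justifies every such line. Decreasing-successor nodes must become axiom weakenings via $D(\alpha,u)$ rather than being disabled, because they can be referenced as premises.
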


\begin{theorem}\label{claim:g1_to_mapping}
	$G_1 \equiv_p \langle EF, \Iter\rangle$.
\end{theorem}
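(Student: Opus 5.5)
We prove the two simulations separately, passing through bounded arithmetic and its propositional translations. The facts we rely on are standard: $G_1$ proves exactly the propositional translations of $\forall\Sigma^b_1$ (indeed $\Sigma^b_1$) consequences of $T^1_2$ (Krajíček–Pudlák). $S^1_2$ proofs translate to polynomial-size $EF$ proofs. $G_1$ proves its own $\Sigma^q_1$-reflection, and this is provable in $S^1_2$ in the appropriate form. Finally, the Buss–Krajíček witnessing theorem for $T^1_2$ by $\PLS$ is provable in $S^1_2$, as shown by Beckmann–Buss \cite{BeckmannB09}.

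\emph{Direction $G_1 \geq_p \langle EF,\iter\rangle$.} Let $(C,D,\Pi)$ be an $\langle EF,\iter\rangle$ refutation of $F$. Since $G_1 \geq_p EF$, we first import $\Pi$, giving a $G_1$ proof of $\Reduction_\iter(C,D,F)$. Next we need a short $G_1$ proof of the totality of $\iter$ on the instance $C(x)$, that is, $\exists y\, V_\iter(C(x),y)$. This is the propositional translation of the $T^1_2$-provable statement ``every $\PLS$ instance has a solution''. The $T^1_2$ argument applies $\Sigma^b_1$-least-number/induction to the formula ``some node reachable from $0^\ell$ has value $\geq i$''. We translate it once with the circuit $C$ as a parameter, and the translation yields $G_1$ proofs of polynomial size. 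Chaining the two proofs gives $\exists z\, \unsat_F(x,z)$. A short cut then yields $\neg F(x)$, because $\unsat_F(x,z)\wedge F(x)$ is propositionally contradictory.

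\emph{Direction $\langle EF,\iter\rangle \geq_p G_1$.} Let $\Pi$ be a $G_1$ refutation of $F$; we treat it as a $G_1$ proof of $\exists z\, \unsat_F(x,z)$, which is just the sequent $\neg F(x)$ weakened. We must produce $(C,D)$ and an $EF$ proof of $\Reduction_\iter(C,D,F)$. We use the formalized witnessing theorem in the following form, which is provable in $S^1_2$: for any $G_1$ proof $\pi$ of a $\Sigma^q_1$ sequent, there are polytime functions $f_\pi$ and $g_\pi$ such that $f_\pi(\pi,x)$ is an $\iter$ instance and, for every solution $y$ of it, $g_\pi(\pi,x,y)$ witnesses the conclusion at $x$. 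The proof inducts over sequents of $\pi$, and cuts on $\Sigma^q_1$ formulas are handled by composing $\PLS$ problems. We then take $C := f_\pi(\Pi,\cdot)$ and $D := g_\pi(\Pi,\cdot,\cdot)$ with $\Pi$ hard-wired as constants. The $S^1_2$ proof is a $\forall\Pi^b_1$-style statement in the parameters $\pi$, $x$, $y$. Its propositional translation gives polynomial-size $EF$ proofs of $\Proof_{G_1}(\pi,\ulcorner F\urcorner) \rightarrow \Reduction_\iter(f_\pi,g_\pi,F)$ in the appropriate encoding. We substitute the actual $\Pi$ and discharge the hypothesis by an $EF$ verification of the concrete proof, which is a routine evaluation of a polytime predicate on fixed input.

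\emph{Main obstacle.} The main obstacle is the second direction. The witnessing must be \emph{uniform in the proof}: its correctness has to be an $S^1_2$ theorem, so that $EF$ gets polynomial-size proofs rather than the nonuniform existence one gets by witnessing each $G_1$ proof separately. The circuits $C$ and $D$ must be definable from $\pi$ so that $\Eval$ of them matches the translated functions. I would first invoke Beckmann–Buss \cite{BeckmannB09} verbatim, rephrased in the two-sorted Cook–Nguyen setting. The remaining bookkeeping is to check that composition of $\PLS$ reductions at cuts can be normalized to a single $\iter$ instance, which is again $S^1_2$-provably correct since $\iter$ is $\PLS$-complete.
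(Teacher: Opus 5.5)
Your proposal is correct in outline, but both directions are organized differently from the paper.

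\textbf{From $\langle EF,\iter\rangle$ refutations to $G_1$ refutations.} You assemble the $G_1$ refutation propositionally. You import the $EF$ proof of $\Reduction_\iter(C,D,F)$, add the translated $T^1_2$ proof of $\exists y\,V_\iter(x',y)$ with $x'$ free, bind $x'$ to $C(x)$, and cut on that $\Sigma^q_1$ formula. The paper instead argues inside $\TV^1$ that the reflection principle of $\langle EF,\iter\rangle$ holds, using $\V^1\vdash\refl_{EF}$ and the $\TV^1$-provable totality of $\iter$. It then invokes the general fact that $G_1$ p-simulates any system whose reflection $\TV^1$ proves. Your argument is that one unrolled at the propositional level, and it is more explicit. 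One small slip: the formula ``some node reachable from $0^\ell$ has value $\geq i$'' is not $\Sigma^b_1$, since reachability can require exponentially long paths. The usual $T^1_2$ argument instead maximizes the polynomial-time predicate ``$y$ is active''; absent a solution, the successor of a maximal active node would again be active.

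\textbf{From $G_1$ refutations to $\langle EF,\iter\rangle$ refutations.} Both routes rest on the same intermediate fact: $\PLS$-witnessing of $G_1$ proofs that is uniform in the proof and $\V^1$-provably correct. After that, $\Pi$ is hard-wired and $\Proof_{G_1}(\Pi,F)$ is discharged by evaluation. The routes differ in how they obtain this fact.
\begin{itemize}
\item The paper never recurses on $\pi$. $\TV^1$ proves totality of the single problem $\mathsf{Wit}(F,\Pi,x)$ via $\Sigma^q_1$-reflection for $G_1$, and $\search_F$ reduces to it. The Cook--Nguyen witnessing theorem, provable in $\V^1(\VPV)$, is then applied to that one fixed first-order proof. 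So $\Pi$ is simply an input of the resulting $\iter$ instance.
\item The formalized witnessing theorems are statements about a fixed first-order proof. So ``invoking Beckmann--Buss verbatim'' for a propositional $\pi$ effectively means taking this same detour through reflection.
\end{itemize}

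If you build the witnessing directly, the part you call bookkeeping is the real content. Composing $\PLS$ instances sequent by sequent at cuts copies a premise's instance once per use, which is exponential on dag-like proofs. That reuse is exactly what separates $G_1$ from $G_1^*\equiv_p EF$. What works is a single polynomial-size local search that reads $\pi$:
\begin{itemize}
\item states are call stacks of frames (sequent, input, phase, stored sub-results);
\item feasibility requires every stored sub-result to verify;
\item the potential is lexicographic.
\end{itemize}
Correctness is then a rule-by-rule local soundness check provable in $S^1_2$, followed by the standard provable reduction to $\iter$.

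The paper's route is shorter because it outsources uniformity to two standard metatheorems. Yours is more self-contained and shows concretely where dag-likeness forces $\PLS$, at the cost of that construction.
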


\subsection{\texorpdfstring{$\langle EF, \iter \rangle$}{<EF, Iter>} is equivalent to \texorpdfstring{$[EF, \mathrm{Resolution}]$}{[EF, Resolution]}}
\label{sec:PLSRes}

In this section we prove \cref{claim:res_eq_pls}, showing  that $\langle EF, \iter \rangle$ is polynomially-equivalent to the implicit Resolution system $[EF, \mathrm{Resolution}]$.
The intuition for this proof is inspired by the known connection between Resolution proofs and reductions to $\iterlong^{dt}$.
This connection, while having roots in results in bounded arithmetic, is perhaps best exemplified by the result of Kamath \cite{kamath2019} (cf.~\cref{thm:kamath-res-iter}), who proved that low-depth decision-tree reductions from $\search_F$ to $\iter^{dt}$ correspond directly to low-width Resolution refutations of $F$.
In some sense, our proof of \cref{claim:res_eq_pls} can be viewed as a formalization in bounded arithmetic of \cref{thm:kamath-res-iter}.

\subsubsection{Setup for Proof of \texorpdfstring{\cref{claim:res_eq_pls}}{Theorem \ref{claim:res_eq_pls}}}\label{sec:setup}

In order to prove \cref{claim:res_eq_pls} we need to show how to simulate $\langle EF, \iter \rangle$ proofs by $[EF, \mathrm{Resolution}]$ proofs and \emph{vice-versa}.
This will require formalizing proofs in Extended Frege, which can be quite technical, so instead we opt to formalize these results in the bounded arithmetic theory $\V^1$ (actually, $\V^1(\VPV)$), and then use the known propositional translations to convert $\V^1$ proofs into Extended Frege proofs.
See Appendix \ref{appendix:BA} for more details on the theories $\V^1$ and $\V^1(VPV)$. 

First, we introduce some formulas in the language of $V^{1}(\VPV)$ which capture the statements $\Reduction_\iter$ and $\ImpProof_{\Res}$.
We recall that $\V^1(\VPV)$ has variables of two sorts: \emph{number sort} variables, represented by lower-case letters $x, y, z, \dots$, and \emph{string sort} variables, represented by uppercase letters $X, Y, Z, \dots$. 
String sort variables $X$ can be indexed into by number sort variables $i$, where $X(i)$ is true if the $i$th entry of $X$ is $1$, and the first entry of the string is $0$.
We note that the string sort variables represent \emph{arbitrary} finite-length strings, and so $|X|$, which is the length of $X$, is defined to be $\max \{i : X(i) = 1\} + 1$.
Hence we can also index into $X$ beyond $|X|$, but the response is always $0$.
Finally, $\V^1(\VPV)$ has a function symbol $A(\vec x, \vec X)$ for every polynomial-time string algorithm $A$. 
We refer to Appendix \ref{appendix:BA} or the monograph \cite{CN2010} for more technical details regarding two-sorted theories of bounded arithmetic.

\paragraph{Circuit Encodings.}

We encode a circuit $C$ with $s$ gates on $n$ inputs by a pair $(n, C)$, where $n$ is the number of inputs and where the gates are numbered from $0, 1, \dots, s+n+1$ 
To simplify our encoding, we express our circuits over the $\{\land, \neg\}$ basis, writing $\lor$ as $\neg \land \neg$. 
Let $\beta(n, m)$ denote the usual pairing function\footnote{In \cite{CN2010} the notation $\langle n, m \rangle$ is used. Since we use $\langle \rangle$ for provable mapping reductions, we opt to use $\beta$ instead.} \cite{CN2010}.
The string $C$ has length $\beta(2(s+n), \beta((s+n), (s+n)))$, where the substring $C^{[0]}$ is of length $2(s+n)$ and encodes the gate sequence of the circuit, and the string $C^{[1]}$ is a two-dimensional string of length $\beta(s+n, s+n)$ and encodes the edge relation of the circuit.
The $s+n$ gate labels are encoded by two bits each, where the first $n$ gates encode the input variables, and the next $s$ gates are either $\land$ gates, $\lor$ gates, or the constants $0$ or $1$.
More formally:
\begin{itemize}
	\item For $i = 0, \dots, n-1$ we set $C^{[0]}(2i)C^{[0]}(2i+1) = 00$, representing that this is an input gate.
	\item For $i = n, \dots, s+n-1$, we break into cases:
		\begin{itemize}
			\item if $C^{[0]}(2i)C^{[0]}(2i+1) = 00$, then this gate is the constant $0$.
			\item If $C^{[0]}(2i)C^{[0]}(2i+1) = 01$, then this gate is the constant $1$.
			\item If $C^{[0]}(2i)C^{[0]}(2i+1) = 10$, then this is an $\land$ gate.
			\item If $C^{[0]}(2i)C^{[0]}(2i+1) = 11$, then this a $\neg$ gate.
		\end{itemize}
\end{itemize}
Finally, for each $i$ and $j$, $C^{[1]}(i,j) = 1$ iff there is a wire connecting the output of gate $i$ into the input of gate $j$. 
Of course, the input variables and constants $0, 1$ do not receive any inputs.
For the sake of brevity, we will often write $C$ instead of the tuple $(n, C)$.
Let $\mathrm{Size}(n, C) = |C^{[0]}|/2$ denote the number of gates in the circuit $C$, including the number of input variables.
Let $\isCkt(n,C)$ denote the relation that is true iff the input properly encodes a circuit with $s = \mathrm{Size}(n, C)$ gates in total and $n$ input gates, and note that this relation is definable in $V^0$.
If $C$ is a circuit of size $s$ encoded as $(n, C)$, then for $o < \mathrm{Size}(C)$ we let $\Eval(n, o, C, X)$ denote the polynomial-time function which evaluates $C$ on the string $X$ and outputs the values of the final $o$ gates as a single string.

%

\paragraph{Encoding Resolution Proofs.}

Let $F(x_1, \dots, x_n) = C_1 \land \cdots \land C_m$ be an unsatisfiable CNF formula, encoded by strings $P, N \in \bits^{m \times n}$ as described above.
The validity of an implicit Resolution refutation of $F$ is encoded by a formula \[\mathrm{ImpProof}_{\Res}(n, m, s, C, P, N)\] which we define now.
The string $C$ will encode a circuit that implicitly encodes a Resolution refutation of $F$, following the definition given in \cref{sec:implicit-proof-systems}.
Each line of the proof is described by a string of length $\ell^* := 2n + 2 + 2s$, which together encode a clause, how the line was derived, and pointers to earlier clauses.
As a function, $C: \bits^s \rightarrow \bits^{\ell^*}$, taking a string as input which indexes a line of the proof, and outputs a string encoding the entire line.

We represent $\mathrm{Proof}_{\Res}$ as a conjunction of polynomial-time computable predicates, each of which is a symbol in the language of $\V^1(\VPV)$.
Below, the formula $F$ is represented by the pair of strings $(P, N)$.
We say a line is \emph{active} if the line is not disabled in the proof.
Formally, we consider the polynomial-time predicate symbols:
\begin{itemize}
    \item $\mathrm{Active}(n, s, C, Y)$ is true if the clause at line $C(Y)$ is active.
    \item $\mathrm{Empty}(n, s, C, Y)$ is true if the clause at line $C(Y)$ is empty.
    \item $\mathrm{isWeaken}(n, s, C, Y)$ is true if the line $C(Y)$ of the proof is a valid weakening of an earlier, active line.
    \item $\mathrm{isResolve}(n, s, C, Y)$ is true if the line $C(Y)$ is a valid Resolution of earlier active lines.
    \item $\mathrm{isAxiomW}(n, m, s, i, P, N, C, Y)$ is true if the clause at line $C(Y)$ is a weakening of the $i$th clause of $F$.
\end{itemize}
If $m$ is a number term then we let $\mathrm{Dyad}(m)$ denote the string encoding $m$ in dyadic notation, and note that $\mathrm{Dyad}(m)$ is a polynomial-time computable function and its basic properties can be formalized in $\V^1$ \cite{CN2010}.

\begin{definition}\label{def:imp_proof}
The formula $\mathrm{ImpProof}_{\Res}(n, m, s, C, P, N)$ is the conjunction of the following formulas:
\begin{itemize}
    \item $\isCkt(n, C)$
    \item $\mathrm{Active}(n, s, C, 1^s) \land \mathrm{Empty}(n, s, C, 1^s)$
    \item $\forall Y < s: \mathrm{Active}(n, s, C, Y) \rightarrow \mathrm{isWeaken}(n, s, C, Y) \lor \mathrm{isResolve}(n, s, C, Y) \lor \exists i < m: \mathrm{isAxiomW}(n, m, s, i, P, N, C, Y)$.
\end{itemize}
\end{definition}

\paragraph{Encoding Mapping Reductions.}

We now define a formula \[\mathrm{Reduction}_\iter(n, m, s, C, D, P, N)\] in the language of $\V^1(\VPV)$ which verifies if the circuits $C$ and $D$ are a non-uniform mapping reduction from $\search_F$ to $\iter$.
For this, we introduce a new formula \[\unsat(n, m, P, N, X, Z),\] which can be viewed as the $\Delta^B_0$-formula encoding  $\unsat_F$  from \cref{def:unsat-cnf}.
This is indeed closely related to the propositional formula $\unsat(C, D, F)$ we introduced in \cref{def:unsat-cnf}, albeit with a slight change in the encoding for the formula $F$.
In this formula, $(P, N)$ are strings of length $\bits^{m \times n}$ encoding a CNF formula $F$ with $n$ variables and $m$ clauses, $X$ is a string encoding an input to $F$, and $Z$ is a string encoding indices to false clauses of $F$ on input $X$.
In particular, applying the propositional translation $||\unsat(n, m, P, N, X, Z)||_{n, m}$ will give essentially the same formula as $\unsat(C, D, F)$, albeit with the formula $F$ encoded by lists of free variables $P$ and $N$.
Formally,
\begin{multline*}
    \unsat(n, m, P, N, X, Z) := (\exists i < m: Z(i)) \land \\
    \forall i < m: Z(i) \rightarrow (\forall j < n: (P^{[i]}(j) \rightarrow \neg X(j)) \land (N^{[i]}(j) \rightarrow X(j)))
\end{multline*}
Using this formula we can define the reduction formula.
\begin{definition}\label{def:reduction-v1vpv}
    Let $\iter$ denote the polynomial-time function symbol encoding the polynomial-time verifier for the $\iterlong$ problem.
    The formula $\Reduction_\iter(n, m, s, C, D, P, N)$ is defined to be the conjunction of the formulas:
    \begin{itemize}
        \item $\isCkt(n, C)$,
        \item $\isCkt(n+s, D)$,
        \item and the formula
        \begin{multline*}
            \forall X < n, Y < s: \iter(s, \Eval(n, s, C, X), Y) \rightarrow\\
            \unsat(n, m, P, N, X, \Eval(n+s, m, D, X \circ Y))
        \end{multline*} 
        where $\circ$ denotes string concatenation.
    \end{itemize}
\end{definition}
We note for the reader that $C$ is a circuit which itself \emph{outputs} the description of a circuit, since the input to $\iter$ is itself a circuit.

\paragraph{Statement of Main Lemmas and Proof of \cref{claim:res_eq_pls}.}

Finally, using the notation introduced in the previous sections, we can now state the main lemmas which formalize the transformations between implicit Resolution proofs and non-uniform reductions inside of $\V^1(\VPV)$.

\begin{lemma}
\label{thm:V1ProvesPLSRes}
    There are polynomial-time algorithms $A$ and $B$ such that $\V^1(\VPV)$ proves
    \[\mathrm{ImpProof}_{\Res}(n, m, s, C, P, N) \rightarrow \mathrm{Reduction}_\iter(n, m, s, A(\vec x, \vec X), B(\vec x, \vec X), P, N), \]
    where $\vec x = (n, m, s)$, and $\vec X = (C, P, N)$.
\end{lemma}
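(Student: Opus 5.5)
We build the reduction by a Prover--Delayer walk through the implicit proof, with the $\iter$ nodes being the lines of the refutation. Reverse the indexing so the empty clause at line $1^s$ becomes the distinguished source: node $u$ corresponds to line $\bar u$, the bitwise complement, so node $0^s$ is the empty clause. Node $0^s$ can never be an $\iter$ solution, so we must ensure $S(x,0^s)\neq 0^s$.

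\textbf{The circuit $A(\vec x,\vec X)$.} It outputs a circuit $S_x$ computing the successor of node $u$ on input $x$:
\begin{itemize}
\item If line $\bar u$ is active and falsified by $x$, and derived by weakening from line $j$, set $S_x(u)=\bar j$.
\item If it is derived by resolving lines $j,k$ on a variable $x_v$, let $S_x(u)$ be whichever of $\bar j,\bar k$ is falsified by $x$, namely the premise containing the literal false under $x$.
\item If it is an axiom weakening, or is inactive, or is satisfied by $x$, make $u$ a fixed point, $S_x(u)=u$.
\end{itemize}
Premises have smaller line indices, so successors strictly increase in the reversed order. A valid $\iter$ solution is then $y$ with $S(y)>y$ and $S(S(y))=S(y)$, unless the first bullet fails.

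\textbf{The circuit $B(\vec x,\vec X)$.} On input $(x,y)$, it reads line $\overline{S_x(y)}$. If that line is an axiom weakening of clause $i$, it outputs $z=e_i$. As a safety default it outputs the indicator of all falsified clauses of $F$ under $x$, which also keeps the formula syntactically well defined.

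\textbf{Proof in $\V^1(\VPV)$.} Fix $X,Y$ with $\iter(s,\Eval(A,X),Y)$ and check the cases of the $\iter$ definition.
\begin{enumerate}
\item The case $y=0^\ell$ with $S(0^s)=0^s$ is impossible. The empty clause is falsified by every $x$, and by $\mathrm{ImpProof}_\Res$ it is active and weakening- or resolution-derived, so $S(0^s)\neq 0^s$. If line $1^s$ is itself an axiom weakening, the empty clause is an axiom and $B$ outputs it.
\item Decreasing successors never occur. The third conjunct of $\mathrm{ImpProof}_\Res$ says premises are earlier active lines, and $\V^1$ can compare the indices.
\item In the proper-sink case, $S(y)$ is a fixed point, and we show by an invariant that every node with a non-self successor is active and falsified.
\end{enumerate}
The invariant follows from the definition of $S$ together with $\mathrm{isWeaken}$ (a weakening of a clause falsified by $x$ has a falsified premise, since the premise's literals are a subset) and $\mathrm{isResolve}$ (if $C\lor D$ is false, then exactly one of $C\lor x_v$, $D\lor\neg x_v$ is false). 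Since $y$ has a non-self successor, node $y$ is active and falsified, hence so is $S(y)$. $S(y)$ is a fixed point while active and falsified, so by the definition of $S$ it must be an axiom weakening of some $C_i$. Then $C_i(x)=0$ and $\unsat$ holds for $B$'s output. All of these facts are $\Delta^B_0$ or $\VPV$-level statements about evaluating fixed circuits, so they are provable in $\V^1(\VPV)$ (indeed in $\VPV$), using basic properties of $\Eval$, including that $\Eval(A(\cdot),X)$ agrees with the described procedure.

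\textbf{Main obstacle.} The hardest part is the bookkeeping. First, the circuit-in-circuit encoding: proving $\isCkt$ for the outputs of $A$ and $B$, and that evaluating $A$'s output circuit agrees with the intended $S_x$. This needs a lemma that a polynomial-time construction of a circuit for a polynomial-time function is provably correct in $\VPV$. Second, padding node strings of length $\ell$ versus line indices of length $s$: nodes outside the range are made fixed points, and we check that they are never solutions.
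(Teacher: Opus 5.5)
Your proposal is correct and takes the same route as the paper: each active, falsified, derived line points to its falsified premise; axiom-weakening, satisfied and disabled lines are fixed points; $B$ reads off the axiom at the sink; and everything is checked by case analysis on the code inside $\V^1(\VPV)$. The one real difference is that you complement the line indices, which puts the empty clause at the distinguished source $0^s$ and makes successors strictly increase. The paper's successor function instead points directly to earlier, lexicographically smaller lines, and under its own definition of $\iter$ that would make every active falsified derived line a decreasing-successor solution, so your reindexing repairs an orientation slip in the written proof.
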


\begin{lemma}
\label{thm:formalizationPLSResConv}
    There is a polynomial-time algorithm $R$ and a number term $t$ such that $\V^1(\VPV)$ proves
    \[ \mathrm{Reduction}_\iter(n, m, s, C, D, P, N) \rightarrow \ImpProof_{\Res}(n, m, t(\vec y, \vec Y), R(\vec y, \vec Y), P, N), \] where $\vec y = (n, m, s)$ and $\vec Y = (C, D, P, N)$.
\end{lemma}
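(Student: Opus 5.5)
The plan is to follow the converse direction of Kamath's argument (\cref{thm:kamath-res-iter}): build from $(C,D)$ an exponentially long, highly redundant Resolution refutation encoded by a polynomial-size circuit $R(\vec y,\vec Y)$, and check each local correctness condition in $\V^1(\VPV)$ under the hypothesis $\Reduction_\iter$.

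\textbf{Construction.} For each input $X$ the reduction gives an $\iter$ instance $S_X=\Eval(n,s,C,X)$ on nodes $\bits^\ell$. I would index proof lines by pairs $(u,\rho)$, where $u\in\bits^\ell$ is a node and $\rho$ is a (complete or partial) assignment to $x_1,\dots,x_n$, together with a few tag bits. The intended clause at line $(u,\rho)$ is the negation of $\rho$ as a conjunction, namely the statement ``if $X$ extends $\rho$ then the walk from $u$ is not yet resolved''. The lines are ordered by node in decreasing lexicographic order, so that lines for larger $u$ come earlier. For a full assignment $X$ and node $u$ I split into cases, computable from $X$ by evaluating $C$:
\begin{itemize}
\item If $u$ is an $\iter$ solution for $S_X$, then $D(X,u)$ names a clause $C_i$ falsified by $X$. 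The clause $\neg X$ is then a weakening of $C_i$ (tag $10$), and this is justified exactly by the third conjunct of $\Reduction_\iter$.
\item Otherwise $S_X(u)>_{lex}u$, and the line for $(u,X)$ is a weakening of the line $(S_X(u),X)$, which appears earlier. Here one must handle sinks carefully: for a non-solution $u$ with $S_X(S_X(u))\neq S_X(u)$, the successor is strictly larger.
\end{itemize}
The refutation then resolves away the $x$-variables by a complete binary tree over partial assignments at the start node $0^\ell$. Lines $(0^\ell,\rho)$ with $|\rho|=k$ are resolved from their two extensions with $|\rho|=k+1$, and the last line $1^{t}$ is the empty clause $(0^\ell,\emptyset)$. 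Unused indices are marked disabled (tag $11$).

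\textbf{Circuit and verification.} I would define the circuit $R$ so that on an index it decodes $(u,\rho,\text{tag})$ and computes the clause and pointers by running $C$ on $\rho$ (only for full $\rho$) and $D$ when needed. This gives a polynomial-size circuit whose size $t$ is polynomial in $n,s,|C|,|D|$. In $\V^1(\VPV)$ I verify $\isCkt$, the final empty line, and the local condition for each active $Y$ by cases:
\begin{itemize}
\item Resolution lines are syntactically valid.
\item Weakening lines point to strictly earlier lines with an identical clause. This uses only lexicographic comparison, $S_X(u)>_{lex}u$, and the ordering of indices.
\item Axiom lines are valid by instantiating the $\forall X,Y$ hypothesis of $\Reduction_\iter$ at $(X,u)$, which yields $\unsat$ and hence some $i$ with $Z(i)$ and $C_i$ falsified.
\end{itemize}
Since each condition is a universally quantified statement over polynomial-time predicates, proved pointwise by case analysis, no induction beyond $\V^1$ is needed.

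\textbf{Main obstacle.} The hardest step is encoding the indexing so that ``pointer to an earlier line'' holds provably. The index order must place larger $\iter$ nodes before smaller ones, and the resolution tree over $\rho$ must come after all node-chain lines. I would concatenate index fields as (block: chain vs.\ tree, complemented node $\bar u$, $\rho$) so that the order comparison reduces to lexicographic facts that $\V^1$ proves easily.

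A second subtlety is the $0^\ell$ case. I must check that when $S_X(0^\ell)=0^\ell$ the solution branch applies, and that a node $u$ is not a valid sink solution exactly when $S_X(S_X(u))\neq S_X(u)$, so that a chain step always exists. I expect this case bookkeeping, rather than any genuinely inductive argument, to dominate the work.
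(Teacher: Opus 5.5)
Your construction is essentially the paper's. It builds, for each full assignment $X$, a chain of copies of the width-$n$ clause $\overline{X}$ along the successor of the $\iter$ instance $\Eval(n,s,C,X)$. Solution nodes become axiom weakenings justified by $D$ and the third conjunct of $\Reduction_\iter$, a complete tree-like refutation over partial assignments sits on top at the start node, and everything is checked pointwise in $\V^1(\VPV)$. The paper indexes chain lines by triples $(X,Y,Y')$, active only when $S(X',Y)=Y'$, instead of computing the successor inside the circuit. It also never checks the requirement that weakening and resolution premises are \emph{earlier} lines. Your complemented-node index layout handles that requirement, and in this respect your plan is more careful than the paper.

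There is one false step in your case split. Take $u\neq 0^\ell$ with $S_X(u)=u$. Then $u$ is not an $\iter$ solution, since none of the three clauses of the definition applies. So $u$ lands in your ``otherwise'' case, but there $S_X(u)>_{lex}u$ fails. The prescribed weakening would point to the line $(u,X)$ itself, which is not earlier. You also cannot fall back on $D(X,u)$, because $\Reduction_\iter$ says nothing about non-solutions. Your criterion ``$u$ is not a sink solution exactly when $S_X(S_X(u))\neq S_X(u)$'' misclassifies exactly these fixed points.

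The repair is one more case: mark the lines $(u,X)$ with $u\neq 0^\ell$ and $S_X(u)=u$ as disabled (tag $11$). No active line uses them:
\begin{itemize}
\item A chain line for a non-solution $u$ points to $v=S_X(u)$, where $v>_{lex}u$ (so $v\neq 0^\ell$) and $S_X(v)\neq v$. So $v$ is never one of the disabled fixed points.
\item The lowest resolution steps of the tree use the lines $(0^\ell,X)$. Each of these is either a solution line or a chain line with such a strict successor.
\end{itemize}
The second point requires reading the third clause of the definition of $\iter$ as also admitting $y=0^\ell$. Totality of $\iter$ requires this reading, and the paper's algorithm uses it implicitly. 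With the extra case, every active line is verified exactly as you describe.

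The paper's algorithm has the mirror-image slip. Its test $S(X',Y')=Y'$ sends fixed points to the axiom branch, where the clause produced by $D$ is not justified.
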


The proofs of these two lemmas will take up the rest of the section.
However, assuming \cref{thm:V1ProvesPLSRes} and \cref{thm:formalizationPLSResConv}, we can now prove \cref{claim:res_eq_pls}. 

\begin{proof}[Proof of \cref{claim:res_eq_pls}]
    This proof is an exercise in the use of propositional translations, so we sketch it and leave a fully formal proof to the reader.

    First, we show that $[EF, \mathrm{Resolution}]$ $p$-simulates $\langle EF, \iter \rangle$.
    Suppose that $(C, D, \Pi)$ is an $\langle EF, \iter \rangle$ refutation of a CNF formula $F(x_1, \dots, x_n) = C_1 \land \cdots \land C_m$, and so $\Pi$ is an Extended Frege proof of the propositional formula $\Reduction_{\iter}(C, D, F)$.
    Let $(P^F, N^F) \in \bits^{m \times n}$ be strings encoding the CNF $F$. 
    By propositionally translating the statement in \cref{thm:formalizationPLSResConv}, we have for any parameters $n', m', s' \in \mathbb{N}$, there are polynomial-size Extended Frege proofs of the implication
    \[ ||\mathrm{Reduction}_\iter(\vec y, \vec Y)||_{n', m', s'} \rightarrow ||\ImpProof_{\Res}(n, m, t(\vec y, \vec Y), R(\vec y, \vec Y), P, N)||_{n', m', s'}, \] 
    where we have used the notation $\vec y$ and $\vec Y$ from \cref{thm:formalizationPLSResConv}.
    From $\Reduction_{\iter}(C, D, F)$, Extended Frege can prove the formula \[||\Reduction_{\iter}(n, m, s, C, D, P^F, N^F)||_{n, m, s}\] obtained by propositionally translating the formula from \cref{def:reduction-v1vpv}.
    Hence Extended Frege can prove \[||\ImpProof_{\Res}(n, m, t(\vec y, \vec Y), R(\vec y, \vec Y), P^F, N^F)||_{n', m', s'}\] and so we have proved that the circuit encoded by $R$ encodes an implicit Resolution refutation of $F$.

    The converse direction is similar. 
    Suppose that $(\Pi, C)$ is an $[EF, \mathrm{Resolution}]$ refutation of $F$, so $\Pi$ is an Extended Frege proof of the propositional translation \[||\ImpProof_{\Res}(n, m, s, C, P^F, N^F)||_{n, m, s}\] for appropriately chosen $n, m, s \in \NN$.
    By applying a similar argument as before, we propositionally translate \cref{thm:V1ProvesPLSRes} to attain an Extended Frege proof of the implication in that lemma, and hence we can derive in Extended Frege the formula \[||\mathrm{Reduction}_{\iter}(n, m, s, A(\vec x, \vec y), B(\vec x, \vec Y), P^F, N^F)||_{n, m, s} \]
    in polynomial size.
    From this formula, and using the fact that $P^F$ and $N^F$ encode $F$, we can derive the propositional version of $\Reduction(A, B, F)$ in size polynomial in $|C|$ and $|\Pi|$ which used in the definition of $\langle EF, \iter \rangle$.

\end{proof}


\subsubsection{Proofs of Main Lemmas}

In the remainder of this section we prove \cref{thm:V1ProvesPLSRes} and \cref{thm:formalizationPLSResConv}, and we begin by proving \cref{thm:V1ProvesPLSRes}.
Our proof is inspired by the equivalence between black-box total search problems efficiently reducible to $\iter^{dt}$ and low-width Resolution refutations, as observed by \cite{kamath2019}, but analogous results were observed in the bounded arithmetic literature, see e.g.~\cite{krajivcek2001weak}.
Indeed, our proof can, in some sense, be viewed as a \emph{formalization} in $\V^1(\VPV)$ of this theorem, although in fact moving to the white-box setting will simplify some steps and complicate others.

\begin{proof}[Proof of \cref{thm:V1ProvesPLSRes}]
    We argue in $\V^1(\VPV)$.
    Assume $\ImpProof_{\Res}(n, m, s, C, P, N)$, and let $\vec x = (n, m, s)$ and $\vec X = (C, P, N)$.
    To remember the types of the various objects, we recall that $(P, N)$ are strings encoding a CNF formula $F(x_1, x_2, \dots, x_n) = C_1 \land \cdots \land C_m$, and $C$ is a circuit implicitly encoding a Resolution refutation of $F$.
    As input, $C$ takes a string $Y \in \bits^s$ and outputs a corresponding line of the proof.
    The algorithms $A$ and $B$ that we define will \emph{output} circuits $S$ and $G$ encoding an instance of $\iter$, such that any solution in the $\iter$ instance can be used to recover a false clause of $F$.

    We now describe the circuits $S$ and $G$ which will be output by the algorithms $A$ and $B$, respectively.
    The circuit $S$, on input $X, Y$ with $X < n$ and $Y < s$, executes \cref{algo:successor}.
    \begin{figure}[htbp]
    \begin{algorithm}[H] 
        \SetAlgoLined
        \KwIn{Strings $X, Y$ with $X < n$ and $Y < s$}
        \KwOut{A string $Y' < s$}
        
        Evaluate $C(Y)$, receiving a line $\mathcal{L}$ of the proof\;
        Let $\mathcal{L} = (P^\mathcal{L}, N^{\mathcal{L}}, T^{\mathcal{L}}, L^{\mathcal{L}}, R^{\mathcal{L}})$ be the components of the line\;
        \If{$\mathrm{SATClause}(n, P^{\mathcal{L}}, N^{\mathcal{L}}, X)$ or $T^{\mathcal{L}} = 10$ or $T^{\mathcal{L}} = 11$}{
            \Return{$Y$} 
        }
        \uIf{$T^{\mathcal{L}} = 00$}{
            \tcc{$C(Y)$ is a weakening, so output the line that was weakened.}
            \Return{$L^{\mathcal{L}}$} 
        } 
        \uElseIf{$T^{\mathcal{L}} = 01$}{
            \tcc{$C(Y)$ is a resolution, so find the false clause} 
            Evaluate $C(L^Y) = \mathcal{L}'$\;
            Let $\mathcal{L}' = (P^{\mathcal{L}'}, N^{\mathcal{L}'}, T^{\mathcal{L}'}, L^{\mathcal{L}'}, R^{\mathcal{L}'})$ be the components of the line\;
            \uIf{$\neg \mathrm{SATClause}(n, P^{\mathcal{L}'}, N^{\mathcal{L}'})$}{
                \Return{$L^{\mathcal{L}}$}
            }
            \uElse{
                \Return{$R^{\mathcal{L}}$}
            }
        }
        
        \caption{Successor Function $S(n, s, X, Y)$}
        \label{algo:successor}
    \end{algorithm}
    \end{figure}

To summarize the algorithm in \cref{algo:successor}, we evaluate the circuit $C$ at the index $Y$ to get a line of the proof $C(Y)$.
If $C(Y)$ is satisfied, or disabled, or an axiom of $F$, then we output $Y$ as the successor.
Otherwise, $C(Y)$ must be falsified, active, and therefore derived by either weakening or Resolution.
We hence evaluate the lines used to derive $C$, and let the successor of $Y$ be the falsified line that was used to derive $C(Y)$.
This algorithm is computable in polynomial time, and the circuit computing $S$ can be computed from $n, s,$ and $C$ in polynomial time as well.
Given $\vec x, \vec X$, the algorithm $A$ outputs the encoding of the circuit computing the function $X \mapsto S(X, Y)$.

The circuit $G$ is even simpler.
Given $X < n, Y < s$, we evaluate $S$ at most twice to see if $Y$ is a solution of the $\iter$ instance encoded by $S$.
If $Y$ is a solution, then by construction, $S(X, Y) = Y'$ is the index of a falsified clause of $F$ on input $X$, and hence $G$ will output a string $Z < m$ indexing the clause falsified by $X$ on this input.
Similarly, the circuit $G$ can be computed from $\vec x$ and $\vec X$ in polynomial time, and so the algorithm $B$ constructs $G$ and outputs the encoding of the circuit computing the function $(X, Y) \mapsto G(X, Y)$.

Finally, we argue that $\V^1(\VPV)$ can prove \[\Reduction_{\Iter}(n, m, s, A(\vec x, \vec X), B(\vec x, \vec X), P, N).\]
By definition, $A$ outputs the code of circuit $S$ and $B$ outputs the code of circuit $G$, and hence $\V^1(\VPV)$ can prove $\isCkt(n, A(\vec x, \vec X))$ and $\isCkt(n+s, B(\vec x, \vec X)).$
Now, let $X < n$ and $Y < s$, and suppose that $\iter(s, \Eval(n, s, A(\vec x, \vec X), Y)$ holds. 
Examining the code of $A(\vec x, \vec X) = S$ and $B(\vec x, \vec X) = G$, we have that $G$ will output a string $Z$ falsifying a clause of $F$, as can be proved using a case argument on the definition of $G$ and $S$.
Hence $\unsat(n, m, P, N, X, \Eval(n+s, m, B(\vec x, \vec X), X \circ Y))$ will hold, and this completes the proof.
\end{proof}

We now prove \cref{thm:formalizationPLSResConv}.
\begin{proof}[Proof of \cref{thm:formalizationPLSResConv}]
    We again reason in $\V^1(\VPV)$.
    Assume that \[\Reduction_\iter(n, m, s, C, D, P, N)\] holds, and so $C$ and $D$ are circuits encoding a reduction from the search problem for the CNF $F$ encoded by $(P, N)$ to $\iter.$
    This means given an input assignment $X < n$, $C(X)$ will output the description of a circuit $S$ that is an instance of $\iterlong$, and given $X < n$, $Y < s$, $D(X, Y)$ will output a solution $Z$ of $\search_F$.
    For simplicity, we write $S(X, Y)$ to be the algorithm which first evaluates $C(X)$, obtaining a circuit $S$, and then evaluates $S(Y)$.

    \begin{figure}[htbp]
    \begin{algorithm}[H] 
        \SetAlgoLined
        \KwIn{Strings $X \leq n+1$, $Y < s, Y' < s$.}
        \KwOut{Strings encoding a line $\mathcal{L} = (P^{\mathcal{L}}, N^{\mathcal{L}}, T^{\mathcal{L}}, L^{\mathcal{L}}, R^{\mathcal{L}})$}

        \If{$X = 1^{n+1}, Y = 1^s, Y' = 1^s$}{
            \tcp{This is a special final ``empty'' clause to satisfy the definition of $\ImpProof_{\mathrm{Res}}$.}
            \Return{$\mathcal{L} = (\varepsilon, \varepsilon, 00, (\varepsilon, \varepsilon, \varepsilon))$} 
        }
        \uIf{$|X| = n+1$}{
            \tcp{We are in the ``first block'', and so we use the successor function $S$ to route clauses.}
            Let $X' = X[0:n-1]$ be the first $n$ bits of $X$\;
            \If{$S(X',Y) \neq Y'$}{
                \tcp{Disable the line.}
                \Return{any line with $T^{\mathcal{L}} = 11$}
            }
            \tcp{Now $S(X',Y) = Y'$}
            \uIf{$|Y'| = |Y| = 0$ or $Y' <_\mathrm{lex} Y$ or $S(X', Y') = Y'$}{
                \tcp{$Y$ is a solution, so $\unsat(n, m, P, N, X, D(X', Y))$ is satisfied and $D(X', Y)$ encodes a clause of $F$ falsified under $X'$.}
                Evaluate $D(X', Y) = Z$\;
                Let $i := \min \{j : Z(j) = 1\}$\;
                \Return{$\mathcal{L} = (P^{\overline{X'}}, N^{\overline X'}, 10, \mathrm{Dyad}(i), 0^s)$}
            }
            \uElseIf{$Y' >_{\mathrm{lex}} Y$}{
                \tcp{$Y$ is not a solution, so we just make it a weakening.}  
                Evaluate $S(X', Y') = Y''$\;
                \Return{$\mathcal{L} = (P^{\overline{X'}}, N^{\overline{X'}}, 00, (X, Y''), 0^s)$}
            }
        }
        \uElse{\tcp{$|X| < n+1$, now we are in the second block, and must embed a complete tree-like Resolution refutation.}
            \If{$|Y| > 0$ or $|Y'| > 0$}{
                \tcp{Disable the line.} 
                \Return{any line $\mathcal{L}$ with $T^{\mathcal{L}} = 11$}
            } 
            \uIf{$|X| = n$}{
                \tcp{In this case we are a weakening of an earlier line}
                Evaluate $S(X, Y') = Y''$\;
                \Return{$\mathcal{L} = (P^{\overline{X}}, N^{\overline{X}}, 00, (X \circ 1, Y''), 0^s)$}
            }
            \uElse{
                \tcp{Now $|X| < n$, so we are performing the complete tree-like Resolution refutation of size $2^n$.} 
                \Return{$\mathcal{L} = (P^{\overline X}, N^{\overline X}, 01, (X \circ 0, 0^s, 0^s), (X \circ 1, 0^s, 0^s)))$} 
            }
        }
        \caption{Implicit Refutation $H(X, Y, Y')$}
        \label{algo:implicit-proof}
    \end{algorithm}
    \end{figure}
    
    Let $\vec y = (n, m, s)$ be the number parameters and $\vec Y = (C, D, P, N)$ be the string parameters.
    We compute an implicit Resolution refutation using the algorithm $H$, whose code appears in \cref{algo:implicit-proof}, although we will give a high-level overview of the structure of the proof now, and a depiction is given in \autoref{fig:resultingRes}.
    
    Each line of the implicit Resolution proof will be indexed by strings $X, Y, Y'$ with $|X| \leq n+1$ and $|Y|, |Y'| < s$. 
    The structure of the refutation will be in two separate blocks.
    In the first block, $|X| = n+1$, and in the second block $|X| \leq n$.
    When $|X| = n+1$, let $X'$ be the first $n$ bits of $X$, and note this can be obtained by simply setting the $n$th bit of $X$ to $0$.
    A line will be active in this case if $S(X', Y) = Y'$ and either $Y' >_\mathrm{lex} Y$ or $Y' = 0$.
    If a line is active, then $X'$ will define the clause encoded at the line by the strings $(P^{\overline X'}, N^{\overline X'})$, which is the unique clause of width $n$ that is falsified by $X'$.
    (We note that for any such $X'$, we can define $P^{\overline X'}, N^{\overline X'}$ using $\Sigma^B_0$-comprehension.)
    The clauses used to deduce this will be determined entirely by the successor function $S$ from the $\iterlong$ instance.
    If $Y$ is a solution in $S$ on the input $X'$, meaning either $S(X', Y) = Y'$ and $S(X', Y') = Y'$, then we know that $X'$ falsifies some clause of $F$, and the particular clause can be recovered using the circuit $D(X', Y)$.
    In this case, we label the line as a weakening of a clause of $F$ and we are done.
    On the other hand, if $Y$ is not a solution, then $S(X', Y) = Y'$ and $S(X', Y') \neq Y'$.
    Now the line indexed by $(X', Y', S(X', Y'))$ must also be active and labelled by the clause $(P^{\overline X'}, N^{\overline X'})$, and so we declare that the line $(X', Y, Y')$ to be a (trivial) weakening of $(X', Y', S(X', Y'))$.

    \begin{figure}[ht]
\centering
    \begin{tikzpicture}[scale=1.1]
            \tikzset{inner sep=0,outer sep=3}

            \tikzstyle{a}=[inner sep=4pt,outer sep=0pt,
            draw=black!40!white, fill=Cerulean!10!white, very thick, rounded corners=6pt, align=center]
            \tikzstyle{b}=[inner sep=2pt, inner ysep=4pt,outer sep=0.5pt,
            draw=black!40!white, fill=Cerulean!10!white, very thick, rounded corners=6pt, align=center]
            \tikzstyle{c}=[inner sep=3pt, inner ysep=4pt,outer sep=0.5pt,
            draw=black!40!white, fill=Cerulean!10!white, very thick, rounded corners=6pt, align=center]
            \tikzstyle{e}=[inner sep=1pt, inner ysep=1pt,outer sep=0.5pt,
            draw=Orange!70!white, fill=yellow!10!white, dashed, thick, rounded corners=6pt, align=center]

            \tikzstyle{f}=[inner sep=5pt, inner ysep=1pt,outer sep=0.5pt,
            draw=Orange!70!white, fill=yellow!10!white, thick, rounded corners=6pt, align=center]

                \draw[f, rounded corners=2] (-2.8,0.9) -- (-1.8,0.9) -- (-1.8,-4) -- (-2.8,-4) -- cycle;

                \draw[f, rounded corners=2] (-0.5,0.9) -- (0.5,0.9) -- (0.5,-4) -- (-0.5,-4) -- cycle;

                \draw[f, rounded corners=2] (2,0.9) -- (3,0.9) -- (3,-4) -- (2,-4) -- cycle;
                
                \draw[a] (-3.7,0.4) -- (3.7,0.4) -- (0,4) --  cycle;

                \draw[a]  (-0.8,2.6) -- (-2,1.6);
                \draw[a]  (-0.8,2.6) -- (-0.7,1.6);

                \draw[a] (0.8,2.6) -- (0.6,1.6);
                \draw[a] (0.8,2.6) -- (1.95,1.6);
                
                \draw[a] (0,3.41) -- (-0.8,2.6) ;
                \draw[a] (0,3.41) -- (0.8,2.6);
                
                \node[a] (L11) at (0,3.41) {$\bot$};

                \node[a] (L21) at (-0.8,2.6) {$x_1$};
                \node[a] (L22) at (0.8,2.6) {$\overline x_1$};

                \node[a] (L32) at (-0.7,1.6) {\small $x_1 \lor x_2$};
                \node[a] (L31) at (-2,1.6) {\small $x_1 \lor \overline x_2$};

                \node[a] (L33) at (0.6,1.6) {\small $\overline x_1 \lor x_2$};
                \node[a] (L34) at (1.95,1.6) {$\overline x_1 \lor \overline x_2$};



                \draw[draw=black!40!white, very thick] plot [smooth, tension=1] coordinates  { (0,-0.5) (0.7, -2) (0,-3.5) };
                
                \node at (-0.5,1) {\large $\vdots$};
                \node at (-1.8,1) {\large $\vdots$};
                \node at (0.5,1) {\large $\vdots$};
                \node at (1.8,1) {\large $\vdots$};

                \draw[draw=black!40!white, very thick] plot [smooth, tension=1.2] coordinates  { (-2.3,0.4) (-3, -0.5) (-2.3,-1.5) };
                
                \draw[draw=black!40!white, very thick] plot [smooth, tension=1.2] coordinates  { (-2.3,-0.5) (-1.5, -1.5) (-2.3,-2.5) };

                \node[f, circle] (a1) at (-2.3,0.4) {$\overline \beta$};

                \node[a, circle] (b1) at (-2.3,-0.5) {$\overline \beta$};

                \node[a, circle] (c1) at (-2.3,-1.5) {$\overline \beta$};

                \node[a, circle] (d1) at (-2.3,-2.5) {$\overline \beta$};
                
                \node[a, circle] (e1) at (-2.3,-3.5) {$\overline \beta$};

                \draw[a] (c1)--(d1);
                \draw[a] (d1)--(e1);

                \node[f, circle] (a) at (0,0.4) {$\overline \alpha$};

                \node[a, circle] (b) at (0,-0.5) {$\overline \alpha$};

                \node[a, circle] (c) at (0,-1.5) {$\overline \alpha$};

                \node[a, circle] (d) at (0,-2.5) {$\overline \alpha$};
                
                \node[a, circle] (e) at (0,-3.5) {$\overline \alpha$};

                \node[f, circle] (a2) at (2.5,0.4) {$\overline \gamma$};

                \node[a, circle] (b2) at (2.5,-0.5) {$\overline \gamma$};

                \node[a, circle] (c2) at (2.5,-1.5) {$\overline \gamma$};

                \node[a, circle] (d2) at (2.5,-2.5) {$\overline \gamma$};

                \draw[a] (d2) -- (0.5,-4.5);
                
                \node[a, circle] (e2) at (2.5,-3.5) {$\overline \gamma$};

                \draw[a] (a) -- (b);

                \draw[a] (c) -- (d);

                \node[a] (cli) at (0.5,-4.5) {$C_i$};

                \node[] () at (1.7,-4.5) {$\ldots$};

                \node[] () at (-0.5,-4.5) {$\ldots$};

                \node[a] (cl1) at (-2.7,-4.5) {$C_1$};
                \node[a] (cl2) at (-1.5,-4.5) {$C_2$};
                
                \node[a] (clm) at (3,-4.5) {$C_m$};

                \draw[a] (d) -- (cl2);
                
                \draw[a] (e) -- (cli);
                \draw[a] (e1)--(cl2);

                \draw[a] (a2) -- (b2);
                \draw[a] (b2) -- (c2);
                \draw[a] (c2) -- (d2);
                \draw[a] (d2) -- (cli);
                \draw[a] (e2) -- (clm);

                \node[] at (-5,-4.5) {$F$};
                \node[] at (-5,-2) {Block 1};
                \node[] at (-5,2) {Block 2};

            \end{tikzpicture} 
            \caption{The structure of the implicit Resolution proof of $F=C_1 \wedge \ldots \wedge C_m$. The second block is a complete tree-like resolution refutation. Each leaf of this complete proof is labeled with a clause $\overline \alpha$ which is the negation of (the conjunct representing) a total assignment $\alpha \in \{0,1\}^n$; this constitutes the first block. At that leaf, the reduction from $\search_F$ to $\Iter$ (marked in yellow) is run on the total assignment $\alpha$ and each node is labeled with $\overline \alpha$. The correctness of the $\Iter$ instance guarantees that we arrive at a clause falsified by $\alpha$.}
            \label{fig:resultingRes}
\end{figure}

    In the second block, $|X| \leq n$.
    Now a line is active only if $|Y| = |Y'| = 0$.
    Here we simply plant the complete tree-like Resolution refutation of size $2^n$.
    The leaves of the refutation will be labelled by strings $|X| = n$, and we similarly declare these to be weakenings of corresponding clauses in the first block, and pointers to the internal clauses can be calculated quite easily.

    Now, we let $R(\vec y, \vec Y)$ denote the algorithm which outputs the circuit encoding $H$ on these input lengths, and we let $t(\vec y, \vec Y) = n+1 + 2s$.
    The algorithm $H$ is polynomial-time computable and so $R$ is a polynomial-time computable algorithm.
    Let us now argue that we can prove $\ImpProof_{\mathrm{Res}}(n, m, t(\vec y, \vec Y), R(\vec y, \vec Y), P, N)$ in $\V^1(\VPV)$.

    From the definition of $\ImpProof_{\mathrm{Res}}$, and from the fact that $t(\vec y, \vec Y) = t = n+2s+1$, we need to prove the following formulas:
    \begin{itemize}
        \item $\isCkt(t, R(\vec y, \vec Y))$
        \item $\mathrm{Active}(n, t, R(\vec y, \vec Y), 1^{t}) \land \mathrm{Empty}(n, t, R(\vec y, \vec Y), 1^{t})$
        \item The formula
        \begin{multline*} \forall Y < t: \mathrm{Active}(n, t, R(\vec y, \vec Y), Y) \rightarrow \\
        \mathrm{isWeaken}(n, t, R(\vec y, \vec Y), Y) \lor \mathrm{isResolve}(n, t, R(\vec y, \vec Y), Y) \lor \\
        \exists i < m: \mathrm{isAxiomW}(n, m, t, i, P, N, R(\vec y, \vec Y), Y).
        \end{multline*} 
    \end{itemize}
    The fact that $R$ outputs a circuit is easily provable in $\V^1(\VPV)$.
    For the remaining statements we must analyze the code of $H$.

    To prove \[\mathrm{Active}(n, t, R(\vec y, \vec Y), 1^{t}) \land \mathrm{Empty}(n, t, R(\vec y, \vec Y), 1^{t}),\] we observe this follows from the first ``If'' statement of the description of \cref{algo:implicit-proof} -- the line is active and empty by definition.

    Similarly we can prove the formula
    \begin{multline*} \forall Y < t: \mathrm{Active}(n, t, R(\vec y, \vec Y), Y) \rightarrow \\
    \mathrm{isWeaken}(n, t, R(\vec y, \vec Y), Y) \lor \mathrm{isResolve}(n, t, R(\vec y, \vec Y), Y) \lor \\
    \exists i < m: \mathrm{isAxiomW}(n, m, t, i, P, N, R(\vec y, \vec Y), Y).
    \end{multline*} 
    using the code of \cref{algo:implicit-proof} and from the assumption that $\Reduction_{\iter}$ holds.
    If a line is active, then by inspection of the algorithm, the step is either a weakening (with tag $00$), a Resolution (with tag $01$), or a weakening of an axiom (with tag $10$).
    In the first two cases, the definition of \cref{algo:implicit-proof} shows that the output lines are correct formed, and hence $\mathrm{isWeaken}$ and $\mathrm{isResolve}$ will hold for these lines.
    In the third case, the line indexed is a solution, and this can only happen if Lines 13-17 execute in \cref{algo:implicit-proof}.
    In this case, we must be sure that the output of $D(X', Y) = Z$ actually indeces into a false clause of the formula under the assignment $X'$.
    Since we have assumed that $\Reduction_{\iter}$ holds, we have (cf.~\cref{def:reduction-v1vpv}) that if $C$ encodes a solution on the input $X', Y$, then the string output by $D(X', Y)$ is indeed an false clause of $F$.
    Hence we can conclude that this is a correct axiom weakening of a clause of $F$. 
\end{proof}

\subsection{\texorpdfstring{$G_1$}{G1} is p-equivalent to  \texorpdfstring{$\langle EF, \Iter\rangle$}{<EF, \Iter>}}\label{sec:g1_imp_res}

In this section we prove \Cref{claim:g1_to_mapping}. Our proof will rely on the well known $\TV^1$/$G_1$-witnessing theorem of \cite{BussK94, skelley_g1, BeckmannB09}, stating that the $\forall\Sigma^B_1$-consequences of $\TV^1$ are witnessed by $\PLS$ functions. This was originally proved in the uniform setting by Buss and Kraj\'i\v cek \cite{BussK94} for the first order $\mathsf{T}^{1}_2$, and later reproved by Cook and Nguyen for $\TV^1$ \cite{CN2010}. A follow-up work of Beckmann and Buss \cite{BeckmannB09} generalized these witnessing theorems for $\TV^i$, $i\geq 1$.

A priori, there is little similarity between $G_1$ and $\langle EF, \Iter \rangle$. However some intuition for the proof can be built by juxtaposing this result with the $\TFNP^{dt}$ characterizations of Tree-like Resolution with $\FP^{dt}$, and bounded-width Resolution with $\Iter^{dt}$. The reason that bounded-width resolution is not similarly characterized by efficient decision-tree protocols is due to the DAG-like nature of the proofs. This same issue precisely identifies the jump from $EF$ to $G_1$. 



Showing $\langle EF, \Iter\rangle \leq_p G_1$ will require reducing the $G_1$ witnessing problem to $\Iter$, provably in Extended Frege. To do this, we formalize a $G_1$-witnessing theorem in Extended Frege by propositionally translating an efficient $\TV^1$-witnessing theorem of \cite{CN2010}. To show $G_1 \leq_p \langle EF, \Iter\rangle$, we prove that $\TV^1$ proves the reflection principle for $\langle EF, \Iter\rangle$, which will amount to proving the reflection principle of $EF$, as well as the totality of $\Iter$.

We begin with some definitions.
\begin{definition}
    The class $\Sigma^q_0 = \Pi^q_0$ consists of all (quantifer-free) propositional formulas. 
    The class of $\Sigma^q_1$ formulas are of the form $\exists x_1,\ldots,x_n  A(x_1,\ldots,x_n,y)$ where $A(x,y) \in \Sigma^q_0$, and $y=y_1,\ldots,y_m$.
    \end{definition}

The system $G$ is a proof system for quantified Boolean formulas, that natural extends the sequent calculus proof system for $\Sigma^q_0$ formulas. 
See Appendix $A$ for a formal description of $G$.
The subsystem $G_1$ is $G$ but with the restriction that all formulas occurring in all sequents of the proof are $\Sigma^q_1$ formulas.


\begin{definition}
    [Witnessing for $G_1$] Let $F= \exists y \,A(x,y)$, where $A$ is a $\Sigma^q_0$ formula, and let $\Pi$ be a $G_1$ proof of $F$. On input $F$, $\Pi$, and assignment $\alpha \in \{0,1\}^n$ to the $x$-variables,  the \emph{$G_1$ witnessing search problem} asks to find  an assignment to the $y$-variables such that $A(x,y)$ holds. If $\Pi$ is not a valid $G_1$-proof of $F$, then output $\bot$. 
\end{definition}
For simplicity, we will often write $\mathsf{Wit}(F,\Pi)$.
We use the following theorem of Cook and Nguyen, which proves $\TV^1$-witnessing in $\V^1$.
\begin{theorem}[Uniform $\TV^1$ Witnessing, \cite{CN2010}]\label{thm:eff_tv1_witnessing}
    Suppose that $\TV^1 \vdash \psi$ for $ \psi \equiv \forall x, X\,\exists y,Y\, \varphi(x,X,y,Y)$, with $\varphi \in \Sigma^B_0$. Then there is an $\Iter$ instance defined by its $\VPV$-predicate \emph{graph}, $F(x,X,y,Y)$,  and $\VPV$-functions $r_1,r_2$, such that \[\V^1(\VPV)\vdash \left[\forall x,X \, \exists y,Y\, F(x,X,y,Y) \right]\supset \varphi(x,X,r_1(x,X,y,Y),r_2(x,X,y,Y))\footnote{In fact, they only need $\V^0$! This is shown by proving $\Iter$ with $\AC^0$ circuits is still complete for $\PLS$.}\] 
\end{theorem}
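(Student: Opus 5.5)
This is the Cook--Nguyen witnessing theorem, cited from \cite{CN2010}; I would reconstruct its proof from the standard argument rather than rely on the paper. The plan is free-cut elimination followed by an inductive witnessing argument over the proof, with every step carried out inside $\V^1(\VPV)$.

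\textbf{Step 1: normalize the proof.} Take a $\TV^1$ proof of $\psi$ and apply free-cut elimination in the two-sorted sequent calculus $\mathrm{LK}^2$. This yields a proof of $\exists y,Y\,\varphi$ in which every formula is $\Sigma^B_1$, i.e.\ a prenex $\Sigma^B_1$ formula, possibly after replacing $\Sigma^B_0$ subformulas by $\VPV$-predicates. Here I take $\TV^1$ with its induction scheme in the form of $\Sigma^B_0$ string-length induction; if it is instead formulated with $\Sigma^B_1$ number induction up to exponential length, the argument is the same.

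\textbf{Step 2: witness each sequent by an $\Iter$ instance.} By induction on the normalized proof, associate to each sequent $\Gamma \to \Delta$ a $\VPV$-definable $\Iter$ instance, parameterized by the free variables and by witnesses for the antecedent, together with $\VPV$ output functions. The invariant is that any $\Iter$ solution, pushed through the output functions, yields a witness for some formula in $\Delta$.

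The cases go as follows. Structural, logical and $\Sigma^B_0$-comprehension steps are handled by polynomial-time compositions of the instances already built. Cuts on $\Sigma^B_1$ formulas are handled by composing two $\Iter$ instances sequentially: the first instance's solution feeds the second instance, which is encoded as a single instance on pairs ordered lexicographically. This composition is the usual proof that $\PLS$ is closed under Turing reductions with a constant number of queries, formalized in $\V^1(\VPV)$.

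The \textbf{main obstacle} is the induction rule. Passing from $\Gamma, A(i) \to A(i+1), \Delta$ to $\Gamma, A(0) \to A(t), \Delta$ requires exponentially many sequential witness updates. I would encode these as a single $\Iter$ instance whose nodes are pairs (step number, current witness), ordered lexicographically. The successor function either advances to the next step using the premise's $\Iter$ instance, or steps through the premise instance's internal search. A sink of this combined instance then yields either a side witness for $\Delta$ or the final witness for $A(t)$. The delicate point is making the nested search a single monotone iteration, which I would do by flattening the triple (step, inner node, witness).

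\textbf{Step 3: verify inside $\V^1(\VPV)$.} Defining the graph $F$ and the functions $r_1,r_2$ is explicit. The remaining work is to show in $\V^1(\VPV)$ that, assuming every $\Iter$ instance has a solution (the hypothesis $\forall x,X\,\exists y,Y\,F$), the extracted value satisfies $\varphi$. This is done by $\Sigma^B_1$-induction on the depth of the (standard, fixed) proof, carried out as a meta-level induction: each rule corresponds to one $\V^1(\VPV)$-provable lemma. The composition lemmas use only polynomial-time reasoning together with the assumed existence of solutions, so $\V^1$, and even $\V^0$ as the footnote notes, suffices.
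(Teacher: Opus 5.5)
The paper gives no proof of this theorem. It imports it from Cook--Nguyen, and the $\V^0$ refinement in the footnote is also only asserted, so there is no in-paper route to compare with. Your sketch is the standard Buss--Kraj\'i\v{c}ek argument: free-cut elimination, sequent-by-sequent $\PLS$ witnessing, cuts by sequential composition, and the induction rule by one flattened $\Iter$ instance on (step, inner node, witness) with the step most significant. In outline it is correct. What it adds over the citation is an explanation of why the \emph{verification} needs only $\V^1(\VPV)$, which is the property \cref{claim:g1_to_mapping} relies on. Three points need repair.

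\textbf{Step 1 misidentifies the theory.} $\TV^1$ is $\Sigma^B_1$ string induction, as in the paper's appendix. $\Sigma^B_0$ string induction is already provable in $\V^1$ by binary search, so adding it gives nothing beyond polynomial-time witnessing. The two are therefore not interchangeable formulations. With a $\Sigma^B_0$ induction formula, the induction case is a binary search followed by one call to the premise's instance, not a $\PLS$ iteration. Your Step 2 induction case, which threads witnesses for $A$ through $2^{|T|}$ steps, is the $\Sigma^B_1$ case, so it is Step 1's wording that must change.

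\textbf{Step 3 must prove exactly the local invariant of Step 2, without assuming solutions exist.} The displayed statement is mis-scoped: $y,Y$ are unbound on the right. The intended content (cf.\ \cref{thm:tv1_witnessing}) is $\V^1(\VPV)\vdash F(x,X,y,Y)\rightarrow\varphi(x,X,r_1(x,X,y,Y),r_2(x,X,y,Y))$, where $F$ says that $(y,Y)$ is a solution. Each composition lemma should therefore read: every solution of the composite instance is, by a case analysis of its successor circuit, a solution of one component instance, or it directly carries a final witness.

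This matters most in the induction case. Arguing forward that a witness for $A(I)$ exists at each step $I$ is an instance of $\Sigma^B_1$-SIND, which is exactly the $\TV^1$ reasoning you must avoid. Moreover, a verification using a totality hypothesis for $\Iter$ is not $\Pi^B_1$. It would not translate into the polynomial-size $EF$ proof of the reduction formula that \cref{claim:g1_to_mapping} needs.

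To make the local argument go through, two details are needed:
\begin{enumerate}
\item Make every node whose witness component is not a genuine witness for $A(I)$, or is malformed, a self-loop. Then the only solutions of the flattened instance are valid nodes whose inner component solves the current inner instance.
\item State the invariant under the hypothesis that the antecedent witnesses are correct.
\end{enumerate}

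\textbf{The remark that even $\V^0$ suffices does not follow from your construction.} Your successor and output functions are $\VPV$ symbols, which $\V^0$ cannot define. The footnote's claim requires a separate reformulation of $\Iter$ with $\AC^0$ successor circuits.
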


We additionally need one more claim for converting propositional unsatisfiable formulas into quantified tautologies.

\begin{claim}\label{lem:conv_to_exists}
 Let $F = \bigwedge_{i=1}^m C_i$ be an unsatisfiable CNF, and suppose $G_1 \vdash \lnot F$. Then \[G_1 \vdash \exists i \, \Big(\llbracket i = 1 \rrbracket \lor \cdots \lor \llbracket i = m \rrbracket\big) \land \big( (\llbracket i = 1 \rrbracket \rightarrow \lnot C_1) \land \cdots \land ( \llbracket i = m \rrbracket \rightarrow \lnot C_m)\Big).\]
    This is the quantified version of the $\unsat_F$ formula of Definition \Cref{def:unsat-cnf}, and it is near identical to the first-order formula $\unsat_F$ used to define $\mathsf{Reduction}_\iter$. We will denote this as $\unsat^q_F$.
\end{claim}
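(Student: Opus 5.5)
The plan is to exhibit the witness for $i$ explicitly and let $G_1$ introduce the existential quantifier from a quantifier-free $EF$-style derivation. Encode $i$ by a block of $\lceil \log m\rceil$ propositional variables, and write $\llbracket i = j\rrbracket$ for the conjunction of literals stating that this block equals the binary representation of $j$. For each bit position $b$, define the propositional formula
\[\sigma_b(x) := \bigvee_{j=1}^{m} \Big(\lnot C_j \land \bigwedge_{k<j} C_k \land \mathrm{bit}_b(j)\Big),\]
where $\mathrm{bit}_b(j)$ is the constant $\top$ or $\bot$. Thus $\sigma(x)$ names the least index of a clause falsified by $x$. These formulas have size $O(m \cdot |F|)$, and in $G_1$ they can be substituted directly, or introduced as extension-style abbreviations through cuts on $\Sigma^q_0$ formulas.

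The derivation then proceeds in three steps.

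\textbf{Step 1.} Derive the sequent $\lnot F \vdash \unsat_F(x,\sigma(x))$ quantifier-free, i.e.\ in $PK$, which $G_1$ contains. Write $E_j := \lnot C_j \land \bigwedge_{k<j} C_k$. From $\lnot F = \bigvee_j \lnot C_j$, a case analysis over $j$ gives $\bigvee_j E_j$, since the first falsified clause exists. The $E_j$ are pairwise exclusive, so each $E_j$ yields $\sigma(x) = j$ bitwise, and hence $\llbracket \sigma = j\rrbracket$. This gives the disjunction $\llbracket\sigma=1\rrbracket \lor \cdots \lor \llbracket\sigma=m\rrbracket$. For the implications $\llbracket \sigma = j\rrbracket \rightarrow \lnot C_j$, note that $\llbracket\sigma=j\rrbracket$ is inconsistent with every $E_{j'}$ for $j'\neq j$, because their bit patterns differ. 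Combined with $\bigvee_{j'} E_{j'}$, this forces $E_j$ and hence $\lnot C_j$. Each of these arguments has size polynomial in $m$ and $|F|$.

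\textbf{Step 2.} Cut with the assumed $G_1$ proof of $\vdash \lnot F$ to obtain $\vdash \unsat_F(x,\sigma(x))$.

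\textbf{Step 3.} Apply the $\exists$-right rule with witness terms $\sigma_b(x)$ for the quantified variables $i$. This yields $\vdash \unsat^q_F$, which is a $\Sigma^q_1$ formula, so the entire proof stays within $G_1$.

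The main obstacle is bookkeeping rather than mathematics. First, the $\exists$-introduction rule of $G$ must allow arbitrary propositional formulas as witnesses; it does in the standard definition of $G$ in Appendix A. Second, the exclusivity argument in Step 1 must have polynomial size, so it should be organized as $m^2$ small pairwise contradictions rather than anything exponential. If the encoding of $i$ used for $\llbracket i=j\rrbracket$ were unary instead, the same argument goes through even more easily, with $\sigma_j := E_j$.
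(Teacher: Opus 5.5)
Your argument is correct, and it is essentially the route the paper intends. The paper gives no separate proof of this claim, but its sketch of the analogous Extended Frege fact in \cref{clm:EF_properties}(1) uses the same idea: explicit quantifier-free witnesses $z_i \leftrightarrow \lnot C_i$, from which $\unsat_F$ is derived given $\lnot F$; in $G_1$ one then finishes exactly as you do, by cutting against the given proof of $\lnot F$ and applying $\exists$-right.

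The differences are cosmetic. With the unary encoding of \cref{def:unsat-cnf}, the simpler witnesses $\lnot C_i$ already suffice, so no least-index selection or exclusivity argument is needed, since $\unsat_F$ does not demand a unique index. Your Step 3 uses the standard $\exists$-right rule with a quantifier-free witness formula $B$. That is the correct reading of Appendix A, even though the rule is printed there with $A(p)$, which read literally would allow only a variable witness.
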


\begin{lemma}\label{lem:wit_total}
    $\mathsf{Wit}(F, \Pi, x)$ is provably total in $\TV^1$.
\end{lemma}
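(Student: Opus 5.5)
Formalized in $\TV^1$, the statement is
\[\forall F, \Pi, x\; \exists Y\; \big(\neg \mathrm{Proof}_{G_1}(F,\Pi) \lor A(x,Y)\big),\]
where $F = \exists y\, A(x,y)$. This is a $\forall\Sigma^B_1$ formula, so it suffices to show that $\TV^1$ proves the $\Sigma^q_1$-reflection principle for $G_1$ (cf.~\cref{def:reflection}). That is the classical fact underlying the correspondence between $\TV^1$ and $G_1$, and I would establish it directly, by induction on the lines of the proof $\Pi$.

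Fix $F$, $\Pi$ and $x$, and assume $\Pi$ is a valid $G_1$-proof of $F$; if it is not, the output $\bot$ is a valid answer and we are done. Write $\Pi = S_1,\dots,S_k$ as a sequence of sequents, each consisting of $\Sigma^q_1$ formulas, with free variables possibly introduced by eigenvariable and quantifier rules. Consider the statement $\Phi(j)$: for every assignment $\rho$ to the free variables of $S_1,\dots,S_j$ (extending $x$), each sequent $S_i$ with $i \le j$ is true under $\rho$. Truth of a sequent $\Gamma \to \Delta$ of $\Sigma^q_1$ formulas means that either some antecedent formula fails for all witnesses, or some succedent formula has a witness. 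This is a $\Pi^B_2$-ish statement, because of the universal quantifier over $\rho$ and over witnesses of the antecedents.

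To keep within $\TV^1$, which has $\Sigma^B_1$-induction, I would instead induct on a $\Sigma^B_1$ formula built from Skolemized witnesses. The standard approach (following Kraj\'i\v cek/Cook--Nguyen) replaces the universal quantifier by treating the antecedent witnesses as free string parameters. Concretely, for each sequent the induction hypothesis becomes: there exists a string $W$ coding, for the fixed assignment and fixed antecedent witnesses, witnesses for a true succedent formula or an index of a false antecedent. Since cuts in $G_1$ are on $\Sigma^q_1$ formulas, the cut rule needs the witness from the left premise to be fed into the right premise. This produces a dependency across lines, which is exactly the dag-like structure that forces $\PLS$/$\TV^1$ strength rather than $\V^1$.

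I expect the main obstacle to be this nonuniformity of assignments across lines. Each line's witness depends on the assignment to eigenvariables, which in turn comes from the witnesses of other lines, so a simple induction on $j$ with one fixed $\rho$ does not suffice. The fix I would try first is the one in \cite{CN2010}: prove in $\V^1$ that the $G_1$-proof yields a polynomial-time local improvement ($\Iter$) instance whose solutions witness $F$ at $x$. This amounts to propositional versions of \cref{thm:eff_tv1_witnessing}, formalized inside $\V^1(\VPV)$, with the proof $\Pi$ given as a parameter. Then I would invoke that $\TV^1$ proves totality of $\Iter$, by $\Sigma^B_1$ minimization over the lexicographically least active node reachable, to obtain a solution and hence a witness $Y$. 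Soundness of each rule, including the verification that $\Iter$ solutions map to valid witnesses, is a routine case analysis over the rules of $G$.
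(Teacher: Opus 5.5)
Your first paragraph is already the paper's entire argument. The paper likewise treats totality of $\mathsf{Wit}$ as an instance of $\Sigma^q_1$-$\refl_{G_1}$, and it simply cites $\TV^1 \vdash \Sigma^q_1$-$\refl_{G_1}$ from \cite{Krajicek2019, CN2010}.

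Your formulation is, if anything, cleaner. You quantify over $F,\Pi,x$ and use reflection as one uniform theorem of $\TV^1$. The paper instead phrases the argument per fixed $(F,\Pi)$, via $\V^1 \vdash \mathsf{Proof}_{G_1}(F,\Pi)$. The uniform version is what \cref{claim:g1_to_mapping} actually needs to get $EF$ proofs of size polynomial in $|\Pi|$.

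Where you genuinely depart is in proving the cited fact rather than citing it. You do this via a $\V^1(\VPV)$-formalized, uniform-in-$\Pi$ $\PLS$ witnessing of $G_1$ proofs, followed by $\TV^1$-totality of $\Iter$. That route works and buys more: it directly yields a $\V^1$-provable reduction from $\mathsf{Wit}$ to $\Iter$. That would make the later detour through \cref{thm:eff_tv1_witnessing} in the proof of \cref{claim:g1_to_mapping} unnecessary.

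If you pursue that route, three cautions apply.
\begin{enumerate}
\item You cannot extract the uniform $\Iter$ instance by applying \cref{thm:eff_tv1_witnessing}. That would presuppose the $\TV^1$-totality you are trying to prove, so the instance has to be built by hand.
\item Building it is not merely a rule-by-rule case analysis. In a dag-like proof a sequent is reused with different eigenvariable values and antecedent witnesses, so composing per-rule searches unfolds into an exponentially large call tree. The standard fix is a single $\Iter$ instance whose nodes are call stacks of a depth-first evaluation of that tree, with depth at most the number of lines. The stacks are encoded so that each step is lexicographically increasing, and locally inconsistent stacks are made self-loops. Only then does correctness reduce to a local case analysis over the rules.
\item $\TV^1$ proves $\Iter$ total by taking the lexicographically largest $u$ with $S(u) >_{lex} u$; this is string maximization, available from $\Sigma^B_1$-string induction. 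It is not done by minimizing over ``reachable'' nodes. Reachability along exponentially long paths is not $\Sigma^B_1$-expressible, and the least reachable node is just the source.
\end{enumerate}
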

\begin{proof}
    We need the following well-known facts.
    \begin{enumerate}
        \item  For fixed $(F,\Pi)$ where $\Pi$ is a valid $G_1$-proof of $\Sigma^q_1$-formula $F$, $\V^1 \vdash \mathsf{Proof}_{G_1}(F, \Pi)$, 
        \item $\TV^1 \vdash \Sigma^q_1-\mathsf{Refl}_{G_1}(F,\Pi,x)$
        
    \end{enumerate}
See \cite{Krajicek2019,CN2010} for proofs of these facts. As well, see Appendix \ref{appendix:BA} for a brief discussion of $\Sigma^q_1$-$\refl_P$.

By definition of $\mathsf{Wit}(F,\Pi,x)$, if $\Pi$ is not a valid proof of $F$ then we output $\bot$. For $F,\Pi$ satisfying $\mathsf{Proof}_{G_1}(F,\Pi)$, we additionally have $\TV^1 \vdash \mathsf{Proof}_{G_1}(F,\Pi)$. Let $F \equiv \exists y \varphi(x,y)$, for $\varphi\in\Sigma^q_0$. As $\TV^1$ proves $G_1$ reflection, if $G_1 \vdash F$, then $\TV^1 \vdash \exists Y \, \varphi(X, Y)$. 

To complete the proof, we only need to give a $\V^1(\VPV)$ definition of $\mathsf{Wit}(F,\Pi)$. We need the following standard $\VPV$ predicates and functions\footnote{With more work, these are further defined in $\mathsf{VTC}^0$, \cite{CN2010}}:

\begin{itemize}
    \item $\mathsf{Fmla}(F)$ --- Expresses that $F$ is a valid quantified propositional formula,
    \item $\mathsf{Sub}(F,X,W)$ --- Outputs $F$ with the terms $X$ substituted into free variables, and $W$ substituted into existentially quantified variables, removing the quantifiers
\end{itemize}

 We may define $\mathsf{Wit}(F,\Pi,X) \equiv \mathsf{Fmla}(F) \supset \exists Y\, \mathsf{SAT}(\mathsf{Sub}(F,X,Y))$. Given this definition, it is clear that since $\TV^1 \vdash \exists Y \, \varphi(X, Y)$, then $\TV^1 \vdash \forall X  \mathsf{Wit}(F,\Pi,X)$. 
\end{proof}
\begin{lemma}\label{lem:search_to_wit}
     $\search_F$ reduces to $\mathsf{Wit}_{F, \Pi}$ in $\V^1$. That is, for $\Pi$ a $G_1$ refutation of CNF $F$, $\V_1 \vdash \mathsf{Reduction}_{\mathsf{Wit}_{F, \Pi}}$.
\end{lemma}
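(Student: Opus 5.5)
The plan is to exhibit an explicit pair of circuits $(C,D)$ forming a mapping reduction from $\search_F$ to $\mathsf{Wit}_{F,\Pi}$, and to verify its correctness in $\V^1$ (indeed in $\V^1(\VPV)$, then conservatively in $\V^1$). First, given the $G_1$ refutation $\Pi$ of $F$, i.e.\ a $G_1$ proof of $\lnot F$, apply \cref{lem:conv_to_exists} to obtain a $G_1$ proof $\Pi'$ of the $\Sigma^q_1$ formula $\unsat^q_F = \exists i\,(\ldots)$. This conversion is a fixed polynomial-time transformation of $\Pi$. I then take $\mathsf{Wit}_{F,\Pi}$ to mean the witnessing problem for the fixed pair $(\unsat^q_F,\Pi')$.

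The reduction is as follows. The circuit $C$ maps an assignment $x\in\bits^n$ to the instance $(\unsat^q_F,\Pi',x)$; the first two components are hard-wired constants and $x$ is copied. The circuit $D$, given $x$ and a purported witness $y$ (an assignment to the existentially quantified bits encoding $i$), decodes $i$ from $y$ and outputs the string $z\in\bits^m$ with $z_i=1$ and all other bits $0$. If $y=\bot$ is returned, $D$ outputs anything.

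Next I verify $\Reduction_{\mathsf{Wit}}(C,D,F)$ in $\V^1(\VPV)$. The facts $\isCkt(C)$ and $\isCkt(D)$ are immediate, since $C$ and $D$ are defined by explicit $\VPV$ terms. For the main implication, assume $\mathsf{Wit}(C(x),y)$ holds, and split into two cases according to the definition of the witnessing relation.

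In the first case, $y=\bot$. This is allowed only when $\Pi'$ is not a valid $G_1$ proof of $\unsat^q_F$. But $\Pi'$ is a concrete valid proof, and validity is a $\VPV$ (polynomial-time) predicate evaluated on a fixed standard string. So $\V^1$ proves $\mathsf{Proof}_{G_1}(\unsat^q_F,\Pi')$, by the first fact quoted in the proof of \cref{lem:wit_total}, and this case is excluded.

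In the second case, $y$ is a genuine witness, so $\mathsf{SAT}(\mathsf{Sub}(\unsat^q_F,x,y))$ holds. Unfolding $\mathsf{Sub}$ on this explicit formula shows that $y$ encodes some $i\le m$ with $\lnot C_i(x)$. Consequently the output $z=D(x,y)$ satisfies $\bigvee_j z_j$ and $z_j\to\lnot C_j(x)$ for every $j$, which is exactly $\unsat(n,m,P,N,x,z)$. This step is a $\Sigma^B_0$ case analysis on the bits of $i$, available already in $\V^0$.

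I expect the main obstacle to be the bookkeeping around the encodings rather than any mathematical difficulty. Specifically, one must show inside the theory that the semantic evaluation of $\mathsf{Sub}(\unsat^q_F,x,y)$ coincides with the first-order $\Delta^B_0$ formula $\unsat$. This requires an induction on the structure of the fixed formula, which is $\Sigma^B_0$-definable, so $\V^1$ suffices. A second point needs care: the uniform statement (for all $F,\Pi$) requires that \cref{lem:conv_to_exists} be given by a $\VPV$ function whose correctness $\V^1$ proves. I would handle this by proving in $\V^1$ that the conversion preserves $G_1$-proof validity, since it only appends a polynomial number of $\exists$-introduction and cut steps.
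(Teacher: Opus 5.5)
Your proposal is correct and matches the paper's proof: the same reduction (hard-wire $\unsat^q_F$ and the proof, pass the assignment through, and map a witness $i$ to the $i$th clause), verified in $\V^1$ by unfolding $\mathsf{SAT}(\mathsf{Sub}(\unsat^q_F,X,i))\supset\lnot\mathsf{Clause}(F,i)$. Beyond what the paper writes, you also rule out the $\bot$ output using provable validity of the hard-wired proof, and you flag the uniformity and conversion bookkeeping; the paper leaves both implicit.
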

\begin{proof}
    The two polytime functions $A(F, X)$, $B(F, X, i)$ in the reductions are the obvious ones: $A(F, X)$ maps $F$ to $\text{UNSAT}^q_F$, with $X$ substituted in for the variables, and $B(F, X, i)$ maps index $i \in [m]$ to clause $C_m$ of $F$. $\V^1$ proves these functions form a valid reduction to $\mathsf{Wit}(F,\Pi)$ by proving $\forall i \leq m\, \mathsf{SAT}(\mathsf{Sub}(\text{Unsat}^q_F, X, i)) \supset \lnot \mathsf{Clause}(F,i)$. This follows by definition of $\text{UNSAT}^q_F$.
\end{proof}
We are now ready to prove the main theorem of this subsection, the equivalence between $G_1$ and $\langle EF,\Iter\rangle$.

\begin{proof}[Proof of \Cref{claim:g1_to_mapping}]
	We begin by proving $G_1 \geq_p \langle EF, \iter\rangle $. Suppose there is a $G_1$ refutation of an unsatisfiable CNF formula $F$. Then by \Cref{lem:conv_to_exists}, $G_1 \vdash \text{UNSAT}^q_F$. $\V^1$ proves $\Reduction_\iter(n,m,C,D,P,N)$ for circuits $C,D$ defined by combining the reduction of $\search_F$ to $\mathsf{Wit}(F,\Pi)$ (\Cref{lem:conv_to_exists}) with the provable totality of $\mathsf{Wit}(F,\Pi)$ in $\TV^1$ (\Cref{lem:wit_total}) and $\TV^1$ witnessing (\Cref{thm:eff_tv1_witnessing}). Applying the standard propositional translation for $\V^1$, we get that there is an $EF$ proof $\tau$ of $\|\mathsf{Reduction}_\iter\|$, with size $|\tau| = O(|\Pi|^k)$, for some fixed $k$ dependent only on the proof of \Cref{thm:eff_tv1_witnessing}. 

    Next, we prove the converse direction, that $G_1 \leq_p \langle EF, \iter\rangle$.  It is well known that if the theory $\mathsf{TV}^1$ proves the reflection principle of a propositional proof system $P$, then $P \geq_p G_1$ \cite{krajicek1995bounded}. Hence, we only need to show that $\mathsf{TV}^1$ proves the reflection principle of $\langle EF, \iter\rangle$. We give the explicit formula with a \emph{refutation} predicate $\mathsf{Ref}$ indicating a valid refutation, and a \emph{proof} predicate $\mathsf{Proof}$ indicating a valid proof. We need both to sensibly define reflection for $\langle EF, \iter \rangle$.

    \[\mathsf{Refl}_{\langle EF, \iter\rangle} \coloneq \| \mathsf{Ref}_{\langle EF, \iter \rangle}(F, C, D, \Pi) \supset  \lnot \mathsf{SAT}(F, X) \|,\]
    where $\mathsf{Ref}_{\langle EF, \iter \rangle}(F, C, D, \Pi) \coloneq \mathsf{Proof}_{EF}(\Reduction_\iter(n,m,C,D,P,N), \Pi)$ and $X$ is an assignment to the variables of $F$.
    
    Assume $\mathsf{Ref}_{\langle EF, \iter \rangle}(F, C, D, \Pi)$ is true. Then, since $\V^1/\TV^1 \vdash \refl_{EF}$, we have $$\TV^1 \vdash \mathsf{SAT}(\|\Reduction_\iter(n,m,C,D,P,N), X)\|).$$ 
    As a reminder, the formula $\Reduction_\iter(n, m, s, C, D, P, N)$ is defined to be the conjunction of the formulas:
    \begin{itemize}
        \item $\isCkt(n, C)$,
        \item $\isCkt(n+s, D)$,
        \item and the formula
        \begin{multline*}
            \forall X < n, Y < s: \iter(s, \Eval(n, s, C, X), Y) \rightarrow\\
            \unsat(n, m, P, N, X, \Eval(n+s, m, D, X \circ Y)).
        \end{multline*} 
    \end{itemize}
    Since $\TV^1$ proves the totality of $\Iter$, we can cut on the formula 
            $\forall X < n, Y < s: \iter(s, \Eval(n, s, C, X), Y) $  to derive $\unsat_F$, which is provably equivalent to $\lnot \mathsf{SAT}(F,X)$ in $\V^1$.
\end{proof}

\paragraph{A Comparison with $G^*_1$ and $\FP$-Witnessing.}

Wang \cite{Wang13g} showed that $[EF, \text{TreeRes}]$ is polynomially equivalent to Extended Frege. Our proof simplifies and generalizes their proof strategy, which we summarize below.

To show that $[EF, \text{TreeRes}] \geq_p EF$, the same strategy as ours is used, by setting the succinct tree-like Resolution proof to be the trivial exponential size proof, and using the original Extended Frege proof to show that the circuit is valid. For the reverse of $[EF, \text{TreeRes}] \leq_p EF$, Wang roughly does the following: 
\begin{enumerate}
    \item Let $C$ succinctly represent proof $\Pi$, and $\Pi_C$ be an $EF$ proof of the validity of $C$. Use extension variables on the evaluation of $C$ to define a path of polynomial length through the succinctly representing tree-like resolution proof $\Pi$, from root to axiom,
    \item As $EF$ proves the reflection principle of tree-like resolution, there must be a root-to-leaf path from $\bot$ to a falsified axiom of unsat CNF $F$. 
    \item Combine with $\Pi_C$ to get a complete refutation of $F$.
\end{enumerate}

This proof implicitly runs through $G^*_1/\V^1$-witnessing without any mention, as the algorithm used to witness a root-to-leaf path finding a falsified clause is going to be the same witnessing algorithm suggested in the $\V^1$-witnessing of \cite{CN2010}. Our proof of $[EF,Res] = G_1$ identifies that \emph{efficiently provable witnessing} is the lynchpin of these equivalences.

\subsection{Generalizing to \texorpdfstring{$G_i$}{Gi}}

By combining known witnessing theorems in the literature, as well as the connection between quantified proof systems $G_i$ and the theories $T^i_2$ \cite{krajicek1990quantified}, we can extend our main result to stronger systems.
Kraj\'i\v cek, Skelley, and Thapen \cite{krajivcek2007np}, as well as Skelley-Thapen \cite{skelley2007provably}, characterize the $\forall\Sigma^B_1$-consequences of $\TV^i$ by the \emph{game induction principles}, $\mathsf{GI}_i$, based on $i$-turn games.\footnote{Note that \cite{krajivcek2007np} do this for the first order theories $\T^i_2$, which are equivalent to $\TV^i$ by the RSUV isomorphism \cite{CN2010}.} In parallel work of Beckmann and Buss \cite{BeckmannB09}, they characterize the $\forall\Sigma^B_k$-consequences of $\TV^i$, $0 \leq k \leq i$ by proving a general witnessing theorem. 
Moreover, they show their witnessing theorem is efficiently provable, like \Cref{thm:eff_tv1_witnessing}, in $\V^1$. 
By combining these two works, we can generalize our main theorem to the following.

\begin{theorem}
    $G_i \equiv_p \langle EF, \mathsf{GI}_i\rangle$.
\end{theorem}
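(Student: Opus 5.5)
The plan is to copy the two-direction argument used in the proof of \Cref{claim:g1_to_mapping}, replacing $\TV^1$ by $\TV^i$, $G_1$ by $G_i$, and $\Iter$ by $\mathsf{GI}_i$. The two inputs are the $\TV^i$/$G_i$ correspondence of \cite{krajicek1990quantified} and the game-induction witnessing theorem of \cite{krajivcek2007np, skelley2007provably}. The latter is taken in the form made provable inside $\V^1(\VPV)$ by \cite{BeckmannB09}. That form says: if $\TV^i \vdash \forall x, X\, \exists y, Y\, \varphi$ with $\varphi \in \Sigma^B_0$, then there are a $\mathsf{GI}_i$ instance and $\VPV$-functions $r_1, r_2$ such that $\V^1(\VPV)$ proves that any solution of the instance, mapped through $r_1, r_2$, witnesses $\varphi$. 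This is the analogue of \Cref{thm:eff_tv1_witnessing}.

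\textbf{$G_i \geq_p \langle EF, \mathsf{GI}_i\rangle$ (a $G_i$ refutation yields a provable reduction).}
1. Given a $G_i$ refutation $\Pi$ of $F$, I first turn it into a $G_i$ proof of $\unsat^q_F$. The argument is that of \Cref{lem:conv_to_exists}, which works verbatim in $G_i$.
2. I define $\mathsf{Wit}_i(F,\Pi,X)$ as in \Cref{lem:wit_total}. I show it is provably total in $\TV^i$, since $\TV^i$ proves $\Sigma^q_1$-$\refl_{G_i}$. This is standard, and is the analogue of fact~2 in that lemma.
3. \Cref{lem:search_to_wit} applies unchanged, because it never used anything specific to $G_1$. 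So $\V^1$ proves that $\search_F$ reduces to $\mathsf{Wit}_i$.
4. Composing with the witnessing theorem, $\V^1(\VPV)$ proves $\Reduction_{\mathsf{GI}_i}(n,m,s,C,D,P,N)$ for explicit polynomial-time circuits $C, D$ built from $F$ and $\Pi$.
5. The standard propositional translation of $\V^1(\VPV)$ into $EF$ then gives an $EF$ proof of polynomial size in $|\Pi|$, as in the proof of \Cref{claim:res_eq_pls}.

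\textbf{$G_i \leq_p \langle EF, \mathsf{GI}_i\rangle$.} I use the fact from \cite{krajicek1995bounded} that if $\TV^i$ proves the reflection principle of a system $P$, then $P \geq_p G_i$, so it suffices to show $\TV^i \vdash \refl_{\langle EF, \mathsf{GI}_i\rangle}$.
1. Assume $\mathsf{Ref}_{\langle EF,\mathsf{GI}_i\rangle}(F,C,D,\Pi)$.
2. Since $\V^1 \subseteq \TV^i$ proves $\refl_{EF}$, the formula $\Reduction_{\mathsf{GI}_i}$ holds at every assignment.
3. $\TV^i$ proves the totality of $\mathsf{GI}_i$; this is the easy direction of the Skelley--Thapen characterization. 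So $\TV^i$ proves that $\mathsf{GI}_i$ has a solution $Y$ on the instance $C(X)$.
4. Cutting on this, as in the last step of the proof of \Cref{claim:g1_to_mapping}, yields $\unsat_F(X, D(X,Y))$ and hence $\lnot\mathsf{SAT}(F,X)$.

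\textbf{Expected obstacle: the uniformity in the first direction.} The witnessing theorems of \cite{krajivcek2007np, skelley2007provably} are stated for a fixed provable formula. Here I need a single $\mathsf{GI}_i$ instance that is uniform in the parameters $(F, \Pi)$, since these are inputs to the reduction, together with a $\V^1$-proof of its correctness. My approach is to apply witnessing once to the single $\TV^i$ theorem $\forall F, \Pi, X\, \mathsf{Wit}_i(F,\Pi,X)$, so that $F, \Pi$ are just string inputs. I would then rely on the fact from \cite{BeckmannB09} that the witnessing is formalizable in $\V^1$. A further point to check is that their $\Sigma^b_1$-level witnessing problem matches the game induction problem $\mathsf{GI}_i$ up to $\V^1$-provable mapping reductions. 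Those reductions can be composed with the circuits $C, D$ without affecting provability, since $\V^1(\VPV)$ proves that compositions of provable mapping reductions are provable.
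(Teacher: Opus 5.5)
Your proposal is correct and follows the paper's own route: the $G_1$ template carried over to $\TV^i$, $\Sigma^q_1$-reflection for $G_i$, the $\V^1$-provable Beckmann--Buss witnessing theorem, and the $\TV^i$-provable totality of $\mathsf{GI}_i$, with the one extra ingredient the paper also invokes, namely $\V^1$-provable mapping reductions between the Beckmann--Buss witnessing problems and $\mathsf{GI}_i$. One cosmetic point, inherited from the paper, is that your $\geq_p$ labels are swapped: the witnessing half shows $\langle EF,\mathsf{GI}_i\rangle \geq_p G_i$, and $\TV^i \vdash \refl_P$ gives $G_i \geq_p P$ rather than $P \geq_p G_i$, but the two halves still combine to the stated equivalence.
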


The proof would follow the same template as \Cref{sec:g1_imp_res}, and uses that game induction is provably reducible to the witnessing problems of Beckmann-Buss.

\section{A Generic Correspondence} \label{sec:generic}

In \cref{sec:G1} we showed an equivalence between $G_1$ and $\langle EF,\Res\rangle$. In this section we state necessary and sufficient conditions for a  proof system to be equivalent to EF-provable reductions to a $\TFNP$ problem. Our results in this section can be seen as generalizing 
the equivalence between black-box $\TFNP$ classes and proof systems from \cite{BussFI23} to the white-box setting, under Extended Frege-provable reductions. At a high level, they showed that any $\TFNP^{dt}$ class (satisfying certain mild conditions) can be viewed as the total $\NP$ search problems that are decision tree reductions to the \emph{reflection principle} (encoded as a propositional formula) for some proof system. For example, the problems reducible to $\Iter^{dt}$ are the problems reducible to the reflection principle for narrow Resolution. 
Conversely, they showed that if a proof system proves its own reflection principle and can simulate decision tree reductions then it is characterized by a $\TFNP^{dt}$ class.  

In this section, we show that if a proof system  $P$ can prove its own reflection principle (cf. \Cref{def:reflection}) and is at least as strong as Extended Frege, then EF-provable reductions to the \emph{Wrong Proof problem} for $P$ are equivalent to $P$ proofs. 
\begin{definition} $\WrongProof_P:= \{\WrongProof_{P,n,m,s}\}_{n,m,s}$ is the false clause search problem associated with the unsatisfiable formula $\neg \refl_P$, which asserts that a formula has both a $P$-proof and a falsifying assignment. Formally, \begin{align}
    \WrongProof_{P,n,m,s}(x,z) :&= \unsat_{\neg \refl_{P,n,m,s}}(x,z) \nonumber \\
    &=\unsat_{\Proof_{P,n,m,s}}(x,z_1) \lor \unsat_{\neg \mathrm{SAT}_{n,m}}(x,z_2),
\end{align}
Recalling that we use a one-hot encoding in $\unsat$, $z_1$ are the bits that represent the clauses of $\Proof$, while $z_2$ are those belonging to $\mathrm{SAT}$. As well, $x=(\alpha,F,\Pi)$ is a tuple encoding an assignment, a formula (claimed to be a tautology), and a proof $\Pi$.
Recall from \Cref{def:unsat-cnf} that this is a problem in $\TFNP$. 
\end{definition}

Informally, $\WrongProof_P$ is the following search problem:
Given a CNF formula $F$, an assignment $x$ to the variables of $F$, and a proposed $P$-refutation $\Pi$ of $F$, either output a clause of $F$ that is falsified under $x$ or an error in $\Pi$.
We note that the $\WrongProof$ problem was originally introduced by Goldberg and Papadimitriou \cite{GoldbergP18} in a slightly different setting --- our usage is closer to that of \cite{li2024metamathematics}.
The main theorem of this section says that if a sufficiently strong proof system $P$ can prove its own reflection principle, then it is characterized by $EF$-provable reductions to $\WrongProof_P$.

\begin{theorem}
\label{thm:generalEquiv}
    Let $P \geq_p EF$ be any proof system which has polynomial-size proofs of $\refl_{P}$. Then $P$ is polynomially equivalent to $\langle EF,\WrongProof_P\rangle$
\end{theorem}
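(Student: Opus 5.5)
Two simulations are needed, and the plan is to argue each one inside $P$ or $EF$ using only reflection and substitution, in the same way as \cref{prop:FP_simulates_EF}.

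\textbf{Direction $\langle EF,\WrongProof_P\rangle \geq_p P$.} Let $\Pi$ be a $P$-refutation of $F$ with $n$ variables and $m$ clauses. Take $C$ to be the circuit $x \mapsto (x, F, \Pi)$, which hard-codes $F$ and $\Pi$. This maps an assignment to an instance of $\WrongProof_{P,n,m,|\Pi|}$. Take $D$ to be the circuit that reads a solution $z=(z_1,z_2)$ and does the following: if $z_2$ names a clause of $\neg\mathrm{SAT}$ (that is, a clause $C_i$ of $F$ that is false under $x$), it outputs the one-hot index $i$; otherwise it outputs anything, say $e_1$. To show $\Reduction_{\WrongProof_P}(C,D,F)$ in $EF$, we consider a solution $z$. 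If the $z_1$ part is the active disjunct, then $z_1$ witnesses that some clause of $\Proof_P(F,\Pi)$ fails. But $\Pi$ is a correct proof, so the formula $\Proof_P(F,\Pi)$ with $F$ and $\Pi$ substituted is a true variable-free statement, namely a run of the polynomial-time verifier on constants. Such a statement has a polynomial-size $EF$ (even Frege) proof by evaluating the circuit $V_P$ gate by gate. This gives a contradiction, so only the $z_2$ case remains, and there $D$'s output satisfies $\unsat_F$ by construction. The proof is assembled by routine case analysis. One subtlety is that the verifier circuit may use auxiliary variables for its gates, which are universally quantified in the formula. I would handle this by proving in $EF$ that the gate values are forced by the constants, via induction on gates.

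\textbf{Direction $P \geq_p \langle EF,\WrongProof_P\rangle$.} Let $(C,D,\Pi)$ be given, where $\Pi$ is an $EF$ proof of $\Reduction_{\WrongProof_P}(C,D,F)$. Since $P\geq_p EF$, we first translate $\Pi$ into a $P$-proof of the same formula. It remains to derive, inside $P$, that $C(x)$ has a solution $y$, with $y$ given as an explicit term or circuit in $x$. Writing $C(x)=(\alpha,G,\Sigma)$, the point is that $\WrongProof_P$ is totalized precisely by the reflection principle: if no clause of $\Proof_P(G,\Sigma)$ is false and no clause of $G$ is false under $\alpha$, then $\refl_P(G,\Sigma,\alpha)$ fails. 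Concretely, there is an $\FP$ algorithm $A$ that evaluates all clauses of $\neg\refl_P$ on input $(\alpha,G,\Sigma)$ and outputs the set of falsified ones. This is essentially the proposition that $\unsat$ is in $\FP$, only non-uniform in the parameters. We take the $P$-proof of $\refl_{P,n',m',s'}$ given by hypothesis and substitute $C(x)$ (defined with extension variables, which $P\geq_p EF$ permits) for its free variables. Then $P$ proves that $A(C(x))$ is a solution, i.e.\ that $\unsat_{\neg\refl_P}(C(x),A(C(x)))$ holds: some clause must be falsified since $\refl_P$ holds, and $A$ outputs exactly the falsified ones. From this, the reduction formula instantiated at $y=A(C(x))$ yields $\unsat_F(x,D(x,A(C(x))))$. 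Because $\unsat_F$ asserts that every clause indicated by the output is false, this in turn gives $\neg F(x)$, and hence a refutation of $F$.

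\textbf{Main obstacle.} I expect the difficulty to be entirely in the second direction. The key is that $P$ is closed under substituting circuits/extension definitions into its proof of $\refl_P$, and that it can perform the $EF$ reasoning connecting "$\refl_P$ is true" to "$A$ outputs a valid one-hot solution". The sizes $n',m',s'$ are polynomial in $|C|$, so the assumed proofs of $\refl_P$ have polynomial size. The substitution step is where I would invoke closure of $P$ under substitution; if $P$ is not assumed closed under it, I would instead fold the substitution into the propositional structure via extension axioms and cut, using $P\geq_p EF$.
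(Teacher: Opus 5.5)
Your proposal is correct and takes essentially the same route as the paper. For $\langle EF,\WrongProof_P\rangle \geq_p P$ you hardwire the formula and the $P$-proof into $C$, let $D$ project onto the $\mathrm{SAT}$-part of the solution, and have $EF$ verify the now variable-free $\Proof_P$ formula by evaluation, as in the proof of \cref{lem:backwards}; for $P \geq_p \langle EF,\WrongProof_P\rangle$ you instantiate $P$'s proof of $\refl_P$ at $C(x)$, take the indicator of falsified clauses of $\neg\refl_P$ as the solution, and cut against the $EF$ proof of the reduction formula, as in \cref{lem:forwardGeneral} with $Q=P$ (the paper's treatment of substitution and cut inside $P$ is no more explicit than yours).
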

Said differently, there is a size-$s$ $P$-refutation of $F$ iff there is an $EF$-provable many-one reduction from $\search_F$ to $\WrongProof_P$ where both the reduction and the Extended Frege proof have size polynomial in $s$. 
It may be helpful to keep in mind the proofs of \Cref{prop:EF_simulates_FP} and \Cref{prop:FP_simulates_EF}, as our proof of this theorem follows a similar trajectory. 

 We will use the following simple properties of Extended Frege.

\begin{claim}
\label{clm:EF_properties}
    For any CNF formula $F$ on $n$ variables, the following hold:
    \begin{enumerate}
        \item Extended Frege has $\poly(|F|)$-size proofs of $\neg  F(x) \rightarrow \unsat_F(x,y)$, for new variables $y$, and $\unsat_F(x,y) \rightarrow \neg F(x)$.
        \item For any circuit $C: \{0,1\}^t \rightarrow \{0,1\}^n$, if Extended Frege can prove $F$ then it also has $\poly(|C|)$-size proofs of  $\Eval(C,x',x) \wedge F(x)$.
    \end{enumerate}
\end{claim}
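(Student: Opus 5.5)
The statement immediately above is \Cref{clm:EF_properties}, which records two routine facts about Extended Frege. Both items will be proved by writing down explicit derivations whose size is linear or polynomial in $|F|$ (respectively $|C|$), using only standard Frege manipulations together with the extension rule.

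\textbf{Item 1.} Recall $\unsat_F(x,y) = (\bigvee_i y_i) \land \bigwedge_i (\neg y_i \lor \neg C_i)$. As written, $\neg F(x) \rightarrow \unsat_F(x,y)$ with $y$ free is not valid, since $y$ could be all zeros. The intended reading must be that $y$ is introduced as extension variables. So the plan is to introduce, by the extension rule, $y_i \leftrightarrow \neg C_i(x)$ for each $i \in [m]$, and then derive $\neg F(x) \rightarrow \unsat_F(x,y)$ under these definitions.
\begin{enumerate}
\item The conjuncts $\neg y_i \lor \neg C_i$ follow immediately from the defining equivalences: $y_i \rightarrow \neg C_i$.
\item The disjunction $\bigvee_i y_i$ follows from $\neg F = \neg \bigwedge_i C_i \equiv \bigvee_i \neg C_i$ (De Morgan, $O(m)$ steps), replacing each $\neg C_i$ by $y_i$.
\end{enumerate}
For the converse $\unsat_F(x,y) \rightarrow \neg F(x)$, here $y$ is arbitrary. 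From $\bigvee_i y_i$ and each $y_i \rightarrow \neg C_i$ we obtain $\bigvee_i \neg C_i$ by a proof by cases of size $O(m\cdot|F|)$. De Morgan then yields $\neg F(x)$.

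\textbf{Item 2.} Let $\Pi$ be an EF proof of $F(x)$. Because EF is closed under substitution, and extension variables can be renamed, we introduce extension variables $g$ for every gate of $C$ on input $x'$. Each gate is defined by its own operation applied to its predecessors, and the outputs $x$ are then defined as the output gates. This gives $\Eval(C,x',x)$ piece by piece: each conjunct of the CNF $\Eval$ is the local consistency of one gate, which follows from its definition in $O(1)$ steps. We then substitute these output variables for $x$ in $\Pi$, getting a proof of $F(x)$ in the new variables. Conjoining gives the claim, with size $\poly(|C|) + |\Pi|$.

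\textbf{Main obstacle.} The only delicate point is the interpretational issue in Item 1, namely that the $y$ must be defined by extension rather than left free. The rest is bookkeeping.
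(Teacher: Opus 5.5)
Your proposal is correct and follows the same route as the paper's sketch. For item 1 you introduce $y_i \leftrightarrow \neg C_i$ by the extension rule and pass between $\neg F$ and $\bigvee_i \neg C_i$ by De Morgan; for item 2 you introduce extension variables for the gates of $C$ on free inputs $x'$ and reuse the given $EF$ proof of $F$ on the output variables. If anything, your version is more careful than the paper's, which phrases item 2 as starting from $F(x)$ and treating $x$ as outputs and leaves the converse of item 1 as ``similar'': you fix the correct order of the extension definitions, rename the extension variables of $\Pi$, and note that the size bound must include $|\Pi|$.
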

\begin{proof} 
    We sketch the proof. Let $F=C_1 \wedge \ldots \wedge C_m$.
    For the first item, Extended Frege begins by deriving $F \lor \neg F$. Then, it introduces new variables $y$ such that $y_i \iff \neg C_i$. From $\neg F = \neg C_1 \lor \ldots \lor \neg C_m$ and the definition of the $z_i$'s it then derives $\unsat_F(x,z)$ to obtain $F \rightarrow \unsat_F(x,z)$.
    The proof of the converse is similar. 

    The second item is standard. Beginning with $F(x)$, and treating $x$ as the output gates of $C$, Extended Frege introduces new variables $x'_1,\ldots, x'_m, g_1,\ldots, g_{|S|}$, where $g_i$ will represent the value of the $i$-th gate of $C$, such that the input variables are $x'$, and clauses enforcing that the value of $g_i$ is correctly derived from the values of its children in $C$. 
\end{proof}

We prove \Cref{thm:generalEquiv} over the new two lemmas, the forward direction is captured by \Cref{lem:forwardGeneral} taking $P=Q$, while the backwards direction is given in \Cref{lem:backwards}.

\begin{lemma}
\label{lem:forwardGeneral}
    Let $P,Q$ be proof systems such that $EF \leq_p P \leq_p Q$
    and $Q$ proves the reflection principle for $P$. If $\langle EF,\WrongProof_{P}\rangle$ has a size $s$ refutation of $F$, then $Q$ has a size $\poly(s)$ refutation of $F$. 
\end{lemma}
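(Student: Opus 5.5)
Given a size-$s$ refutation $(C,D,\Pi)$ of $F$ in $\langle EF,\WrongProof_P\rangle$, I would build a $Q$-refutation of $F$ in three stages, mirroring \Cref{prop:FP_simulates_EF}. First, since $Q \geq_p P \geq_p EF$, translate the Extended Frege proof $\Pi$ of $\Reduction_{\WrongProof_P}(C,D,F)$ into a $Q$-proof of size $\poly(s)$. By unfolding the $\Eval$ subformulas with extension variables, as in item 2 of \Cref{clm:EF_properties}, this yields in $Q$ the implication
\[ \WrongProof_P(C(x), y) \rightarrow \unsat_F(x, D(x,y)). \]
Here $C(x)$ is a triple $(\alpha, G, \Pi')$, and $y = (z_1, z_2)$ is a one-hot pointer to either a violated clause of $\Proof_P(G,\Pi')$ or a clause of $\mathrm{SAT}(G,\alpha)$ falsified by $\alpha$.

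Second, use the hypothesis that $Q$ proves $\refl_P$ with $\poly$-size proofs. Substitute the circuit outputs $C(x) = (\alpha, G, \Pi')$ into the $Q$-proof of $\refl_{P}(G,\Pi',\alpha) = \neg\Proof_P(G,\Pi') \vee \mathrm{SAT}(G,\alpha)$. This uses closure of $Q$ under substitution of circuit values via extension variables, which holds for any system simulating $EF$ (again item 2 of \Cref{clm:EF_properties}). The result is a $\poly(s)$-size $Q$-proof that $\neg\refl_P$ is \emph{not} satisfied at $C(x)$, i.e.\ that $\neg\refl_P(C(x))$ is false. Note that $\WrongProof_P$ is exactly $\search_{\neg\refl_P}$. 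By item 1 of \Cref{clm:EF_properties}, applied to the CNF $\neg\refl_P$, $EF$ (hence $Q$) proves $\neg(\neg\refl_P)(C(x)) \rightarrow \exists$-free witnessing. Concretely, define extension variables $y_i \leftrightarrow \neg K_i(C(x))$, where the $K_i$ are the clauses of $\neg\refl_P$, and derive $\unsat_{\neg\refl_P}(C(x), y)$, i.e.\ $\WrongProof_P(C(x), y)$, for this explicitly defined $y$. Third, combine by modus ponens to get $\unsat_F(x, D(x,y))$, and then $\neg F(x)$ by the other half of item 1. Together with the hypothesis $F(x)$ this gives a contradiction, which is the refutation.

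The main obstacle I expect is bookkeeping in the substitution step. $\refl_P$ is stated for specific parameters $(n',m',s')$ and uses particular encodings of $G$, $\Pi'$ and $\alpha$, while $\WrongProof_P$ on input length $|C(x)|$ must match the same clause list. I would fix the parameters from the output length of $C$ and check that the clauses of $\neg\refl_{P,n',m',s'}$ used in $\Reduction$ are literally those in the reflection formula $Q$ proves. I would also handle the case where $C$'s output fails to parse, which is already absorbed because $\Proof_P$ is false there. All steps are polynomial in $s$ and in the size of the reflection proof, giving the $\poly(s)$ bound.
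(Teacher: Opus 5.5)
Your proposal is correct and follows essentially the same route as the paper. Both arguments use $Q$'s proof of $\refl_P$ together with item 1 of \Cref{clm:EF_properties} (extension variables $y_i \leftrightarrow \neg K_i$) to derive $\WrongProof_P(x',y)$, get the $\Eval$ conjuncts via item 2, apply $\Pi$ to obtain $\unsat_F$, and conclude $\neg F$; you even state the direction $\refl_P \rightarrow \unsat_{\neg\refl_P}$ correctly, where the paper's displayed implication has a sign slip.

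One caveat, which the paper shares since it also silently cuts inside $Q$: closure of $Q$ under substitution and modus ponens does not follow from $Q \geq_p EF$ alone. It should be taken as a standing assumption on $Q$, not justified by item 2 of \Cref{clm:EF_properties}.
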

\begin{proof}
    Let $(C,D,\Pi)$ be an $\langle EF, \WrongProof_P \rangle$ refutation of an unsatisfiable CNF formula $F$. In particular, $(C,D)$ is a mapping reduction and $\Pi$ is an Extended Frege proof of $\Reduction_{\WrongProof_P}$. We will derive the premise of $\Reduction_{\WrongProof_P}$,
        \begin{align} \label{eq:premise}
        \isCkt(C) \wedge \isCkt(D) \wedge \Eval(C,x,x') \wedge \Eval(D,(x,y),z) \wedge \WrongProof_P(x',y),\end{align}
    and then cut it with the conclusion of $\Pi$ to prove $\neg F$. 

    First, we use that $P$ has small proofs of $\refl_P$ in order to derive $\unsat_{\neg \refl_P}(x',y)$, which is equivalent to $\WrongProof_P(x',y)$.
    More formally, by \autoref{clm:EF_properties}, Extended Frege can prove \[\neg \refl_P(x') \implies \unsat_{\neg \refl_P}(x',y), \]
    which follows by first writing down the statement $\neg \refl_P \lor \refl_P$ and then, from $\neg \refl_P$, introducing new variables $y$ and deriving  $\unsat_{\neg \refl_P}(x',y)$. Cutting this formula with $\refl_P(x')$, we obtain $\WrongProof_P(x',y)$.

    Finally, we derive the statements about the circuits $C$ and $D$. Since these are fixed circuits, $\isCkt(C)$ and $\isCkt(D)$ are constant true formulas. As well, by \autoref{clm:EF_properties}, Extended Frege can derive $\Eval(C,x,x') \wedge \Eval(D,(x,y),z)$, where $C$ and $D$ are fixed. This completes the derivation of the premise (\ref{eq:premise}). Using the supplied proof $\Pi$ we can derive $\Reduction_{\WrongProof_P}$. Cutting this with (\ref{eq:premise}) gives $\unsat_F(x,z)$, from which we can deduce $\neg F$ using \Cref{clm:EF_properties}.
\end{proof}

\begin{lemma}
\label{lem:backwards}
    Let $P \geq_p EF$ be any proof system such that $P$ has polynomial-size proofs of $\refl_P$. If there is a size-$s$ $\langle EF, \WrongProof_P \rangle$ of $F$ then there is size $\poly(s)$ proof of $F$.
\end{lemma}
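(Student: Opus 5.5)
The plan is to specialize \Cref{lem:forwardGeneral} to $Q = P$. The hypotheses of the present lemma, namely $EF \leq_p P$ and polynomial-size $P$-proofs of $\refl_P$, are exactly what that lemma needs when $Q = P$. To keep the argument self-contained, I would redo its derivation directly inside $P$, using that $P$ simulates $EF$ so every Extended Frege step below can be carried out in $P$.

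Fix an $\langle EF,\WrongProof_P\rangle$ refutation $(C,D,\Pi)$ of $F$ of size $s$. Write $x' = C(x)$ for the $\WrongProof_P$ instance, parsed as a triple of an assignment, a formula and a claimed $P$-proof. The steps, in order, are as follows.

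\begin{enumerate}
\item Using \Cref{clm:EF_properties}(2), introduce extension variables for the gates of $C$ and derive $\isCkt(C)\wedge \Eval(C,x,x')$. The predicate $\isCkt(C)$ is variable-free and true, so it is trivially provable.
\item Take the polynomial-size $P$-proof of $\refl_{P,n',m',s'}$ at the parameters fixed by the output length of $C$. Substitute the defined variables $x'$ into it to obtain a proof of $\refl_P(x')$.
\item Define new extension variables $y_i \leftrightarrow \neg K_i(x')$, where $K_i$ ranges over the clauses of the CNF $\neg\refl_P$. By \Cref{clm:EF_properties}(1), derive $\neg(\neg\refl_P)(x') \rightarrow \unsat_{\neg\refl_P}(x',y)$, i.e.\ $\refl_P(x') \rightarrow \WrongProof_P(x',y)$. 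Cutting with the previous step gives $\WrongProof_P(x',y)$.
\item Again by \Cref{clm:EF_properties}(2), introduce variables $z$ for the outputs of $D$ and derive $\isCkt(D)\wedge\Eval(D,(x,y),z)$. This completes the premise \eqref{eq:premise}.
\item Translate $\Pi$ into $P$ via $P\geq_p EF$ and instantiate it at these $x,x',y,z$. Modus ponens with \eqref{eq:premise} yields $\unsat_F(x,z)$.
\item Conclude $\neg F(x)$ by \Cref{clm:EF_properties}(1).
\end{enumerate}

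Every ingredient has size polynomial in $s$: the proof of $\refl_P$ at the relevant parameters, the simulated $\Pi$, and the gate definitions for $C$ and $D$. Hence the resulting $P$-refutation has size $\poly(s)$.

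The step I expect to be the main obstacle is step 2, substituting the circuit-defined bits $x'=C(x)$ into the $P$-proof of $\refl_P$. This needs $P$ to be closed under substitution of extension-defined variables. I would handle it the way \Cref{clm:EF_properties}(2) is used. First derive $\refl_P$ on fresh free variables $x'$. Then carry the gate-definition axioms for $C$ as hypotheses, which $P$ can discharge because it simulates $EF$ and the extension rule. A secondary check is that the parameters $(n',m',s')$ of the reflection instance agree with the instance length produced by $C$, so that the $\WrongProof_P$ relation appearing in $\Pi$ is literally $\unsat_{\neg\refl_{P,n',m',s'}}$.
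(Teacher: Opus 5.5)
Your argument correctly proves the implication as printed. It is essentially the paper's proof of \Cref{lem:forwardGeneral} with $Q=P$. The closure of $P$ under substitution and modus ponens that worries you in step 2 is used tacitly there as well.

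However, as you noticed, the printed implication is just the $Q=P$ case of \Cref{lem:forwardGeneral}. That is because the statement of this lemma is misprinted. The text says the \emph{backwards} direction of \Cref{thm:generalEquiv} is given here, and the paper's proof does go the other way. It turns a size-$s$ $P$-refutation of $F$ into a $\poly(s)$-size $\langle EF,\WrongProof_P\rangle$ refutation. Your proposal has nothing for that direction. So for the lemma as the paper actually proves and uses it, the whole argument is missing.

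The missing construction is short.
\begin{itemize}
\item From the refutation, obtain a $P$-proof $\Pi$ of $\neg F$.
\item Let $C$ send $x$ to the $\WrongProof_P$ instance $(x,\neg F,\Pi)$, with $\neg F$ and $\Pi$ hard-wired and $x$ placed in the assignment slot.
\item Let $D$ send $y=y_1\circ y_2$ to $y_2$.
\end{itemize}
The $EF$ proof of $\Reduction_{\WrongProof_P}(C,D,F)$ then runs as follows.
\begin{itemize}
\item From $\Eval(C,x,x')$, $EF$ reads off $x'=(x,\neg F,\Pi)$.
\item $\Proof_P(\neg F,\Pi)$ is a true variable-free statement, so $EF$ proves it by evaluating the verifier gate by gate. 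This rules out the $\unsat_{\Proof}(\cdot,y_1)$ disjunct of $\WrongProof_P(x',y)$.
\item The remaining disjunct $\unsat_{\neg\mathrm{SAT}}(x,\neg F,y_2)$, with the formula slot fixed to $\neg F$, is syntactically $\unsat_F(x,y_2)$.
\item $\Eval(D,(x,y),z)$ gives $z=y_2$, which yields $\unsat_F(x,z)$.
\end{itemize}
All of this has size polynomial in $|\Pi|+|F|$.

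This direction never uses that $P$ has short proofs of $\refl_P$. That hypothesis, together with $P\geq_p EF$, is what the direction you proved needs.
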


\begin{proof}[Proof of \Cref{lem:backwards}]
 For the backwards direction, let $\Pi'$ be a $P$-refutation of $F$, and let $\Pi$ be a similar-sized proof that $\neg F$ is a tautology (for example by cutting $\Pi$ with $F \lor \neg F$. We will need $\Pi$ as we have phrased $\refl$ for proofs of tautologies. 
 
 We construct a mapping reduction $(C,D)$ from $\search_F$ to $\WrongProof_{P}$. Recall that $\WrongProof_P$ has three inputs: a formula, an $n$-bit truth assignment claiming to falsify the formula, and a $P$-proof claiming that the formula is a tautology. On input $x \in \{0,1\}^n$, the circuit $C$ maps $x$ to the truth assignment variables, and hard-codes $\Pi$ for the proof variables and $\neg F$ for the formula variables. By assumption, $\Pi$ is a valid $P$-proof of $\neg F$ and so $\Proof(C(x)) = \Proof(\Pi,\neg F)$ is the constant true formula. As well, denoting by $\mathrm{SAT}_{\neg F}$ the formula $\mathrm{SAT}$ with its formula-input fixed to $\neg F$, observe that 
    \[\unsat_{\neg \mathrm{SAT}}(A(x),z) = \unsat_{\neg \mathrm{SAT}_{\neg F}}(x,z) =\unsat_F(x,z), \] by definition.
    Hence,
    \begin{align*} 
        \WrongProof_{P}(C(x),z) &= \unsat_{\Proof}(C(x),z_1) \lor \unsat_{\neg \mathrm{SAT}}(C(x),z_2) \\
        &=\bot \lor \unsat_{\neg \mathrm{SAT}_{\neg F}}(x,z_2) \\
        &= \unsat_F(x,z_2).
    \end{align*}
    Finally, define the circuit $D$ to map $(z_1,z_2) \mapsto z_2$, recalling that we use a one-hot encoding for $z$.

    It remains to construct an Extended Frege proof of $\Reduction_{\WrongProof_P(C,D,F)}$ using the assumed proof $\Pi$. Beginning from the premise of this statement,
    \[ \Eval(C, x, x') \land \Eval(D, (x, y), z) \land \WrongProof_P(x', y),\]
    where we have used that, since $A$ and $D$ are fixed, $\isCkt(C)$ and $\isCkt(D)$ are constant true formulas.
    Extended Frege must argue two things:
    \begin{enumerate}
        \item That $C$ hard-wires $(\neg F, \Pi)$ and so $\Proof(\Pi,\neg F)$ is the constant true formula, and that $D$ maps indices of falsified clauses of $\neg \mathrm{SAT}(x,\neg F)$ to indices of falsified clauses of $F$.
        \item That $\Pi$ is an $P$-proof of $F$. 
    \end{enumerate}
    Examining the gates of the circuit $C$, Extended Frege can deduce that the output string of $\Eval(C,x,x')$ has the form $x'=(x,\neg F,\Pi)$. Hence, it can also derive that 
    \[\WrongProof_P(x',y) = \unsat_{ \Proof}(\neg F,\Pi,y_1) \lor \unsat_{\neg \mathrm{SAT}}(x,\neg F,y_2),\]
    where $\neg F,C,\Pi$ are fixed, $y= y_1 \circ y_2$, and we have expanded the definition of $\WrongProof$. Use $\Pi$ to derive $\Proof(\neg F,\Pi)$ and cut it on  $\unsat_{\neg \Proof}(F,\Pi,y_1)$ in order to derive 
    \[ \unsat_{\mathrm{SAT}}(x,\neg F,y_2). \]
    Finally, examining the gates of $D$, Extended Frege can prove that the output string of $\Eval(D,(x,y),z)$ satisfies $z=y_2$, where $y= y_1 \circ y_2$. Therefore, it can derive 
    \[ \unsat_{\neg \mathrm{SAT}}(x,\neg F,z) = \unsat_F(x,z),\]
    where equality holds syntactically by definition since $F$ is fixed.
\end{proof}

\addcontentsline{toc}{section}{References}
\bibliographystyle{alpha}
\bibliography{implicit,tfnp-refs}

\appendix

\section{Appendix: Proof Systems and Bounded Arithmetic}
\label{sec:appendix}
\subsection{Propositional and QBF Proof Systems}
We formally describe proof systems Frege, Extended Frege, and $G_i$. Refer to \cite{Krajicek2019} for results on the strength of these systems.

\paragraph{Frege and Extended Frege.}

\begin{definition}[Gentzen's PK: Propositional Calculus]\label{def:pk}
Let all formulas be propositional and under the DeMorgan basis. A \emph{PK proof} uses the following rules. 

\[
\begin{array}{l l}

\text{(Ax)}\;
\inferrule*[right=Ax]{ }{A \rightarrow A}
&
\text{(Cut)}\;
\inferrule*[right=Cut]{\Gamma \rightarrow A \\ \Delta, A \rightarrow B}{\Gamma, \Delta \rightarrow B}
\\

\text{(WL)}\;
\inferrule*[right=WL]{\Gamma \rightarrow A}{\Gamma, B \rightarrow A}
&
\text{(WR)}\;
\inferrule*[right=WR]{\Gamma \rightarrow A}{\Gamma \rightarrow A, B}
\\

\text{(CL)}\;
\inferrule*[right=CL]{\Gamma, A, A \rightarrow B}{\Gamma, A \rightarrow B}
&
\text{(CR)}\;
\inferrule*[right=CR]{\Gamma \rightarrow A, A}{\Gamma \rightarrow A}
\\

\text{(EL)}\;
\inferrule*[right=EL]{\Gamma, A, B, \Delta \rightarrow C}{\Gamma, B, A, \Delta \rightarrow C}
&
\text{(ER)}\;
\inferrule*[right=ER]{\Gamma \rightarrow A, B, \Delta}{\Gamma \rightarrow B, A, \Delta}
\\

\text{(neg L)}\;
\inferrule*[right=neg L]{\Gamma \rightarrow A, \Delta}{\Gamma, \neg A \rightarrow \Delta}
&
\text{(neg R)}\;
\inferrule*[right=neg R]{\Gamma, A \rightarrow \Delta}{\Gamma \rightarrow \neg A, \Delta}
\\

\text{(and L)}\;
\inferrule*[right=and L]{\Gamma, A, B \rightarrow \Delta}{\Gamma, A \wedge B \rightarrow \Delta}
&
\text{(and R)}\;
\inferrule*[right=and R]{\Gamma \rightarrow A, \Delta \\ \Gamma \rightarrow B, \Delta}{\Gamma \rightarrow A \wedge B, \Delta}
\\

\text{(or L)}\;
\inferrule*[right=or L]{\Gamma, A \rightarrow \Delta \\ \Gamma, B \rightarrow \Delta}{\Gamma, A \vee B \rightarrow \Delta}
&
\text{(or R)}\;
\inferrule*[right=or R]{\Gamma \rightarrow A, B, \Delta}{\Gamma \rightarrow A \vee B, \Delta}

\end{array}
\]
\end{definition}
\noindent
Extended Frege, or $ePK$, allows the addition of \emph{extension variables} to $PK$.

\begin{definition}
    Let $F$ be a $PK$-derivation from initial formulas $F_1,\dots,F_k$ over variables $p_i$. Let $e$ be a variable not among the $p_i$ variables of $F$. An \emph{extension rule} is the inference, $\rightarrow e \leftrightarrow A$, for $A$ a formula over variables $p_i$. We say that $e$ is an \emph{extension variable}.
\end{definition}
\noindent
Extended Frege, or $ePK$, is the propositional proof system of $PK$, with the addition of extension rules. 

\paragraph{Quantified Propositional Calculus and $G_i$ systems.}
Adding quantified formulas and rules for quantifiers defines the more powerful $G$ proof system and $G_i$ subsystems.

The proof system $G$ generalizes Frege and Extended Frege systems to general quantified Boolean formulas (QBF). 

\begin{definition}[$G$ and $G_i$ Subsystems]
\label{def:G}
	We adapt Frege to allow quantified formulas introduced by the following rules:

	\[
\begin{array}{l l}

(\forall\text{-left})\;
\inferrule*{A(B),\, \Gamma \rightarrow \Delta }{\forall x A(x), \Gamma \rightarrow \Delta}
&
(\forall\text{-right})\;
\inferrule*{\Gamma \rightarrow \Delta, A(p)}{\Gamma \rightarrow \Delta, \,\forall x\, A(x)}
\\

(\exists\text{-left})\;
\inferrule*{A(p), \,\Gamma \rightarrow \Delta }{\exists x A(x),\, \Gamma \rightarrow \Delta}
&
(\exists\text{-right})\;
\inferrule*{\Gamma \rightarrow \Delta, A(p)}{\Gamma \rightarrow \Delta, \,\exists x\, A(x)}

\end{array}
\]

For the $\forall$-right and $\exists$-left rules, the variable $p$ is called an \emph{eigenvariable} and cannot appear in the bottom sequent. The $G$ proof system is simply $PK$ with these additional rules. For $i\geq 0$, $G_i$ is a restriction of $G$ to only allow cuts on $\left(\Sigma^q_i \cup \Pi^q_i\right)$-formulas. (That is, the cut rule for $G_i$ can only be applied to formulas with at most $i$ alternations of $\forall$ and $\exists$ quantifiers.) 

While the $G$ systems can prove validity of  general QBF formulas, it is also interesting to understand their strength with respect to Boolean formulas.
Since $G$ (and even $G_1$) has all of EF rules plus additional rules, $G$ can trivially p-simulate EF (with respect to Boolean formulas). 
And it is known that the tree-like version of $G_1$, called $G_1^*$ is p-equivalent to EF. On the other hand, it is unknown whether or not $G_1$ is  stronger than EF (with respect to Boolean formulas), and it is generally conjectured to be more powerful wrt p-simulations.

\end{definition}

\subsection{Bounded Arithmetic}
\label{appendix:BA}

We will assume certain preliminaries from bounded arithmetic, although we review some of the main points now. 
We work in \emph{second-order} bounded arithmetic, where we are reasoning about objects with two types: \emph{natural numbers} (denoted with lower-case variables $x, y, z$) and \emph{sets of numbers} (denoted with upper-case variables $X, Y, Z$).
We use the standard language \[\mathcal{L}^2_A := (0,1,+,\cdot, |\cdot|; =_1, =_2, \leq, \in)\] where $0, 1, +, \cdot$ are number terms representing $0, 1$ and usual addition and multiplication, and \[|S| = \begin{cases} 0 & S = \emptyset \\ 1 + \displaystyle \max_{i \in S} i  & S \neq \emptyset\end{cases}\] is a function which takes a set term and outputs a number which is one larger than the largest element of the set.
The relations $=_1$ and $=_2$ are equality symbols for numbers and strings, respectively, $\leq$ is the normal ordering on numbers, and $x \in X$ is intended to mean usual set inclusion. 
We write $X(x) := x \in X$.

It is convenient to think of a set of numbers as a binary string. 
We follow Cook-Nguyen (CITE) and associate the set $S$ with the string $w(S) = S(n)S(n-1)\cdots S(0)$, where $n$ is the largest element of the set $S$. 
Note that $|S|$ is the length of the string associated with $S$.
Moving forward we will refer to set terms as string terms, and think almost exclusively in terms of strings.

We sometimes treat strings as multi-dimensional string arrays, which can be done by standard methods using pairing relations. 
In this way, if $X$ is an $n \times m$ string array, we use indexing notation $X(i, j)$ to denote the corresponding element at the index $(i, j)$. 
We use $X^{[i]}$ to denote the length-$m$ string encoded in the $i$th row of $X$.
We will also sometimes need to take array slices. 
For instance, if $|X| = n$ and $i < j \leq n$, then we write $X(i:j)$ to mean the substring that starts at index $i$ and ends at index $j$, including the first endpoint of the interval but excluding the last endpoint.
If $j \leq i$ then $X(i:j)$ is just the empty string.
We can also combine these notations, writing e.g.~$X(i, j:k)$.
All of these relations are definable using $\Sigma^B_0$ string comprehension and hence can be used freely. 

\paragraph{$V^1$ and $TV^1$. } 

We follow \cite{CN2010} in the presentation of $V^1$, a two-sorted theory for polynomial time which is equivalent to Buss' first order theory, $S^1_2$. 

At the base of $V^1$ are the 2-BASIC axioms, which define the basic behaviors of the function symbols and predicates of $\mathcal{L}^2_A$. See \cite{CN2010} for a full list.

\begin{definition}[$\Sigma^B_i$-Comprehension]
	For a formula $\varphi(x)$, the $\varphi$-COMP axiom is the following sentence, \[\exists |X| \leq y \, \forall i < y . \, X(i) = 1 \longleftrightarrow \varphi(i).\]
	If $\Gamma$ is a set of formulas then $\Gamma\text{-COMP} := \set{\phi\text{-COMP} : \phi \in \Gamma}$.
	We will be particularly interested in the collections $\Sigma^B_i$-COMP for each $i$.
\end{definition}

The theory $V^1$ is then defined as,
\[V^1 \coloneq \text{2-BASIC} + \Sigma^B_1\text{-COMP}.\]

\begin{theorem}[$\FP$ Definability, \cite{CN2010}]
	A function $f$ is in $FP$ if and only if $f$ is $\Sigma^B_1$-definable in $V^1$. 
\end{theorem}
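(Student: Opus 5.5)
The plan is to prove the two directions separately, following the standard Cook--Nguyen route. For the direction "$f \in \FP \Rightarrow f$ is $\Sigma^B_1$-definable in $V^1$", I would use Cobham's characterization of $\FP$. Concretely, $\FP$ is the closure of the $\AC^0$ functions under composition and limited recursion, so it suffices to show that the $\Sigma^B_1$-definable functions of $V^1$ contain the $\AC^0$ functions and are closed under both operations.

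The $\AC^0$ base functions are already $\Sigma^B_0$-definable from $\Sigma^B_0$-COMP, which is a fragment of $\Sigma^B_1$-COMP. Composition is routine: conjoin the two defining $\Sigma^B_1$ formulas and existentially quantify the intermediate string. For limited recursion $F(0)=G$, $F(i+1)=H(i,F(i))$ with $|F(i)|$ bounded, I would define $Y=F(n)$ by
\[\exists W\, \big(W^{[0]}=G \land \forall i<n\; W^{[i+1]}=H(i,W^{[i]}) \land Y=W^{[n]}\big),\]
where $W$ is the full computation table. Two facts then have to be proved in $V^1$. Existence of $W$ is shown by $\Sigma^B_1$-induction on $n$ (available in $V^1$ since $\Sigma^B_1$-COMP implies $\Sigma^B_1$-IND), extending a partial table row by row. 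Uniqueness follows by $\Sigma^B_0$-induction comparing two candidate tables row by row.

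For the converse, "$\Sigma^B_1$-definable $\Rightarrow \FP$", I would pass through the universal theory $\VPV$, which has a function symbol for every polytime function. First, $V^1(\VPV)$ is a conservative extension of $V^1$, and it is $\forall\Sigma^B_1$-conservative over $\VPV$. Then I would apply Herbrand's theorem to the universal theory $\VPV$: a proof of $\forall X\exists Y\,\varphi(X,Y)$ with $\varphi\in\Sigma^B_0$ yields finitely many $\VPV$ terms $T_1,\dots,T_k$ such that $\varphi(X,T_1(X))\lor\dots\lor\varphi(X,T_k(X))$ holds. Since $\varphi$ is polytime checkable, we can evaluate each $T_j$ and output the first one that works, which is a polytime witness.

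The main obstacle is the $\forall\Sigma^B_1$-conservativity of $V^1$ over $\VPV$, which I would handle with a witnessing argument. Take an anchored (free-cut-free) $LK^2$ proof in which every formula is $\Sigma^B_1$. By induction on the proof, assign to each sequent a polytime function that witnesses the existential string quantifiers in the succedent from witnesses for the antecedent. The critical cases are the $\Sigma^B_1$-COMP axioms, where the witness is the comprehended string itself, and $\Sigma^B_1$-IND, where the witness is obtained by iterating the step function polynomially many times, which is exactly limited recursion and so stays within $\VPV$. Verifying these witness functions inside $\VPV$ is the technical heart of the argument. In practice I would cite \cite{CN2010} for this step rather than redo it.
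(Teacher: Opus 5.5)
The paper gives no proof of this statement; it is quoted from Cook--Nguyen. Your overall architecture is the standard one: Cobham closure for the direction from $\FP$ to definability, and witnessing plus $\forall\Sigma^B_1$-conservativity over $\mathrm{VPV}$ for the converse. The problem is in the step you yourself call the technical heart. You propose an anchored proof in which every formula is $\Sigma^B_1$, with the $\Sigma^B_1$-COMP axioms witnessed by ``the comprehended string itself.'' Neither half of that works.

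First, an instance $\exists X\le y\,\forall z<y\,(X(z)\leftrightarrow\exists Z\le t\,\psi(z,Z))$ is not a $\Sigma^B_1$ formula. The biconditional also puts $\exists Z\,\psi$ in a negative position, so the axiom is essentially $\Sigma^B_2$, and anchored proofs using it are not confined to $\Sigma^B_1$ formulas. Second, and more seriously, the comprehended string is not polytime computable in general. Let $\psi(z,Z)$ say that $Z$ satisfies the CNF coded by a parameter $A$, which is a $\Sigma^B_0$ condition. Then for $y=1$, bit $0$ of the comprehended string says whether $A$ is satisfiable. A polytime witness for this single axiom instance would therefore give $\mathsf{P}=\mathsf{NP}$, so the induction on the proof breaks exactly at these axioms.

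The standard repair is to re-axiomatize before witnessing. Work in the $\mathcal{L}_{\FP}$-extension with comprehension only for $\Sigma^B_0(\mathcal{L}_{\FP})$ formulas, whose comprehended strings are polytime (indeed given by function symbols with open defining axioms), together with $\Sigma^B_1(\mathcal{L}_{\FP})$-IND. Then \emph{derive} $\Sigma^B_1$-COMP from induction. For instance, by $\Sigma^B_1$-MAX take the largest $k\le y$ such that $\exists X\le y\,\exists W\,[\mathrm{numones}(X)\ge k\wedge\forall z<y\,(X(z)\to\psi(z,W^{[z]}))]$. If some $z$ with $\neg X(z)$ had a witness $Z$, adding $z$ to $X$ and $Z$ as row $z$ of $W$ would give $k+1$; hence $X$ is the comprehended set. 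After this, every formula in the anchored proof is $\Sigma^B_1(\mathcal{L}_{\FP})$ and comprehension is witnessed in polynomial time. The induction rule is then the only critical case, and your limited-recursion argument handles it, with the usual care when the step function produces a witness for a side formula of the succedent rather than for $\varphi(b+1)$.

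Two smaller points. First, that witnessing argument already yields an $\mathcal{L}_{\FP}$ witness, so the detour through Herbrand's theorem is redundant. Second, in your limited-recursion definition, $\forall i<n\;W^{[i+1]}=H(i,W^{[i]})$ is not $\Sigma^B_1$ when $H$ is only $\Sigma^B_1$-defined, since it has the form $\forall i\,\exists U\cdots$. To make your induction formula genuinely $\Sigma^B_1$, store $H$'s witnesses in $W$ as well, or appeal to $\Sigma^B_1$-replacement in $V^1$.
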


We list other well-known schemas which $V^1$ is able to prove.

\begin{lemma}[\cite{CN2010}] (i) $V^1 \vdash \Sigma^B_1$-IND, (ii) $V^1 \vdash \Sigma^B_1$-MIN, and (iii) $V^1 \vdash \Sigma^B_1$-MAX, referring to the standard induction, minimization, and maximization principles respectively. 
\end{lemma}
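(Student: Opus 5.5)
The final statement is the lemma that $V^1$ proves $\Sigma^B_1$-IND, $\Sigma^B_1$-MIN and $\Sigma^B_1$-MAX. I would prove it in the order (i), (ii), (iii), with each later principle derived from the earlier one.

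\textbf{Step (i): $\Sigma^B_1$-IND.} Let $\varphi(x)\equiv\exists Y\,\psi(x,Y)$ with $\psi\in\Sigma^B_0$ and $Y$ bounded by $|Y|\le t(x)$. Suppose we have $\varphi(0)$, $\forall x<a\,(\varphi(x)\rightarrow\varphi(x+1))$ and $\neg\varphi(a)$, and aim for a contradiction. A direct $\Sigma^B_0$-comprehension on $\varphi$ is unavailable, since $\varphi$ is not $\Sigma^B_0$. Instead I would use $\Sigma^B_1$-COMP to form the set
\[X=\{x\le a : \varphi(x)\}.\]
This set contains $0$, is closed under successor below $a$, and omits $a$.

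From $X$ I would then extract a boundary point $b<a$ with $b\in X$ and $b+1\notin X$, which contradicts the step hypothesis. The plan for extracting $b$ is:
\begin{enumerate}
\item Apply $\Sigma^B_0$-COMP to the formula $\forall y\le x\,(y\in X)$, with $X$ now a free parameter. This yields the set $Z$ of initial segments contained in $X$.
\item Use $\Sigma^B_0$-IND, which is part of the base theory $V^0\subseteq V^1$ in Cook--Nguyen, on $x\in Z$. This shows either $a\in Z$, which is impossible because $a\notin X$, or some $x\in Z$ with $x+1\notin Z$, which is exactly the boundary point $b$.
\end{enumerate}

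\textbf{Step (ii): $\Sigma^B_1$-MIN.} Let $\varphi(a)$ hold. Apply $\Sigma^B_1$-IND to the $\Sigma^B_1$-equivalent formula
\[\forall y\le x\,\neg\varphi(y).\]
This formula is not literally $\Sigma^B_1$, since it is a universal over an existential. To handle it, I would again use COMP to get the set $X$ of $y\le a$ with $\varphi(y)$, and then reduce to $\Sigma^B_0$-MIN on membership in $X$. $\Sigma^B_0$-MIN follows from $\Sigma^B_0$-IND in $V^0$. The least element of $X$ is then the minimal witness.

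\textbf{Step (iii): $\Sigma^B_1$-MAX.} This is symmetric. Form $X$ as before and apply $\Sigma^B_0$-MIN to the reflected set $\{a-y : y\in X\}$, which is definable by $\Sigma^B_0$-COMP.

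The main obstacle is the bookkeeping that the set $X$ obtained by $\Sigma^B_1$-COMP really collapses each principle to its $\Sigma^B_0$ version. This requires the existential string quantifier to be bounded, so that COMP applies. It also requires that the COMP axiom in $V^1$ is stated for $\Sigma^B_1$ formulas with parameters. Everything else is routine manipulation within $V^0$.
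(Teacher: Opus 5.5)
Your proposal is correct and is essentially the standard Cook--Nguyen argument that the paper cites without giving a proof of its own: you use $\Sigma^B_1$-COMP to collapse $\varphi$ to membership in a set $X$, and then invoke the $\Sigma^B_0$ (equivalently, set-level) versions of IND, MIN and MAX already provable in $V^0$. Two cosmetic points. First, the auxiliary set $Z$ in (i) is unnecessary, since $\Sigma^B_0$-IND (indeed $X$-IND) applies directly to the atomic formula $X(x)$. Second, the opening remark in (ii) misclassifies $\forall y\le x\,\neg\varphi(y)$: it is a $\Pi^B_1$ formula (universal over universal), not ``$\Sigma^B_1$-equivalent.'' Nothing depends on this, because you immediately switch to the COMP route.
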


The theory $V^1(VPV)$ is obtained from $V^1$ by extending the language $\mathcal{L}^2_A$ with $\mathcal{L}_{\FP}$, which has a function symbol for every polynomial-time function. 
The axioms of $V^1(VPV)$ are the union of the axioms of $V^1$ with the axioms of $VPV$, which contain defining Cobham axioms for each of the symbols in $\mathcal{L}_{\FP}$.
Cook and Nguyen prove that $V^1(VPV)$ is a conservative extension of $V^1$, and it can prove the corresponding comprehension and induction axioms for $\Sigma^B_1(\mathcal{L}_{\FP})$ formulas. 

The theory $\mathsf{TV}^1$ is a stronger theory than $\mathsf{V}^1$, axiomatized by the \emph{string induction} schema. Interpreting a string $X$ as a binary number, we may $\Sigma^B_0$ define a string successor function $S(X)$ that adds $1$ to $X$ via binary addition. See \cite{CN2010} for the definition.

\begin{definition}
    String induction is the two-sorted version of full induction. Let $\varphi$ be a formula. \[\varphi\text{-}\mathsf{SIND} \triangleq \varphi(\epsilon) \land \left( \forall X\, \varphi(X) \rightarrow \varphi(S(X)) \right) \rightarrow \forall \, X\, \varphi(X).\]
\end{definition}

With string induction, we may define $\mathsf{TV}^1$:

    \[\mathsf{TV}^1 \equiv \text{2-BASIC} + \Sigma^B_1\text{-SIND}.\]

\begin{theorem}[\cite{CN2010}]
    $\mathsf{TV}^1 \supset \mathsf{V}^1$.
\end{theorem}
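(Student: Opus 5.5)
The plan is to check that every axiom of $\V^1 = \text{2-BASIC} + \Sigma^B_1\text{-COMP}$ is a theorem of $\TV^1$. The 2-BASIC axioms are shared, so the real work is deriving $\Sigma^B_1$-COMP from $\Sigma^B_1$-SIND. One assumption: following Cook--Nguyen, I read $\TV^1$ as also containing the base comprehension of $V^0$, namely $\Sigma^B_0$-COMP. This gives us the elementary string operations used below: adding one element to a set, comparing two strings as binary numbers, and the string successor $S(X)$. The paper's displayed definition lists only 2-BASIC, so this is an assumption and not something I derive.

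Fix a $\Sigma^B_1$ formula $\varphi(i) \equiv \exists W\,\psi(i,W)$ with $\psi \in \Sigma^B_0$, and a bound $y$. Let $\theta(Z,U)$ say that $|Z| \le y$ and, for every $i \in Z$, the row $U^{[i]}$ satisfies $\psi(i, U^{[i]})$. This is $\Sigma^B_0$ and says that each member of $Z$ carries a witness. We want a set $Z$ with $\exists U\,\theta(Z,U)$ that is maximal with respect to the binary-value order $\le_{\mathrm{val}}$, which is $\Sigma^B_0$-definable. Given such a $Z$, suppose some $i<y$ has $\varphi(i)$ but $i \notin Z$. Then $Z \cup \{i\}$, with the witness for $i$ placed in row $i$ of a modified $U$ (available by $\Sigma^B_0$-COMP), again satisfies $\theta$ and has strictly larger value, a contradiction. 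Conversely, every $i\in Z$ satisfies $\varphi(i)$ via $U^{[i]}$. Hence $\forall i<y\,(Z(i)\leftrightarrow\varphi(i))$, which is exactly the instance of $\varphi$-COMP we need.

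The main step, and the one I expect to be the obstacle, is obtaining this maximal $Z$: a $\Sigma^B_1$ string-maximization principle proved from $\Sigma^B_1$-SIND. I would argue by contradiction. Suppose no witnessed $Z$ is $\le_{\mathrm{val}}$-maximal, and apply SIND to the $\Sigma^B_1$ formula
\[\chi(V) \equiv \exists Z\,\exists U\,\bigl(\theta(Z,U) \land V \le_{\mathrm{val}} Z\bigr).\]
The base case $\chi(\epsilon)$ holds with $Z=\emptyset$. For the step, assume $\chi(V)$ is witnessed by $Z$. If $V <_{\mathrm{val}} Z$, then $S(V) \le_{\mathrm{val}} Z$. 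If $V = Z$, non-maximality gives a witnessed $Z' >_{\mathrm{val}} Z$, so $S(V)\le_{\mathrm{val}} Z'$. SIND then yields $\chi(V)$ for all $V$, and in particular for $V = 1^{y+1}$. But any $Z$ with $|Z|\le y$ has value below $1^{y+1}$, which is the contradiction.

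The delicate points are that the induction formula must stay $\Sigma^B_1$; this is why the hypothesis ``no maximum'' is used only as a side assumption inside the step and never placed in $\chi$. We also need the basic facts about $S$ and $\le_{\mathrm{val}}$, such as $V<Z \Rightarrow S(V)\le Z$, to be provable in the base theory, and I expect this to be routine bookkeeping. Combining the maximization principle with the previous paragraph gives $\Sigma^B_1$-COMP, and therefore $\TV^1 \supseteq \V^1$.
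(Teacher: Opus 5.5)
Your proof is correct. The paper gives no argument for this containment and only cites Cook--Nguyen, so your derivation is a self-contained replacement for that citation rather than a variant of a proof in the paper.

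The two-stage structure works. First you obtain a $\Sigma^B_1$ string-maximization principle from $\Sigma^B_1$-SIND. You correctly keep the ``no maximal witnessed set'' hypothesis outside the induction formula $\chi$, so $\chi$ stays $\Sigma^B_1$. You then read off $\varphi$-COMP from a $\le_{\mathrm{val}}$-maximal witnessed set by the one-point extension argument. Packing the witnesses into the rows of $U$ is the right move: you only ever use $\exists U\,\theta(Z,U)$ and build $U'$ by $\Sigma^B_0$-COMP, so no replacement principle is needed.

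A natural alternative route would pass from $\Sigma^B_1$-SIND to number induction $\Sigma^B_1$-IND (by inducting on $\varphi(|X|)$) and then maximize the \emph{cardinality} of a witnessed set. Your use of the binary-value order avoids counting, which $\mathsf{V}^0$ does not provide and which would itself need a $\Sigma^B_1$ induction to construct. String induction is exactly what lets you climb through exponentially many values directly.

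Your caveat about the base theory is also justified. Read literally, ``2-BASIC $+\ \Sigma^B_1$-SIND'' has no set-existence principle at all. It cannot even show that $S(X)$ exists, since a structure whose only string is $\emptyset$ satisfies 2-BASIC and every SIND instance. So the statement presupposes $\Sigma^B_0$-COMP, which is the standard Cook--Nguyen reading in which all these theories extend $\mathsf{V}^0$.
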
 

As we will see below, $\mathsf{TV}^1$ corresponds to ``$\PLS$" reasoning in regards to the provability of $\forall\Sigma^B_1$-sentences. 

\subsection{Witnessing Theorems}

We recall several witnessing theorems.

\begin{theorem}[See \cite{CN2010}]\label{thm:v1_witnessing}
	Let $\phi(\vec x, \vec y, \vec X, \vec Y)$ be a $\Sigma^B_0$ formula.
	If \[V^1 \vdash \forall \vec x, \vec X \exists \vec y, \vec Y \phi(\vec x, \vec y, \vec X, \vec Y),\] then there are polynomial-time functions $f_1, \dots, f_k$, $F_1, \dots, F_m$ such that \[V^1(\vec f, \vec F) \vdash \forall \vec x, \vec X \phi(\vec x, \vec f(\vec x, \vec X), \vec X, \vec F(\vec x, \vec X)).\]
\end{theorem}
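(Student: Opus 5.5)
This is the standard $\V^1$ witnessing theorem, and I would prove it by Buss's method: cut elimination followed by induction on the free-cut-free proof. I would work in the single-sorted setting with Buss's $S^1_2$ and transfer via RSUV, or directly in a two-sorted sequent calculus $LK^2$-$\V^1$ as in \cite{CN2010}. First, I would replace the $\Sigma^B_1$-COMP axioms by the equivalent $\Sigma^B_1$-IND schema, and then Skolemize so that all $\Sigma^B_0$ formulas may use symbols of $\mathcal{L}_{\FP}$. Applying free-cut elimination to a proof of $\exists \vec y, \vec Y\,\phi$ gives a proof in which every formula is a subformula of an axiom, an induction formula, or the endsequent. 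In particular every formula in the proof is $\Sigma^B_1(\mathcal{L}_{\FP})$ (strict, after standard normalizations of bounded quantifiers).

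Next I would prove, by induction on the number of inferences, a stronger statement. Take any sequent $\Gamma \rightarrow \Delta$ of $\Sigma^B_1$ formulas in the free-cut-free proof, with free variables $\vec a, \vec A$. Then there are polynomial-time functions which, given values of the free variables and witnesses for the existential string quantifiers of the antecedent formulas, produce witnesses for some formula in $\Delta$. Moreover, $\V^1(\vec f,\vec F)$, extended by the defining axioms of these functions, proves the witnessed version of the sequent. The step cases are routine. For axioms and weakening there is nothing to do. For $\land$, $\lor$ and $\neg$ rules on $\Sigma^B_0$ formulas nothing needs witnessing. For $\exists$-right the witness is the term itself, and for $\exists$-left the witness becomes a new input. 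For cut on a $\Sigma^B_1$ formula, the function witnessing the left premise is composed with the function witnessing the right premise, and we branch on which succedent formula is true. That branch is a $\Sigma^B_0(\mathcal{L}_{\FP})$ test, so it stays polynomial time, and its correctness is provable by case analysis. Bounded number quantifiers inside $\Sigma^B_0$ need no witness.

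The main obstacle is the $\Sigma^B_1$-IND rule, from $\Gamma, \psi(b) \rightarrow \psi(b+1), \Delta$ infer $\Gamma, \psi(0) \rightarrow \psi(t), \Delta$. Here the witnessing function must iterate the step function $t$ times, where $t$ is polynomial in the lengths of the inputs, starting from the witness for $\psi(0)$. At each step it checks (a $\Sigma^B_0$ test) whether some formula of $\Delta$ has become witnessed, and stops early if so. Because $t$ is a term, this iteration is polynomial time, provided the sizes of the witnesses stay bounded; this holds because the bounded quantifiers carry explicit bounds and we may truncate. Proving correctness requires induction on $b$ for the statement ``after $b$ iterations we hold a witness for $\psi(b)$, or we have already found a witness for $\Delta$.'' That statement is $\Sigma^B_0(\mathcal{L}_{\FP})$, and $\V^1(\VPV)$ proves induction for such formulas. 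This is exactly where the theory's $\Sigma^B_1$ strength is used. The endsequent $\rightarrow \exists \vec y, \vec Y\,\phi$ has an empty antecedent, so the induction yields the required $\vec f, \vec F$ together with a $\V^1(\vec f,\vec F)$-proof of $\phi(\vec x, \vec f, \vec X, \vec F)$.
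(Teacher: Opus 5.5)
Your proposal is correct and is the standard argument. The paper gives no proof of its own and simply defers to Cook and Nguyen \cite{CN2010}, whose proof is the one you outline: free-cut elimination in a sequent calculus for an $\mathcal{L}_{\FP}$-extension of $\V^1$ with the $\Sigma^B_1$-induction rule, followed by a sequent-by-sequent witnessing lemma in which the induction rule is witnessed by polynomially bounded iteration whose correctness is verified with $\Sigma^B_0(\mathcal{L}_{\FP})$-induction. Two points deserve more care: $\Sigma^B_1$-IND and $\Sigma^B_1$-COMP are equivalent only over a base that already has $\Sigma^B_0$-comprehension (your $\mathcal{L}_{\FP}$ symbols supply this), and moving the final statement back to $\V^1(\vec f,\vec F)$ relies on the conservativity of $\V^1(\VPV)$ over $\V^1$.
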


We also have the following related which holds for $V^1(VPV)$ \cite{CN2010}.

\begin{theorem}
	Let $\phi(\vec x, \vec y, \vec X, \vec Y)$ be a $\Sigma^B_0(VPV)$ formula.
	If \[V^1(VPV) \vdash \forall \vec x, \vec X \exists \vec y, \vec Y \phi(\vec x, \vec y, \vec X, \vec Y),\] then there are polynomial-time functions $f_1, \dots, f_k$, $F_1, \dots, F_m$ such that \[V^1(VPV) \vdash \forall \vec x, \vec X \phi(\vec x, \vec f(\vec x, \vec X), \vec X, \vec F(\vec x, \vec X)).\]
\end{theorem}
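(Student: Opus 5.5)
The plan is to reduce to \cref{thm:v1_witnessing} by eliminating the $\VPV$ function symbols from $\phi$, using that $V^1(VPV)$ is a conservative extension of $V^1$. A direct cut-elimination argument in $V^1(VPV)$ would also work, but the reduction is cleaner.

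First I would replace each occurrence of a $\VPV$ symbol $f(\vec t)$ in $\phi$ by a string read off from a computation tableau. By $\FP$ definability, every symbol of $\mathcal{L}_{\FP}$ has a $\Sigma^B_1$ definition in $V^1$ of the form
\[ f(\vec x,\vec X)=Z \iff \exists W\,\delta_f(\vec x,\vec X,Z,W), \]
where $\delta_f\in\Sigma^B_0(\mathcal{L}^2_A)$ says that $W$ is the (unique) accepting computation of the polynomial-time machine for $f$, and $Z$ is its output. $V^1(VPV)$ proves that this $W$ exists and is unique.

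Working innermost-first through the finitely many function occurrences in $\phi$, I would introduce fresh string variables $W_1,\dots,W_k$ (and output strings $Z_j$). I then form a $\Sigma^B_0(\mathcal{L}^2_A)$ formula $\psi(\vec x,\vec y,\vec X,\vec Y,\vec W)$. It is the conjunction of the $\delta_{f_j}$, with each $\delta_{f_j}$ applied to the earlier $Z$'s and bounded by the polynomial size bounds, together with $\phi$ in which each term $f_j(\cdots)$ is replaced by $Z_j$. Number-valued functions are handled the same way. Uniqueness of the tableaux is exactly what makes this substitution sound inside $\neg$ and $\to$, and not only at positive occurrences. Thus $V^1(VPV)\vdash \phi\leftrightarrow\exists\vec W\,\psi$.

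Next, from the hypothesis we get
\[ V^1(VPV)\vdash\forall\vec x,\vec X\,\exists\vec y,\vec Y,\vec W\,\psi. \]
This sentence lies in $\mathcal{L}^2_A$, so by the Cook--Nguyen conservativity of $V^1(VPV)$ over $V^1$ it is provable in $V^1$. Applying \cref{thm:v1_witnessing}, we obtain polynomial-time functions $\vec f,\vec F,\vec G$ such that
\[ V^1(\vec f,\vec F,\vec G)\vdash\forall\vec x,\vec X\,\psi(\vec x,\vec f,\vec X,\vec F,\vec G). \]
Every polynomial-time function has a symbol in $\mathcal{L}_{\FP}$, and its defining axioms follow from the Cobham axioms of $VPV$. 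So this proof can be carried out in $V^1(VPV)$. Finally, $\psi$ implies $\phi$ with the $Z_j$ identified with the actual values $f_j(\cdots)$, by the equivalence above. This yields $V^1(VPV)\vdash\forall\vec x,\vec X\,\phi(\vec x,\vec f(\vec x,\vec X),\vec X,\vec F(\vec x,\vec X))$.

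The main obstacle is the first step. One has to check carefully that the translation $\phi\mapsto\exists\vec W\psi$ is provably equivalent in $V^1(VPV)$ for arbitrarily nested terms and for occurrences under negation. One also has to check that the functions obtained from \cref{thm:v1_witnessing}, together with their defining axioms, genuinely embed into $V^1(VPV)$. If the conservativity bookkeeping becomes awkward, my fallback is a direct free-cut-elimination argument. In that argument, every sequent in a $V^1(VPV)$ proof consists of $\Sigma^B_1(\mathcal{L}_{\FP})$ formulas, and polynomial-time witnesses are built by induction on the proof, as in the standard $V^1$ witnessing proof.
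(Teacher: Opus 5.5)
The paper gives no argument of its own here; it only cites \cite{CN2010}. There the extended-language statement is the primary result. Witnessing is proved directly for $V^1(\VPV)$ (Cook--Nguyen's $\widetilde{V^1}$) by free-cut elimination, with every sequent made of $\Sigma^B_1(\mathcal{L}_{\FP})$ formulas and $\mathcal{L}_{\FP}$ witnesses built by induction on the proof; this is your fallback. The $\mathcal{L}^2_A$ version \cref{thm:v1_witnessing} is then obtained from it by conservativity. Your main route runs in the opposite direction. Taking \cref{thm:v1_witnessing} as a black box, this is a legitimate and more modular derivation. It is not an independent proof in substance, though, and it forces normal-form bookkeeping that the direct argument never needs.

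As described, Step 1 has a concrete hole, and it is not one of those you flagged. A $\Sigma^B_0(\VPV)$ formula may apply function symbols to \emph{bound} number variables, e.g.\ $\forall i<|X|\,\bigl(Y(i)\leftrightarrow f(i,X)=0\bigr)$. For such an occurrence there is no single output $Z_j$ and tableau $W_j$ that can be pulled in front of $\forall i$. Pushing $\exists W$ inside the bounded quantifier instead leaves the strict $\exists\vec W\,\Sigma^B_0$ form that \cref{thm:v1_witnessing} needs. There are two ways to fix this:
\begin{enumerate}
\item Use arrays: replace $f(i,X)$ by $Z^{[i]}$ and add $\forall i<t\,\delta_f(i,X,Z^{[i]},W^{[i]})$. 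The existence of the arrays is provable by $\Sigma^B_1$-induction on $t$ (replacement).
\item More simply, first replace $\phi$ by a quantifier-free $\mathcal{L}_{\FP}$ formula that $V^1(\VPV)$ proves equivalent to $\phi$, e.g.\ $\chi_\phi(\vec x,\vec y,\vec X,\vec Y)=1$ for a characteristic function $\chi_\phi$. Then every function argument is built from free variables, and your innermost-first hoisting goes through.
\end{enumerate}

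Two identifications must also be provable, not merely extensional.
\begin{itemize}
\item In Step 1, $V^1(\VPV)\vdash f(\vec x,\vec X)=Z\leftrightarrow\exists W\,\delta_f$ does not follow from $\FP$-definability, which only says that some $\Sigma^B_1$ formula defines the same function. It comes from how conservativity is proved: each symbol is $\Sigma^B_1$-defined in $V^1$ so that its $\VPV$ defining axioms become theorems. You then need $\Sigma^B_1(\mathcal{L}_{\FP})$-induction in $V^1(\VPV)$ to show that the symbol and the defined function coincide.
\item In Step 3, ``every polynomial-time function has a symbol in $\mathcal{L}_{\FP}$'' is likewise not enough. If \cref{thm:v1_witnessing} supplied only arbitrary $\Sigma^B_1$-definitions of $\vec f,\vec F,\vec G$, finding $\mathcal{L}_{\FP}$ symbols provably satisfying them would be another instance of the theorem you are proving. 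You need the base theorem in the form Cook--Nguyen state it: the witnesses are $\mathcal{L}_{\FP}$ symbols and the conclusion holds in a subtheory of $V^1(\VPV)$. With that form, Step 3 is immediate.
\end{itemize}
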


For $\mathsf{TV}^1$, we get $\PLS$-witnessing.

\begin{theorem}[\cite{CN2010}]\label{thm:tv1_witnessing}
    Let $\phi(\vec x, \vec y, \vec X, \vec Y)$ be a $\Sigma^B_0(VPV)$ formula.
	If \[\mathsf{TV}^1 \vdash \forall \vec x, \vec X \exists \vec y, \vec Y \phi(\vec x, \vec y, \vec X, \vec Y),\] then there is a polynomial-time function $F$, and an $\Iter$ instance with graph $G_\phi(x, X, Z)$ such that \[\mathsf{V}^1(\mathsf{VPV})  \vdash  G_\phi(\vec x, \vec X) \longrightarrow \phi(\vec x, \vec X, F(\vec x, \vec X, Z)).\]
\end{theorem}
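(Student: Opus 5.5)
We follow the Buss--Krajíček witnessing argument, in its two-sorted form from \cite{CN2010}, and keep track of the fact that every correctness claim is proved in $\mathsf{V}^1(\mathsf{VPV})$, not merely true. The induction is \emph{meta-level}, on a fixed $\mathsf{TV}^1$-proof.

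\textbf{Normalizing the proof.} First we put the proof into normal form. Formalize $\mathsf{TV}^1$ as a two-sorted sequent calculus $LK^2$ whose nonlogical axioms are 2-BASIC, plus a $\Sigma^B_1$-SIND rule
\[
\frac{\Gamma, \varphi(X) \rightarrow \varphi(S(X)), \Delta}{\Gamma, \varphi(\epsilon) \rightarrow \varphi(T), \Delta}.
\]
It is convenient to work over $\mathsf{VPV}$ symbols, so that $\Sigma^B_0(\mathsf{VPV})$-formulas behave as open formulas. By free-cut elimination, a proof of $\rightarrow \exists \vec y, \vec Y \phi$ can be taken to be anchored, so every formula occurring in it is (a subformula of) a $\Sigma^B_1(\mathsf{VPV})$-formula. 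We then prove the following for every sequent $\Gamma \rightarrow \Delta$ of the proof, with free variables $\vec a$:
\begin{itemize}
\item given the free variables and witnesses $\vec W$ for the existential formulas in $\Gamma$, there is an $\Iter$ instance, i.e.\ a successor circuit produced by a $\mathsf{VPV}$ function of $(\vec a, \vec W)$;
\item there is a $\mathsf{VPV}$ function mapping any solution $Z$ of that instance to a witness for some formula of $\Delta$;
\item $\mathsf{V}^1(\mathsf{VPV})$ proves: if the formulas of $\Gamma$ hold with witnesses $\vec W$ and $Z$ solves the instance, then the output witnesses some formula of $\Delta$.
\end{itemize}
The theorem is the case of the end-sequent, after absorbing the quantifier-free disjunction into $F$.

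\textbf{Routine cases.}
\begin{itemize}
\item Axioms and $\Sigma^B_0$ logical rules need only the trivial instance with no proper solutions. The witness is computed directly, and correctness is a $\Sigma^B_0(\mathsf{VPV})$ fact about $\mathsf{VPV}$ symbols.
\item Quantifier rules: $\exists$-right stores the term as the witness, and $\exists$-left substitutes the given witness for the eigenvariable.
\item Binary rules such as $\wedge$-right and $\vee$-left run the two instances side by side. Their product is again an $\Iter$ instance, ordered lexicographically on pairs, and a solution of the product projects to a solution of one component.
\item Cut on a $\Sigma^B_1$-formula $A$ needs sequential composition. First solve the left instance, which either witnesses $\Delta$ or produces a witness for $A$. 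In the latter case, that witness determines the right instance, which we then run.
\end{itemize}
For composition we define a single $\Iter$ instance whose nodes are tagged pairs: either ``phase 1, node $u$'', or ``phase 2, solution $z_1$ of phase 1, node $v$''. Tags are ordered so that phase-1 nodes precede phase-2 nodes. A sink of phase 1 that is a genuine solution is redirected to the start node of phase 2. We check in $\mathsf{V}^1(\mathsf{VPV})$, by an open case analysis on the successor circuit, that every solution of the combined instance yields a solution of phase 1 or of phase 2. This is what gives $\mathsf{V}^1$-provable closure of $\PLS$ under composition.

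\textbf{The main obstacle: the SIND rule.} The upper sequent yields a uniform $\Iter$ instance $I_X$ for each string $X$, mapping a witness of $\varphi(X)$ to a witness of $\varphi(S(X))$ or of $\Delta$. We must chain exponentially many such steps, from $\epsilon$ up to $T$, inside one polynomial-size instance. The plan is to take as nodes triples $(X, W, v)$, where $X \le T$ is the current stage, $W$ is the current witness for $\varphi(X)$, and $v$ is a node of $I_X$. Order them lexicographically with $X$ most significant, so moving from stage $X$ to stage $S(X)$ is always lex-increasing. The successor follows $I_X$. When it reaches a solution of $I_X$ that witnesses $\varphi(S(X))$, it jumps to the node $(S(X), W', \text{start})$. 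When it reaches a solution witnessing $\Delta$, or when $X = T$, it stops.

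Correctness must be provable in $\mathsf{V}^1(\mathsf{VPV})$ using only the local information at a solution. A solution $(X, W, v)$ has to satisfy ``$W$ witnesses $\varphi(X)$'', and this is not known a priori; we cannot use $\Sigma^B_1$-induction on $X$, since $X$ is exponential. I would first try to make validity self-certifying: each node's successor checks whether $W$ witnesses $\varphi(X)$, a $\Sigma^B_0(\mathsf{VPV})$ test, and makes invalid nodes inactive. The only remaining danger is that a step from a valid node produces an invalid node. That would be a genuine solution of $I_X$ mapping to a non-witness of $\varphi(S(X))$, and the induction hypothesis rules this out provably. The final verification is then again an open case analysis, which fits within $\mathsf{V}^1(\mathsf{VPV})$.
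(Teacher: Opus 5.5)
Your proposal is the standard Buss--Kraj\'{\i}\v{c}ek witnessing argument: a free-cut-free proof, $\Iter$ instances composed along cuts, and the induction rule handled by a single chained instance whose nodes self-certify the current witness, so that correctness reduces to an open case analysis plus instances of the induction hypothesis. The paper does not reprove this result but cites it from \cite{CN2010}, where it rests on this same argument, so your route matches and is correct in outline, with one small slip: for $\wedge$-right (and for $\vee$-left, unless you first test which disjunct holds) a solution of the combined instance must yield solutions of \emph{both} premise instances rather than of one component, which your sequential composition already provides.
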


As $\mathsf{TV}^1$ proves the totality of $\Iter$, we can derive a statement more like Theorem \ref{thm:v1_witnessing}. Notice, however, that one only needs $\mathsf{V}^1$ to ``prove" the witnessing theorem. We will need this finegrained aspect of Theorem \ref{thm:tv1_witnessing} in Section \ref{sec:G1}. 
\subsection{Propositional Translations}

The following translations will allow us to relate second-order bounded arithemtic theories and propositional proofs. We follow the treatment in Cook-Nguyen \cite{CN2010}.

Let $\phi(\vec x, \vec X)$ be a $\Sigma^B_0$-formula with free number variables $\vec x$ and string variables $\vec X$. 
We define a polynomial-size, bounded-depth family of propositional formulas \[||\phi(\vec x, \vec X)|| = \set{||\phi(\vec x, \vec X)||_{\vec n, \vec m} : m_i, n_i \in \NN},\] where each $||\phi(\vec x, \vec X)||_{\vec m, \vec n}$ 
is a propositional formula defined such that it is valid if and only if for every $\vec x$ and $\vec X$, if $|X_k| = n_k$ for each $k$, then $\phi(\vec m, \vec X)$ is true in the standard model.

Let us now formally define propositional translations.
We use $0$ and $1$ for the boolean constants False and True, and write $\underline n$ to mean the numeral $n$. If $t$ is a closed term then $t^{\NN}$ is the value that $t$ takes in the standard model, and if $\psi$ is a closed sentence, we write $\psi^{\NN}$ to mean the truth value that $\psi$ takes under the standard model.
We begin by considering the case with a single string variable $\phi(X)$, and note that if $\phi$ had additional number variables $\phi(\vec x, X)$, then we define \[||\phi(\vec x, X)||_{\vec n, m} := ||\phi(\underline{\vec n}, X)||_m,\] and hence we can safely reduce to this case.

The propositional formula $||\phi(X)||_{m}$ is defined as follows by induction. 
Introduce $m$ propositional variables $p[X]_0, p[X]_1, \dots, p[X]_{m-1}$ representing the values of $X(i)$ for $i < m$.

\paragraph{Atomic Formulas.} 
For atomic formulas, we have the following cases:
\begin{itemize}
	\item If $\phi(X) := X = X$ then $||\phi(X)||_{m} = 1$.
	\item If $\phi(X) := t(|X|) = u(|X|)$ then $||\phi(X)||_{m} := (t(\underline m) = u(\underline m))^{\NN}$
	\item If $\phi(X) := X(t(|X|))$, then define \[||\phi(X)||_{m} := \begin{cases} 0 & m = 0 \\ p[X]_{t(m)^{\NN}} & t(m)^\NN < m-1 \\ 1 & t(m)^\NN = m-1 \\ 0 & t(m)^\NN > m-1 \end{cases}\]
\end{itemize}

\paragraph{Inductive Formulas.}
For the induction step, we define:
\begin{itemize}
	\item If $\phi(X) = \psi(X) \circ \nu(X)$ then $||\phi(X)||_{m} = ||\psi(X)||_{m} \circ ||\nu(X)||_{m}$, for $\circ \in \set{\land, \lor}$.
	\item If $\phi(X) = \neg \psi(X)$ then $||\phi(X)||_{m} = \neg ||\psi(X)||_{m}$.
	\item If $\phi(X) = \exists x \leq t(|X|) \psi(x, X)$ then $||\phi(X)||_{m} = \bigvee_{i=0}^{t(\underline m)^\NN} ||\psi(\underline i, X)||_m$
	\item If $\phi(X) = \forall x \leq t(|X|) \psi(x, X)$ then $||\phi(X)||_{m} = \bigwedge_{i=0}^{t(\underline m)^\NN} ||\psi(\underline i, X)||_{m}$.
\end{itemize}

The case of strictly bounded number quantifiers $\exists/\forall x < t$ is handled similarly. 
If we have a strictly bounded quantifier like $\exists x < 0$ or $\forall x < 0$, then the corresponding $\vee$ or $\wedge$ is empty, and hence we output $0$ and $1$, respectively.

If we have multiple string variables $\vec X$, then the above construction for $||\phi(\vec X)||_{\vec m}$ proceeds identically by introducing propositional variables $p[X_i]_j$ for each $X_i$ in the list $\vec X$ and each $j \leq m_i$.
Now we have one extra atomic formula case to handle, namely $\phi(\vec X) := X_i = X_j$ for $i \neq j$.
In this case, we can reduce to the single-variable case by defining \[||\phi(\vec X)||_{\vec m} := |||X_i| = |X_j| \land \forall x < |X_i|: X_i(x) \leftrightarrow X_j(x)||_{\vec m}.\]

Not only can first-order formulas be propositionally translated, but $\mathsf{V}^1$ proofs (of $\Pi^B_1$ formulas) can also be translated into polynomial-size Extended Frege proofs: 

\begin{theorem}[\cite{CN2010}]\label{thm:prop_trans_ef}
	Let $\varphi$ be $\Pi^B_1$ formula. If $\mathsf{V}^1 \vdash \varphi$, then $EF\vdash || \varphi||_n$. Furthermore, the $EF$ proofs are of polynomial size. 
\end{theorem}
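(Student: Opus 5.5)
The plan is to go through the universal theory $\VPV$ instead of translating $\V^1$ proofs directly. There are three steps: first remove the quantifier complexity of the $\V^1$ proof, then translate a quantifier-free $\VPV$ derivation line by line into Extended Frege, and finally check that the size is polynomial.

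\textbf{Step 1: reduce to an open theorem of $\VPV$.} Write $\varphi \equiv \forall \vec Y \le \vec t\; \psi(\vec x, \vec X, \vec Y)$ with $\psi \in \Sigma^B_0$. Since $\V^1 \vdash \varphi$, we also have $\V^1 \vdash \psi$ with $\vec Y$ free. Because $\V^1(\VPV)$ is conservative over $\V^1$, the witnessing theorem for $\V^1(\VPV)$ stated above applies to $\Sigma^B_0$ (indeed $\Sigma^B_1$) consequences. Combined with Cook and Nguyen's result that $\V^1$ is $\Sigma^B_1$-conservative over $\VPV$, this gives $\VPV \vdash \psi$. The bounded number quantifiers inside $\psi$ are then eliminated in favour of $\VPV$ function symbols (characteristic functions of $\Sigma^B_0$ predicates). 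This yields an open formula $\psi'$ that is $\VPV$-provably equivalent to $\psi$. The result is a $\VPV$ derivation of $\psi'$ in which every line is quantifier-free, either in equational form or as open $PK$-style sequents.

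\textbf{Step 2: translate the open derivation.} For fixed lengths $\vec n$, every function symbol $f$ of $\VPV$ applied to strings of known lengths is computed by a polynomial-size circuit. This circuit is built inductively along $f$'s Cobham-style definition: composition, bounded string comprehension, and limited recursion. I will introduce one extension variable per gate of that circuit, with defining clauses $e_g \leftrightarrow (\text{gate of children})$. Each occurrence of a term $f(\vec s)$ in the derivation then translates to a tuple of extension variables, and each atomic formula translates to a Boolean combination of these variables. The derivation is translated as follows.
\begin{itemize}
\item Logical rules of the open calculus map directly onto $PK$ rules.
\item Substitution instances are handled by reusing or re-introducing extension variables.
\item Each defining axiom of a function symbol translates to a statement that holds locally by the gate definitions, so it has a short $EF$ proof.
\item The $\VPV$ induction rule for open formulas, $\phi(0), \phi(z)\to\phi(z+1) \vdash \phi(t)$, translates, once the value of $t$ is fixed, into a chain of $t^{\NN}$ cuts on the translated instances $\|\phi(\underline i)\|$.
\end{itemize}
Since $t^{\NN}$ is polynomial in $\vec n$, this chain is polynomially long.

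\textbf{Step 3: size bound and the main obstacle.} The proof is a fixed $\VPV$ derivation instantiated at parameters $\vec n$, and every term in it has polynomially bounded value and length. Each translated line therefore has polynomial size, and there are polynomially many lines. Finally, I will check that the resulting $EF$ proof proves $\|\psi'\|_{\vec n}$, and that $EF$ can derive $\|\varphi\|_{\vec n}$ from it with a small additional proof.

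The main obstacle is Step 2's treatment of limited recursion and open induction. The circuits for nested recursive definitions must stay of polynomial size, uniformly in $\vec n$. In addition, the defining equations of the recursion must become locally checkable identities between extension variables, and the translation must not re-expand terms exponentially. I would handle this by fixing, for each function symbol and each length tuple, a single canonical circuit whose gate variables are shared across all occurrences of the same term. Correctness of each defining axiom is then proved by induction on the structure of the Cobham definition, which has constant depth because the $\VPV$ proof is fixed.
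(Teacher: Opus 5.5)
The paper gives no proof of this statement. It imports it from Cook--Nguyen, where it comes out of the general proof-theoretic translation: a free-cut-free $\V^1$ derivation becomes a tree-like $G_1^*$ proof, with $\Sigma^B_1$-comprehension and induction turning into $\Sigma^q_1$ cuts. $EF$ then $p$-simulates $G_1^*$ on quantifier-free formulas by introducing extension variables for the witnesses of those cuts.

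Your route is Cook's original one, and it is a correct outline. All of $\Sigma^B_1$-comprehension is absorbed into the $\forall\Sigma^B_1$-conservativity of $\V^1(\VPV)$ over $\VPV$. Herbrand's theorem then leaves finitely many instances of open axioms, and terms become polynomial-size circuits whose gates are extension variables.

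What your approach buys:
\begin{itemize}
\item The propositional side is elementary; no quantified proof system is needed.
\item The conclusion is stronger, and it is the form the paper actually uses. Every open $\VPV$ theorem, hence every $\Sigma^B_0(\mathcal{L}_{\FP})$ consequence of $\V^1(\VPV)$, gets polynomial-size $EF$ translations. \cref{sec:G1} translates formulas like $\ImpProof_{\Res}$ and $\Reduction_{\mathrm{Iter}}$, which contain polynomial-time function symbols. The appendix's translation of $\mathcal{L}^2_A$ formulas does not cover these, but in your approach they are just circuits.
\end{itemize}

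What it costs:
\begin{itemize}
\item The real work is hidden in the conservativity theorem. The witnessing theorem quoted in the appendix is not enough, since its conclusion is provable only in $\V^1(\VPV)$. You need the form whose conclusion is provable in $\VPV$, and that is the result your Step~1 actually rests on.
\item The method depends on $\forall\Sigma^B_1$ consequences being witnessed inside a universal theory of $\FP$ functions. So, unlike the $G_1^*$ route, it does not adapt directly to the companion \cref{thm:prop_trans_g1} ($\TV^1$ into $G_1$).
\end{itemize}

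Two points to tighten:
\begin{itemize}
\item In two-sorted $\VPV$, the length of a string term and the value of a number term can depend on the bits of the free strings. So ``strings of known lengths'' and ``once the value of $t$ is fixed'' should become polynomial bounds $B(\vec n)$, with padded or one-hot encodings and case splits over the polynomially many possible values. This is also what makes the limited-recursion axioms locally checkable when the recursion argument is a data-dependent term.
\item Since $\VPV$ is universal, the Herbrand form contains only axiom instances, so there is no induction rule to translate.
\end{itemize}
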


This extends to the theory $\TV^1$ as well.

\begin{theorem}[\cite{CN2010}]\label{thm:prop_trans_g1}
    Let $\varphi$ be a $\Pi^B_1$ formula. If $\TV^1 \vdash \varphi$, then $G_1\vdash || \varphi||_n$. Furthermore, the $G_1$ proofs are of polynomial size. 
\end{theorem}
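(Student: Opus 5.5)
The plan is to reduce to two ingredients: the $\PLS$-witnessing for $\TV^1$ (\cref{thm:tv1_witnessing}) and the $EF$ translation for $\V^1$ (\cref{thm:prop_trans_ef}). These replace the only non-$\V^1$ part of a $\TV^1$ proof, namely $\Sigma^B_1$-$\mathsf{SIND}$, by a single $\Sigma^q_1$ statement: the totality of $\iter$. That statement is then proved once, in $G_1$, with polynomial size.

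\textbf{Step 1: cut-free normal form and witnessing.} Write the $\Pi^B_1$ formula as $\varphi \equiv \forall \vec x, \vec X\, \phi(\vec x, \vec X)$ with $\phi \in \Sigma^B_0$. Suppose $\TV^1 \vdash \varphi$; by free-cut elimination we may take the proof to contain only $\Sigma^B_1$ and $\Pi^B_1$ formulas. Viewing $\varphi$ as a trivial $\forall\Sigma^B_1$ sentence, \cref{thm:tv1_witnessing} gives an $\iter$ instance $G_\phi(\vec x, \vec X, \cdot)$ and a $\VPV$ function $F$ such that
\[\V^1(\VPV) \vdash \iter(G_\phi(\vec x,\vec X), Z) \rightarrow \phi(\vec x, \vec X).\]
Here $\iter(G_\phi(\vec x,\vec X), Z)$ says that $Z$ is a solution of the instance. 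By conservativity of $\V^1(\VPV)$ over $\V^1$ and \cref{thm:prop_trans_ef}, $EF$, and hence $G_1$, has polynomial-size proofs of the propositional translation of this implication. The translation introduces extension variables for the circuits computing $G_\phi$ and $F$.

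\textbf{Step 2: totality of $\iter$ in $G_1$.} It remains to give polynomial-size $G_1$ proofs of the $\Sigma^q_1$ formula $\exists Z\, \|\iter(G_\phi, Z)\|$, i.e.\ that every circuit instance of $\iter$ of size $\ell$ has a solution. A single cut on this $\Sigma^q_1$ formula against the translated implication then yields $\|\phi\|_n$.

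I will prove this totality by a least-number / maximization principle. Consider the $\Sigma^q_1$ formula
\[\Psi(U) := \exists W\, \big(W \geq_{lex} U \wedge W \text{ is reachable from } 0^\ell \text{ along a path whose elements increase}\big).\]
Write down a sequent that derives totality from the maximum such $W$: if $W$ is maximal and its successor is not a solution, the successor would be larger and reachable, contradicting maximality. Then prove the existence of a maximum by a bit-by-bit binary-search argument over the $\ell$ bits of $U$. At each bit position we cut on a $\Sigma^q_1$ formula, namely the instance of $\Psi$ with that prefix fixed. This keeps the proof to $\poly(\ell)$ lines rather than $2^\ell$.

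\textbf{Main obstacle.} The hard step is that the maximization principle cannot be proved by naive iteration. Unwinding $\mathsf{SIND}$ over strings of length $n$ takes $2^n$ steps, and doubling via substitution instances of $\varphi(p) \rightarrow \varphi(p+2^i)$ doubles the proof size at each stage, since $\forall p\,(\varphi(p)\rightarrow\varphi(p+2^i))$ is $\Pi^q_2$ and may not be cut in $G_1$.

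The binary-search formulation avoids this: the induction is over the $\ell$ bit positions, with a $\Sigma^q_1$ hypothesis quantifying over the remaining suffix, so only polynomially many $\Sigma^q_1$ cuts are needed. I would first try to make this argument precise. If it fails, I would fall back on the classical route (Krajíček's book), which translates the free-cut-free $\TV^1$ proof sequent by sequent. There each $\Sigma^B_1$ formula becomes a $\Sigma^q_1$ formula, and the $\mathsf{SIND}$ inferences are simulated by the same $\ell$-step bitwise induction with $\Sigma^q_1$ cuts.
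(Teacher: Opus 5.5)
\emph{Verdict: the decomposition is legitimate, but Step~2 has a genuine gap, and it is the step that carries all the difficulty.}

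The paper gives no proof of this statement; it cites Cook--Nguyen, where a $\TV^1$ proof is translated into $G_1$ directly. Your Step~1 is a legitimate reduction: formalized $\PLS$-witnessing in $\V^1(\VPV)$, then the $EF$ translation, then a single $\Sigma^q_1$ cut. But it puts all the weight on Step~2. Polynomial-size $G_1$ proofs of $\exists Z\,\|\iter(G,Z)\|$ are precisely what the Cook--Nguyen translation gives for the single $\TV^1$-provable sentence ``every $\iter$ instance has a solution''. So Step~2 is not a side lemma, and your sketch does not establish it.

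A first, repairable problem: $\Psi(U)$ is not a polynomial-size $\Sigma^q_1$ formula. Increasing paths from $0^\ell$ can have length $2^\ell$ (take $S(y)=y+1$), so reachability has no short witness. You can fix this by using the quantifier-free predicate $S(W)>_{lex}W$ (``$W$ is active''). A maximal active $W$ does yield a solution.

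The real problem is proving that such a maximum exists. That statement is $\Sigma^q_2$, so it cannot be a line of a $G_1$ proof. Your bit-by-bit search also brings back exactly the doubling you diagnosed:
\begin{itemize}
\item Let $E_k(\vec r):=\exists \vec s\,[S(\vec r\vec s)>_{lex}\vec r\vec s]$ for $|\vec r|=k$. The natural level-$k$ claim is ``if block $\vec r$ contains an active node, then so does some later block''.
\item Deriving it needs the level-$(k+1)$ claim at \emph{both} prefixes $\vec r0$ and $\vec r1$, because the chain can leave block $\vec r0$ and enter block $\vec r1$.
\item In a dag-like proof these are two different substitution instances of a sequent with free prefix variables. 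Each level therefore costs two copies of the previous one, $2^\ell$ in total.
\item The evident way to merge the copies is to quantify the prefix universally, which is exactly the forbidden $\Pi^q_2$ cut formula.
\item Deciding the bits by case distinction on $\Sigma^q_1$ formulas instead gives a case tree with $2^\ell$ leaves. Substituting those formulas for the bits leaves $\Sigma^q_1\cup\Pi^q_1$.
\end{itemize}
So the claim that polynomially many $\Sigma^q_1$ cuts suffice is unsupported. Your fallback via Kraj\'i\v{c}ek's book relies on the same unexplained bitwise simulation of $\mathsf{SIND}$.

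What is missing is an actual method for exponentially long $\Sigma^q_1$-induction in $G_1$. One known route goes up a level. In tree-like $G_2$, the doubling formulas $\forall \vec q\,(A(\vec q0^k)\to A(\vec q1^k))$ are legal cut formulas. Each one is instantiated twice by $\forall$-left and then contracted, so $G_2^*$ proves totality of $\iter$ in polynomial size. You would then still need Kraj\'i\v{c}ek's nontrivial p-simulation of $G_2^*$ by $G_1$ on $\Sigma^q_1$ formulas.
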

\subsection{Formalizing Propositional Logic in \texorpdfstring{$V^1$}{V1}}\label{sec:prop_logic}

We give an informal and high-level survey of the formalization of the syntax and semantics of propositional logic in bounded arithmetic. For a complete treatment, see \cite{CN2010}. The goal of this section will be to show a back-and-forth relationship between propositional proof systems and several standard theories of bounded arithmetic.

\paragraph{Syntax of Propositional Logic.} In $\V^1$ (and in fact, in much weaker theories), valid formula and $G_i$-proof predicates $\mathsf{Fmla}$, $\mathsf{Proof}_{G_i}$ are $\Delta^B_1$-definable. All that is required to design such a predicate are $\TC^0$-functions needed to verify proper syntax, as well as process parse trees of formulas.

\begin{definition}[Proofs, Satisfiability, and Reflection Principles]
    For a proof system $P$ (a polynomial-time verifier) and parameters $n,m,s$, a reflection principle for $P$ is a CNF formula 
    \[ \text{Refl}_P:=  \text{Proof}_P(F,\Pi, s) \supset \text{SAT}(F,X), \]
    where $\text{SAT}$ is the $\VPV$-predicate stating that $F$ is a formula satisfied by the truth assignment $X$. As well, $\text{Proof}_P$ is a $\VPV$-predicate which is true iff $\Pi$ is a $P$-proof of $F$. Such a formula can be obtained, for example, by taking the Cook-Reckhow translation of the verifying Turing Machine for $P$-proofs. Reflection principles establish the \emph{soundness} of a proof system.

    The satisfiability predicate and reflection principles may be generalized to quantified boolean formulas in the natural way, by changing the $\text{SAT}$ predicate to be $\text{SAT}_{\Sigma^q_1} \equiv \exists |W| < |F|\, \text{SAT}(F,X,W)$, where values for $W$ are substituted into existentially quantified variables of $F$. We refer to this reflection as $\Sigma^q_1$-reflection.
\end{definition}

\noindent Below are important theorems on reflection for theories $\V^1$ and $\TV^1$.

\begin{theorem}[Chapter X, \cite{CN2010}]
    $\V^1$ proves the reflection principle for Extended Frege, and the reflection principle (for $\Sigma^q_1$-formulas) for tree-like $G_1$. $\TV^1$ proves the reflection principle of $G_1$.
\end{theorem}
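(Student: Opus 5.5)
The theorem has three parts, and I would prove each by formalizing a soundness argument inside the theory, using that $\mathrm{Fmla}$, $\mathrm{Proof}_{EF}$, $\mathrm{Proof}_{G_1}$ and the evaluation predicate $\mathrm{SAT}(F,X)$ are $\VPV$-definable (\cref{sec:prop_logic}).

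\textbf{Reflection for $EF$ in $\V^1$.} I argue in $\V^1(\VPV)$, which is conservative over $\V^1$. Fix $F$, a proof $\Pi = L_1,\dots,L_t$ and an assignment $X$ to the original variables.
\begin{enumerate}
\item Extend $X$ to an assignment $X^*$ of all extension variables. Scan $\Pi$ in order; whenever an extension axiom $e \leftrightarrow A$ appears, set $e := A(X^*)$. This is a polynomial-time function, so it is a symbol of $\VPV$.
\item Prove by induction on $i$ that every line $L_j$ with $j \le i$ evaluates to true under $X^*$. This statement is $\Sigma^B_0(\VPV)$, so $\Sigma^B_1$-IND suffices.
\item The induction step is a finite case analysis over the rules of \cref{def:pk}. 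Each rule is locally sound, and each extension axiom is true by the definition of $X^*$.
\end{enumerate}
Applying this at $i=t$ shows the final line is true under $X^*$. Hence it is true under $X$, since its variables are original ones. This gives $\mathrm{Proof}_{EF}(F,\Pi) \supset \mathrm{SAT}(F,X)$.

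\textbf{$\Sigma^q_1$-reflection for tree-like $G_1$ in $\V^1$.} Here I would give an explicit polynomial-time witnessing procedure and prove it correct.
\begin{enumerate}
\item The procedure takes a sequent node of the proof tree, an assignment to its free variables, and witnesses for the existential formulas in its antecedent. It returns an index of a succedent formula together with a witness for it.
\item It is defined by recursion on the proof tree. For a $\Sigma^q_1$-cut on $\exists y A$, first run the left subproof. If it returns a witness for some formula of $\Delta$, return that. If it returns a witness $w$ for $A$, feed $w$ into the right subproof.
\item The $\exists$-right, $\forall$-left and eigenvariable rules are handled by substitution in the obvious way.
\item In a tree-like proof each subproof is entered once per call of its unique parent. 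So for a fixed root input the whole computation has size linear in $|\Pi|$, and the procedure is a $\VPV$ function.
\item Correctness is the statement that every node in the computation trace returns a true succedent formula. This is $\Sigma^B_0(\VPV)$ over the polynomial-size trace, so it is proved by $\Sigma^B_1$-IND, processing the trace bottom-up.
\end{enumerate}

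\textbf{Reflection for dag-like $G_1$ in $\TV^1$.} This part is the main obstacle. The recursion above now re-enters shared subproofs on exponentially many inputs. A naive ``walk back to a false initial sequent'' also fails: at a cut one must decide whether $\exists w\, A(w)$ holds, which is an $\NP$ question.

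My plan is to follow the Buss--Krajíček style argument. Fix the assignment $\sigma$ and consider
\[\psi(i):\quad \text{every sequent } j \le i \text{ of } \Pi \text{ is valid,}\]
where validity of a $\Sigma^q_1$-sequent means: for all values of the free and antecedent-witness variables, some succedent formula has a witness. The induction step is again rule-by-rule soundness, with the cut case composing the two premises exactly as in the tree-like procedure.

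Stated this way, $\psi(i)$ is $\Pi^B_2$, and $\TV^1$ only has $\Sigma^B_1$-SIND (equivalently, $\Pi^B_1$-SIND). I would first try to bring the complexity down in one of two ways.
\begin{itemize}
\item Move the outer universal string quantifiers into parameters and induct on a $\Sigma^B_1$ statement asserting the existence of a table of witnesses. This is legitimate only if that table is polynomial in size.
\item Failing that, prove the contrapositive: from a falsified end sequent, produce an $\Iter$ instance whose solutions are falsified initial sequents. Then invoke the fact that $\TV^1$ proves $\Iter$ total. I would use the length-minimal counterexample (via $\Sigma^B_1$-MIN on strings) to resolve the cut choices.
\end{itemize}
If neither route goes through cleanly, I would fall back on citing the $\T^1_2$ proof from \cite{krajicek1990quantified,Krajicek2019} and transferring it through the RSUV isomorphism \cite{CN2010}.
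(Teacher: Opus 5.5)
Your first two parts are the standard arguments and are fine. The paper gives no argument for this statement beyond the citation to Cook--Nguyen \cite{CN2010}, Chapter~X. The gap is in the third part, which is the only nontrivial one, and there neither of your routes is an argument.

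Route~1 fails as described. Even with the end-sequent assignment $\sigma$ fixed, a line of a dag-like proof may be queried at exponentially many inputs: the values of its eigenvariables, and of the cut witnesses fed into it, depend on the path taken through the unfolded proof. So a bottom-up table indexed by line number is exponential. Were it polynomial, the same $\Sigma^B_1$ number induction would already run in $\V^1$ and give $EF \geq_p G_1$.

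Route~2 restates your obstacle rather than removing it. ``Line $j$ is falsified by $(\tau,W)$'' says that every succedent $\Sigma^q_1$-formula lacks a witness, which is a $\Pi^B_1$ condition. So the minimization you need is over a $\Pi^B_1$ predicate, not a $\Sigma^B_1$ one. By a pairing argument that principle already gives $\Sigma^B_2$-minimization, which is the strength of $\TV^2$, not $\TV^1$. (With $\Pi^B_1$-minimization your contrapositive would in fact work, since an existence argument never has to decide a cut.) Moreover, an $\iter$ instance is a polynomial-time circuit, so it cannot consult such choices. Falling back on the citation is what the paper does, but then this part has not been proved.

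The missing idea is that no cut ever has to be decided. You \emph{run} the left premise, exactly as in your tree-like procedure, and let $\iter$ pay for the repeated re-entries.
\begin{enumerate}
\item Take as nodes the stacks of that procedure executed on the dag. A frame records a line, an assignment, antecedent witnesses, a phase, and the witnesses returned by premises already finished. The depth is at most the number of lines, since premises precede conclusions.
\item Call a stack valid if its root frame is the end sequent under $\sigma$, every stored witness checks, and each child's input is computed correctly from its parent's frame. This is a $\Sigma^B_0(\VPV)$ property.
\item Put the phases, frame by frame, in the most significant bits, so that lexicographic order agrees with position in the depth-first traversal of the unfolded proof. Pushing a frame, or popping one while advancing its parent's phase, then strictly increases the string.
\item Use the one-frame stack for the end sequent under $\sigma$ as the start node, and map invalid and final stacks to themselves.
\end{enumerate}
The local-soundness case analysis of your second part, carried out in $\V^1(\VPV)$, shows that every valid non-final stack has a valid and strictly larger successor. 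Hence every solution of this $\iter$ instance yields a final stack whose root frame holds a checked witness for the end sequent. The $\TV^1$-provable totality of $\iter$ then gives $\Sigma^q_1$-$\refl_{G_1}$.

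Alternatively, your naive induction on $\psi(i)$ is a legitimate $\Pi^B_2$ number induction in $\V^2$. Since $\Sigma^q_1$-$\refl_{G_1}$ is a $\forall\Sigma^B_1$ sentence, the $\forall\Sigma^B_2$-conservativity of $\V^2$ over $\TV^1$ transfers it. This conservativity is the two-sorted form of Buss's theorem that $S^2_2$ is $\forall\Sigma^b_2$-conservative over $T^1_2$.
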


\begin{theorem}[Informal, \cite{CN2010}]\label{thm:back}
    Suppose that a theory $\T$, which contains $\V^1$, proves the reflection principle for propositional proof system $\mathcal{P}$. Then for a formula $F$, $P\vdash F \rightarrow \T \vdash SAT(F,X)$.
\end{theorem}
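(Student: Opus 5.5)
The plan is a direct instantiation argument: reflection in $\T$ will be applied to a \emph{concrete} $P$-proof, and we discharge the hypothesis $\mathrm{Proof}_P$ by showing that $\V^1$ already verifies true instances of it. Throughout, fix a formula $F$ and a $P$-proof $\Pi$ of $F$, so that $V_P(F,\Pi)=1$ in the standard model. We will show
\[\T \vdash \forall X\, \mathrm{SAT}(\underline F, X),\]
where $\underline F$ and $\underline \Pi$ denote string terms naming the fixed strings.

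First, we represent $F$ and $\Pi$ inside the theory. Each fixed finite string $w$ is named by the string given by $\Sigma^B_0$-comprehension on the explicit (finite) disjunction $\bigvee_{j : w(j)=1} x = \underline j$, with length bound $|w|$. Its defining property (which bits are $1$) is provable already in $\V^0 \subseteq \V^1 \subseteq \T$. This gives us $\underline F$ and $\underline\Pi$.

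Second, and this is the key step, we show $\V^1 \vdash \mathrm{Proof}_P(\underline F,\underline \Pi)$. The predicate $\mathrm{Proof}_P$ is a $\VPV$-predicate, namely a polynomial-time verifier, so it is $\Delta^B_1$-definable in $\V^1$. In particular, $\V^1$ proves it equivalent to a $\Sigma^B_1$ formula $\exists W\, \theta(\underline F,\underline\Pi,W)$ with $\theta\in\Sigma^B_0$, where $W$ encodes the computation history of $V_P$. Since this sentence is true in the standard model, we pick the actual computation history $w$ and name it by $\underline w$ as before. The closed $\Sigma^B_0$ sentence $\theta(\underline F,\underline\Pi,\underline w)$ is true, and true closed $\Sigma^B_0$ sentences are provable in $\V^0$: one unwinds the bounded number quantifiers into finitely many atomic instances, each decided by 2-BASIC and the defining axioms of the named strings. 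Hence $\V^1$, and therefore $\T$, proves $\mathrm{Proof}_P(\underline F,\underline\Pi)$. This is just $\Sigma^B_1$-completeness of $\V^0$, and I expect it to be the only step needing care. The care lies in checking that the $\VPV$-symbol form of $\mathrm{Proof}_P$ (as opposed to the $\Sigma^B_1$ definition) is provably equivalent to the $\Sigma^B_1$ form. This follows because $\V^1(\VPV)$ is conservative over $\V^1$ and proves the Cobham defining axioms.

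Third, since $\T$ proves $\refl_P$, that is, $\forall F,\Pi,X\,(\mathrm{Proof}_P(F,\Pi)\supset \mathrm{SAT}(F,X))$, we instantiate $F:=\underline F$ and $\Pi:=\underline\Pi$. Modus ponens with the previous step then yields $\T\vdash \forall X\,\mathrm{SAT}(\underline F,X)$, as required. If one additionally wants the conclusion in terms of the formula $F$ itself rather than the satisfaction predicate applied to its code, a final routine induction on the (fixed) structure of $F$ inside $\V^1$ will show that $\mathrm{SAT}(\underline F,X)$ is equivalent to the natural $\Sigma^B_0$ rendering of $F$ with variables read from $X$. The same argument goes through verbatim for $\Sigma^q_1$-reflection, using $\mathrm{SAT}_{\Sigma^q_1}$ in place of $\mathrm{SAT}$.
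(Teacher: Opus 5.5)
Your proposal is correct and is the standard argument. The paper gives no proof of this cited theorem, but it uses the same reasoning as two ``well-known facts'' in the proof of \cref{lem:wit_total}: for a fixed valid $P$-proof $\Pi$ of $F$, $\mathsf{V}^1$ proves the true instance $\mathrm{Proof}_P(F,\Pi)$ by $\Sigma^B_1$-completeness, and instantiating the reflection principle in $T$ then yields $\mathrm{SAT}(F,X)$; your extra care in matching the $\mathsf{VPV}$-symbol form of $\mathrm{Proof}_P$ to its $\Sigma^B_1$ definition is a detail the paper leaves implicit.
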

\begin{corollary}
    If EF $\vdash F$, then $\V^1 \vdash \text{SAT}(F,X)$. Similarly for $\TV^1$, if $G_1\vdash F$, then $\TV^1 \vdash \text{SAT}(F,X)$.
\end{corollary}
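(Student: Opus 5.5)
The plan is to derive the corollary directly by instantiating \cref{thm:back} with the two reflection facts from the preceding theorem, taking $\T = \V^1$ and $\mathcal{P} = EF$ for the first claim and $\T = \TV^1$ and $\mathcal{P} = G_1$ for the second. It is still worth seeing how the proof of \cref{thm:back} runs in this concrete case, since the informal statement hides one step that needs checking.

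Fix a formula $F$ and an $EF$-proof $\Pi$ of $F$. Both $F$ and $\Pi$ are concrete finite strings, so we may code them by closed string terms $\underline F$ and $\underline \Pi$. This can be done, for instance, by $\Sigma^B_0$-comprehension over the explicit bit pattern, or as values of $\VPV$ functions applied to numerals. The first step is to show $\V^1 \vdash \mathsf{Proof}_{EF}(\underline F, \underline \Pi)$. The predicate $\mathsf{Proof}_{EF}$ is a $\VPV$ predicate, equivalently $\Delta^B_1$-definable in $\V^1$, and its instance on closed arguments is true in the standard model. By $\Sigma^B_1$-completeness of $\V^1$, the $\Sigma^B_1$ definition of $\mathsf{Proof}_{EF}$ is therefore provable on these arguments. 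Alternatively, in $\V^1(\VPV)$, the Cobham defining axioms let us evaluate the characteristic function on closed inputs step by step, and conservativity then takes us back to $\V^1$.

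The second step uses the theorem just stated (Chapter X of \cite{CN2010}): $\V^1$ proves $\forall F, \Pi, X\, (\mathsf{Proof}_{EF}(F,\Pi) \supset \mathrm{SAT}(F,X))$. Instantiating this with $\underline F$ and $\underline \Pi$ and applying modus ponens with the first step gives $\V^1 \vdash \mathrm{SAT}(\underline F, X)$ with $X$ free, which is the first claim.

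The $\TV^1$/$G_1$ case is identical. Since $\TV^1 \supseteq \V^1$, the provability of $\mathsf{Proof}_{G_1}(\underline F, \underline \Pi)$ carries over, and we replace the $EF$-reflection principle by $G_1$-reflection, which $\TV^1$ proves. For $\Sigma^q_1$-formulas $F$ we use the $\Sigma^q_1$-reflection version, so the conclusion reads $\TV^1 \vdash \mathrm{SAT}_{\Sigma^q_1}(\underline F, X)$.

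The main obstacle is the first step, namely that a true closed instance of the proof predicate is provable in the theory. I would handle it through $\Sigma^B_1$-completeness applied to the $\Sigma^B_1$ definition of the verifier's computation. The remaining steps are purely logical.
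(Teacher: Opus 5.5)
Your proposal is correct and follows the paper's route: the corollary is obtained by instantiating \cref{thm:back} with the preceding facts that $\V^1$ proves the reflection principle for $EF$ and $\TV^1$ proves the ($\Sigma^q_1$-)reflection principle for $G_1$. Your unpacking of \cref{thm:back} is the standard argument behind that theorem: you prove the true closed instance $\mathsf{Proof}_P(\underline F,\underline \Pi)$ by $\Sigma^B_1$-completeness and then apply modus ponens to the reflection principle. It also matches the fact the paper invokes in the proof of \cref{lem:wit_total}, namely that $\V^1 \vdash \mathsf{Proof}_{G_1}(F,\Pi)$ for a fixed valid pair.
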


Theorem \ref{thm:back} establishes what is called a ``back-and-forth" relationship between bounded arithmetic theories and propositional proof systems. On the one hand, if $\V^1$ proves a $\forall\Sigma^B_1$-sentence $\forall X \exists Y < t.\,\varphi(X,Y)$, then by known propositional translations (Theorem \ref{thm:prop_trans_ef}) , $EF \vdash \|\forall X\exists Y<t.\, \varphi(X,Y)\|_n$. Moreover, this translation is \textit{provable} in $\V^1$. 

One the other hand, by Theorem \ref{thm:back}, if Extended Frege proves formula $F$, then $\V^1$ proves $F$ is a tautology. These two directions establish that we may go back-and-forth between the first-order setting and propositional proof systems for any pair $(\mathsf{T}, P)$ with the above relationships. In particular, we will need this between $\TV^1$ and $G_1$, as well as $\V^1(\VPV)$ and Extended Frege. For a complete picture of this relationship, see [Chapter X, \cite{CN2010}].

\end{document}